\newcommand*{\mailto}[1]{\href{mailto:#1}{\nolinkurl{#1}}}
\newtheorem{theorem}{Theorem}[section]
\newtheorem{definition}[theorem]{Definition}
\newtheorem{lemma}[theorem]{Lemma}
\newtheorem{example}[theorem]{Example}
\newtheorem{proposition}[theorem]{Proposition}
\newtheorem{corollary}[theorem]{Corollary}
\newtheorem{remark}[theorem]{Remark}
\newtheorem{remarks}[theorem]{Remarks}
\newcommand{\fr}{\frac}
\newcommand{\R}{{\mathbb R}}
\newcommand{\N}{{\mathbb N}}
\newcommand{\Z}{{\mathbb Z}}
\newcommand{\Co}{{\mathbb C}}
\newcommand{\toEF}{\stackrel{\cE_F}{\mathop{\longrightarrow}}}
\newcommand{\toEFN}{\stackrel{\cE_F^N}{\mathop{\longrightarrow}}}
\newcommand{\cT}{{\mathcal T}}
\newcommand{\cA}{{\cal A}}
\newcommand{\bA}{{\bf A}}
\newcommand{\cE}{{\cal E}}
\newcommand{\cH}{{\cal H}}
\def\o{\mathaccent"7017}
\newcommand\Ho{\o{H}}
\newcommand{\cF}{{\cal F}}
\newcommand{\cO}{{\cal O}}
\newcommand{\cP}{{\cal P}}
\newcommand{\cS}{{\cal S}}
\newcommand{\cX}{{\cal X}}
\newcommand{\cZ}{{\cal Z}}
\newcommand{\al}{\alpha}
\newcommand{\om}{\omega}
\newcommand{\vp}{\varphi}
\newcommand{\si}{\sigma}
\newcommand{\De}{\Delta}
\newcommand{\de}{\delta}
\newcommand{\ga}{\gamma}
\newcommand{\ve}{\varepsilon}
\newcommand{\lam}{\lambda}
\newcommand{\Si}{\Sigma}
\newcommand{\ti}{\tilde}
\newcommand{\na}{\nabla}
\newcommand{\pa}{\partial}
\newcommand{\rot}{{\rm rot\5}}
\newcommand{\dv}{{\rm div\5}}
\newcommand{\const}{{\rm const}}
\newcommand{\rRe}{{\rm Re\5\5}}
\newcommand{\rIm}{{\rm Im\5\5}}
\newcommand{\supp}{{\rm supp\5\5}}
\newcommand{\ext}{{\rm ext}}
\newcommand{\ov}{\overline}
\newcommand{\5}{{\hspace{0.5mm}}}
\newcommand{\ds}{\displaystyle}
\date{}
\numberwithin{equation}{section}
\newcommand{\ci}{\cite}
\newcommand{\la}{\label}
\newcommand{\Norm}[1]{\left\Vert #1 \right\Vert}
\newcommand{\norm}[1]{\Vert #1 \Vert}
\newcommand{\lr}{\longrightarrow}
\newcommand{\st}{\stackrel}
\newcommand{\tocEF}{\st{{\cal E}_F}\lr}
\newcommand{\be}{\begin{equation}}
 \newcommand{\ee}{\end{equation}}
 \newcommand{\beqn}{\begin{eqnarray}}
 \newcommand{\eeqn}{\end{eqnarray}}
\newcommand{\ba}{\begin{array}}
 \newcommand{\ea}{\end{array}}
\newcommand{\bd}{\begin{definition}}
 \newcommand{\ed}{\end{definition}}
\newcommand{\bt}{\begin{theorem}}
 \newcommand{\et}{\end{theorem}}
\newcommand{\bp}{\begin{proposition}}
 \newcommand{\ep}{\end{proposition}}
\newcommand{\bl}{\begin{lemma}}
 \newcommand{\el}{\end{lemma}}
\newcommand{\bc}{\begin{corollary}}
 \newcommand{\ec}{\end{corollary}}
\newcommand{\bex}{\begin{example}}
 \newcommand{\eex}{\end{example}}
\newcommand{\bexs}{\begin{examples}}
 \newcommand{\eexs}{\end{examples}}
\newcommand{\bexe}{\begin{exercice}}
 \newcommand{\eexe}{\end{exercice}}
\newcommand{\br}{\begin{remark} }
 \newcommand{\er}{\end{remark}}
\newcommand{\Verts}{\begin{remarks}}
 \newcommand{\ers}{\end{remarks}}
\newcommand{\bce}{\begin{center}}
\newcommand{\ece}{\end{center}}
\date{}
\numberwithin{equation}{section}
\begin{document}

\bce
{\huge\bf   Attractors   of  Hamilton nonlinear 
 \bigskip
 
partial differential equations}
 \bigskip \bigskip

 {\Large A.I. Komech} \footnote{
 Supported partly by Austrian Science Fund (FWF) P28152-N35
 }
 \medskip
 \\
{\it
  \centerline{Institute for Information Transmission Problems RAS}
     }
 
  \centerline{akomech@iitp.ru}
\medskip

 {\Large E.A. Kopylova } \footnote{
 Supported partly  by grant of RFBRa 18-01-00524}
 \medskip
 \\
{\it
\centerline{Institute for Information Transmission Problems RAS}
  }
  \centerline{ek@iitp.ru}

 \bigskip\bigskip
\bigskip

\centerline{{\it To the memory of Mark Vishik}}
\bigskip

\ece
\bigskip

\begin{abstract}
We survey the theory of attractors of nonlinear Hamiltonian partial differential  equations since its 
appearance in 1990. These are  results  on global attraction to stationary states, to solitons and to stationary orbits,  on adiabatic effective dynamics of solitons and their asymptotic stability.  Results of numerical simulation are given.

The obtained results allow us to formulate a new general  conjecture  on attractors of $G$ -invariant nonlinear Hamiltonian partial differential equations.
 
This conjecture   suggests a novel  dynamical interpretation  of basic quantum phenomena: Bohr's transitions between quantum stationary states, de Broglie's wave-particle duality and 
Born's probabilistic interpretation.

 \end{abstract}
 
  \bigskip \bigskip

{\it Key words}: Hamilton equations; nonlinear partial differential equations;
wave equation; Maxwell  equations; Klein – Gordon equation; principle of limiting amplitude; principle of limiting absorption;
attractor; steady states; soliton; stationary orbits;
adiabatic effective dynamics;
symmetry group; Lee group;
 Schr\"odinger equation; quantum transitions; wave-particle duality.
\bigskip
\bigskip
\bigskip
\tableofcontents

\section{Introduction} 

This paper is a survey of the results on long time behaviour and attractors for nonlinear Hamilton
partial differential equations that appeared since 1990.

Theory of attractors for nonlinear PDEs originated from the seminal paper of Landau \cite{L1944} published in 1944, where he
suggested the first mathematical interpretation of the onset of turbulence as the growth of the dimension of attractors of the Navier--Stokes equations 
when the Reynolds number increases.

The foundation for  corresponding mathematical theory was laid in 1951 by Hopf who established for the first time the existence of
global solutions to the 3D Navier--Stokes equations \cite{Hopf1951}. He introduced the ``method of compactness'' which is a nonlinear version
of the Faedo-Galerkin approximations. This method relies on a priori estimates and Sobolev embedding theorems. It has strongly influenced
the development of the theory of nonlinear PDEs, see \cite{Lions1969}.

Modern development of the theory of attractors for general \textit{dissipative systems}, i.e. systems with friction 
(the Navier--\allowbreak Stokes equations,  nonlinear parabolic equations, reaction-diffusion equations, wave equations with friction, etc.), 
as originated in the 1975--1985's in the works of Foias, Hale, Henry, Temam, and others \cite{FMRT2001, H1988, H1981},
was developed further in the works of Vishik, Babin, Chepyzhov, and others \cite{BV1992, CV2002}.
A typical result of this theory in the absence of external excitation is  global convergence to stationary states:
for any finite energy solution to  dissipative {\it autonomous} equation in a~region $\Omega\subset\mathbb R^n$, there is a convergence
\begin{equation}\label{at11}
	\psi(x,t) \to S(x), \qquad t\to + \infty.
\end{equation}
Here $S(x)$ is a stationary  solution with suitable boundary conditions, and this convergence holds as a~rule in the $L^2(\Omega)$-metric.
In particular, the relaxation to an equilibrium regime in chemical reactions is due to the energy dissipation.
\medskip

A development of a similar theory for  \textit{Hamiltonian PDEs} seemed unmotivated and impossible in view of energy conservation 
and time reversal for these equations. However, as it turned out, such a~theory is possible and its shape was suggested
by a~novel mathematical interpretation of fundamental postulates of quantum theory:
\medskip\\
I. Transitions between quantum stationary orbits (Bohr 1913).
\smallskip\\
II. Wave-particle duality (de Broglie 1924).
\smallskip\\
III. Probabilistic interpretation (Born 1927).
\smallskip\\
\noindent Namely, postulate I can be interpreted as  global attraction 
 (\ref {atU})
of all quantum trajectories to an attractor formed by stationary orbits
(see Appendix),
and postulate II can be interpreted as decay into solitons (\ref {attN}).
The probabilistic interpretation  also can be justified by the asymptotics 
(\ref {attN}). More details can be found in \ci{Kjumps2019}.
\smallskip

Investigations of the 1990--2019's suggest that such long time asymptotics of solutions are in fact typical 
for nonlinear Hamiltonian PDEs. 
These results are presented in this article. This theory differs significantly from the theory of attractors of dissipative systems
where the attraction to stationary states is due to an energy dissipation caused by a friction.
For Hamiltonian equations the friction and energy dissipation are absent, 
and the attraction is caused by radiation which irrevocably  brings the energy to infinity. 
\medskip

The modern development of the theory of nonlinear Hamiltonian equations dates back to J\"orgens~\cite{Jor1961},
who has established the existence of global solutions for nonlinear wave equations of the form
\begin{equation}\label{Jw}
	\ddot\psi(x,t) = \Delta \psi (x,t) + F(\psi (x,t)), \qquad x \in \mathbb R^n,
\end{equation}
developing the Hopf method of compactness.
The subsequent studies  in this direction were well reflected by J.-L. Lions \cite{Lions1969}.

First results on the long time asymptotics of solutions to nonlinear Hamiltonian PDEs were obtained by
Segal \cite{Segal1966, Segal1968}, Morawetz and Strauss \cite{Mor1968, 
MS1972,
St68}. In these papers
 {\it local energy decay} is proved for solutions to equations (\ref{Jw}) with {\it defocusing type} nonlinearities 
 $F(\psi) = -m^2\psi-\kappa | \psi |^p \psi$, where $m^2\ge 0$, $\kappa>0$, and $p>1$. 
 Namely, for sufficiently smooth and small initial states, one has
\begin{equation}\label{ledec}
	\int_{|x|<R} [|\dot\psi(x,t)|^2+|\nabla\psi(x,t)|^2+|\psi(x,t)|^2]dx\to 0, \qquad t\to\pm\infty
\end{equation}
for any finite $R>0$. Moreover, the corresponding nonlinear wave and  scattering operators are constructed.
In the works of Strauss \cite{St81-1, St81-2}, the completeness of scattering is established for small solutions to more general equations.
The decay (\ref{ledec}) means that the energy escapes each bounded region
for large times.
\medskip

For convenience,  characteristic properties of all finite energy solutions to an equation will be referred to as {\it global}, 
in order to distinguish them from the corresponding {\it local}
properties for solutions with initial data sufficiently close to an attractor.
\medskip

All the above-mentioned results  on  local energy decay (\ref{ledec}) mean that the
corresponding {\it local attractor} of small initial states consists of the zero point only.
First results on  {\it global attraction} for nonlinear Hamiltonian PDEs 
were obtained by one of the authors
in the 1991--1995's for 1D models \cite{K1991, K1995a, K1995b}, and were later extended to nD equations.
Let us note that global attraction to a~(proper) attractor is impossible for any finite-dimensional
Hamiltonian  system because of energy conservation.

Global attraction  for Hamiltonian PDEs is derived from an analysis of the irreversible energy radiation to infinity, 
which plays the role of the  dissipation. Such  analysis requires subtle methods of harmonic analysis: the Wiener Tauberian theorem,
the Titchmarsh convolution theorem, the theory of quasi-measures, the Paley-Wiener estimates, 
eigenfunction expansions for nonselfadjoint Hamiltonian operators based on M.G.~Krein theory of $J$-selfadjoint operators, and others.

The results obtained so far indicate a~certain dependence of long-time asymptotics of solutions on  symmetry group of an equation:
for example, it may be the trivial group $G = \{e\} $, or the unitary group $G = U(1) $, or the group of translations $G = \mathbb R^n$.
Namely, the  results suggest 
the conjecture
that for  ``generic'' nonlinear Hamilton {\it autonomous} PDEs with a Lie symmetry group $G$,
any finite energy solution admits the asymptotics
\begin{equation}\label{at10}
	\psi(x,t) \sim e^{g_\pm t} \psi_\pm (x) , \qquad t \to \pm \infty.
\end{equation}
Here, $e^{g_\pm t}$ is a representation of  one-parameter subgroup of the symmetry group $G$ which corresponds to the generators 
$g_\pm$ from the corresponding Lie algebra,
while $\psi_\pm(x) $ are some
``scattering states'' depending on the considered trajectory $\psi(x,t)$.
Both  pairs $(g_+, \psi_+)$ 
and $(g_-, \psi_-)$
are solutions to the corresponding nonlinear eigenfunction problem.
\medskip

 In the case of the  trivial symmetry group,  the conjecture (\ref{at10}) means  global attraction to the corresponding stationary states
\begin{equation}\label{ate}
  \psi(x,t) \to S_\pm(x) , \qquad t\to\pm\infty
\end{equation}
(see Fig.~\ref{fig-1}),
where  $S_\pm(x) $  depend on  considered trajectory $\psi(x,t)$,
and the convergence holds in local seminorms, i.e., in norms  of type $L^2(|x|<R)$ with any $R>0$.
The convergence (\ref{ate}) in global norms (i.e., corresponding to $R=\infty$) cannot hold due to the energy conservation.

In particular, the asymptotics (\ref{ate}) can be easily demonstrated for the d'Alembert equation, see (\ref{dal})-- (\ref{dal3}).
In this example the convergence (\ref{ate}) in global norms obviously fails due to the presence of travelling waves $f(x\pm t)$.
Similarly,   a solution  to  3D wave equation  with a unit propagation velocity 
is concentrated in spherical layers $ | t | -R <| x | <| t | + R $ if  initial data has a support in the ball $ | x | \le R $.
Therefore, the solution converges to zero when $ t \to \pm \infty $, although its energy remains constant. 
This convergence  corresponds to the well-known {\it  strong  Huygens principle}. 
Thus, attraction to stationary states (\ref {ate}) is a generalization of the strong Huygens principle to non-linear equations.
The difference is that for  linear wave equation the limit is always zero, while for non-linear equations
the limit can be any stationary solution.
\medskip

Further, in the case of symmetry group of translations $G = \mathbb R ^ n $ asymptotics (\ref {at10}) means global attraction
to solitons (traveling waves)
\begin {equation} \label {att}
  \psi (x, t) \sim \psi_ \pm (x-v_ \pm t), \qquad t \to \pm \infty,
\end {equation}
for {\it generic} translation-invariant equation. In this case the convergence holds in local seminorms
{\it in the comoving frame of reference}, that is, in $L ^ 2 (| x-v_ \pm t | < R) $ for any $ R> 0$.
The validity of such local asymptotics in comoving reference systems suggests that
there may be several such solitons, which provide the refined asymptotics
\be \la {attN}
  \psi (x, t) \sim \sum_k \psi_ \pm (x-v ^ k_ \pm t) + w_ \pm (x, t), \qquad t \to \pm \infty,
\ee
where $w_ \pm $ are  some dispersion waves, being  solutions to  corresponding free equation, and
convergence holds now  in some {\it global norm}.
A trivial example gives the d'Alembert equation (\ref {dal}) with solutions $\psi (x, t) = f (x-t) + g (x + t)$.

Asymptotics with several solitons (\ref{attN}) were discovered first in 1965 by Kruskal and Zabusky
in numerical simulations of the Korteweg--de Vries equation (KdV). Later on, global asymptotics of this type
were proved   for nonlinear {\bf   integrable} 
translation-invariant equations (KdV and others) by Ablowitz, Segur, Eckhaus, van Harten, and others 
using the method of {\it inverse scattering problem}
\cite{EvH}.
\smallskip

Finally, for the unitary symmetry group $G=U(1)$,  asymptotics (\ref{at10}) mean  global attraction to ``stationary orbits'' (or {\it ``solitary waves''})
\begin{equation}\label{atU}
  \psi(x,t)\sim\psi_\pm(x) e^{-i \omega_\pm t} , \qquad t \to \pm \infty
\end{equation}
in the same local seminorms (see Fig.~\ref{fig-3}).
These asymptotics were inspired by Bohr's postulate on transitions between quantum stationary states (see Appendix for details).
Our results confirm such asymptotics for generic $U(1)$-invariant nonlinear equations of type (\ref{KG1}) and (\ref{KGN})--(\ref{Dn}).
More precisely, we have proved  global attraction {\it to the manifold of all stationary orbits}, though
the attraction to a particular  stationary orbitы, with fixed $\omega_\pm$, is still open problem.
\smallskip

The existence of stationary orbits $\psi(x)e^{i\omega t}$ for a~broad class of  $U(1)$-invariant
nonlinear wave equations (\ref{Jw}) was extensively studied in the 1960--1980's. 
The most general results were obtained by Strauss, Berestycki and P.-L. Lions \cite{BL83-1, BL83-2, St77}.
Moreover, Esteban, Georgiev and S\'er\'e  constructed stationary orbits
for  nonlinear relativistically-invariant Maxwell--\allowbreak Dirac equations (\ref{DM}).
The orbital stability of stationary orbits has been studied by Grillakis, Shatah, Strauss and others \cite{GSS87, GSS90}.
\smallskip

Let us emphasize that we conjecture  asymptotics (\ref{atU}) for {\it generic} $U(1)$-invariant equations.
This means that  long time behavior 
of solutions
may be quite different for $U(1)$-invariant equations of ``positive codimension''. 
In particular, for solutions to linear Schr\"odinger equation
$$
  i\dot\psi(x,t)=-\Delta\psi(x,t)+V(x)\psi(x,t),\qquad x\in\R^n
$$  
the asymptotics (\ref{atU}) generally fail. Namely, any finite energy solution admits the spectral representation
$$
\psi(x,t)=\sum C_k\psi_k(x)e^{-i\omega_k t}+\int_0^\infty C(\omega)\psi(\omega,x)e^{-i\omega t}d\omega,
$$
where $\psi_k$ and $\psi(\omega,\cdot)$ are  corresponding eigenfunctions of  discrete and  continuous spectrum, respectively. 
The last integral is a dispersion wave which decays to zero in the 
norms $L^2(|x|< R)$ with any $R>0$ 
(under appropriate conditions on the potential $V(x)$).
Respectively, the attractor is the linear span of the eigenfunctions $\psi_k$. 
Thus, the long-time
asymptotics does not reduce to a single term like (\ref{atU}), so the linear case is degenerate in this sense.
Let us note that our results for equations (\ref{KG1}) and (\ref{KGN})--(\ref{Dn}) are established for {\it strictly nonlinear case}: 
see  condition (\ref{C4}) below, which eliminates linear equations.
\smallskip

For more sophisticated symmetry groups $G=U(N)$,  asymptotics (\ref{at10}) mean the attraction to $N$-frequency trajectories, which
can be quasi-periodic. In particular,
the symmetry groups $SU(2)$, $SU(3)$ and others were suggested in 1961 by Gell-Mann and Ne'eman 
for  strong interaction of baryons \cite{GM1962, Ne1962}.
The suggestion relies on  discovered parallelism between empirical data for the baryons, and the ``Dynkin scheme''
of Lie algebra $su(3)$ with $8$ generators (the famous ``eightfold way'').
This theory resulted in the scheme of quarks and in the development of the quantum chromodynamics \cite{AF1963,HM1984},
and in the prediction of a new baryon with prescribed values of its mass and decay products. 
This particle, the $\Omega^-$-hyperon, was promptly discovered experimentally \cite{omega-1964}.

This empirical correspondence between  Lie algebra generators and elementary particles presumably gives an
evidence in favor of the general conjecture (\ref{at10}) for equations with  Lie symmetry groups.
\smallskip

Let us note that our conjecture (\ref{at10}) specifies the concept of ``localized solution/coherent structures'' from ``Grande Conjecture''
and ``Petite Conjecture'' of Soffer \cite[p.460]{soffer2006} in the context of $G$-invariant equations.
The Grande Conjecture is proved in \cite{KM2009a} for 1D wave equation coupled to  nonlinear oscillator (\ref{w1m}).
Moreover, a suitable versions of the Grande Conjecture are also proved in \cite{IKS2004a, IKS2004b} for 3D wave, Klein--Gordon 
and Maxwell equations coupled to  relativistic particle with sufficiently small charge \eqref{rosm}; see Remark \ref{rGC}.
Finally, for any matrix symmetry group $G$, (\ref{at10}) implies the Petite Conjecture since the localized solutions 
$e^{g_\pm t}\psi_\pm(x)$ are quasiperiodic then.
\medskip

Below we dwell upon  available results on the asymptotics \eqref{ate}--\eqref{atU}.
In Sections \ref {s1} and  \ref {s2} we review  results on global attraction  to stationary states and  to solitons, respectively.
Section  \ref {s3} concerns adiabatic effective dynamics of solitons, and Section \ref {s4} concerns the mass-energy equivalence.
In  Section \ref {s5} we give a concise complete proof of the attraction to stationary orbits.
Sections \ref {s6} and \ref {s7} concern asymptotic stability of stationary orbits and solitons, and Section \ref {s8} -- various generalizations.
In  Section \ref {s9} we present results of numerical simulation of soliton asymptotics for relativistic-invariant equations.
In Appendix we comment on the relations between general conjecture (\ref{at10})
and Bohr's postulates in Quantum Mechanics.

\smallskip

In conclusion let us comment on  previous related surveys in this area. The survey \ci{K2000} presents the   results only for  1D equations.
The results on asymptotic stability of solitons were described  in detail in \ci{Im2013} for linear equations coupled to a particle,  
and in \ci{Kumn2013} -- for relativistic-invariant Ginzburg--Landau equations. 
In present article we  give only a short statement of these results (Sections \ref{s2.1}, \ref{s2.2} and \ref{s8}). 
Finally,  present survey gives much more information on our methods than \ci{K2016}. Our main novelties are as  follows: 
\smallskip

i) Streamlined and simplified  proofs of 
the results \ci{KSK1997,KS2000,KS1998}
on
global attraction  to stationary states and to solitons for systems of relativistic particle 
coupled  to scalar wave equation and to the Maxwell equation. These results give  the first rigorous justification of  famous  {\it radiation damping}
in Classical Electrodynamics. We omit  unessential  technical details, 
but explain carefully 
our approach which relies on the Wiener Tauberian Theorem
 in Sections \ref{sWP},
 \ref{sML} and \ref{sTIWP}.
\smallskip

ii) The complete proof of  nonlinear analog of the Kato theorem  on the absence of embedded eigenvalues 
(Section \ref{sKT})
which is a crucial point 
in the proof of global attraction  to stationary orbits for  $U(1)$-invariant equations in \ci{K2003,KK2006,KK2007,KK2010a,KK2008,KK2009,KK2010b,KK2013t,
K2017, K2018,KK2019b,KK2019,C2012,C2013}.
  \smallskip
  
  iii)  The informal arguments on the dispersion radiation and the nonlinear 
  spreading of spectrum 
  (Section \ref{Sec-4.8})
  which 
  mean the nonlinear energy transfer  from lower to higher harmonics
  and
  lie behind our application of the Titchmarsh
  Convolution theorem.
   \smallskip
 
 iv) Recent  results \ci{K2017,K2018,KK2019b,KK2019}
 on global attraction for nonlinear wave, Klein-Gordon and Dirac equations with concentrated nonlinearities. 
 We give a detailed survey of the methods and results in Section \ref{sCN}.
 \medskip\\
These methods and ideas  are presented here  for the first time in review literature.  
 
\section*{Acknowledgments}
The authors  express a deep gratitude to H.~Spohn and B.~Vainberg for long-time collaboration on attractors of Hamiltonian PDEs, 
as well as to A.~Shnirelman for many useful long-term discussions. We are also grateful to V.~Imaykin and A.\,A.~Komech
for collaboration lasting many years.

\setcounter{equation}{0}
\section {Global attraction to stationary states} \la {s1}
In this section we review  the results on  global attraction to stationary states \eqref {ate} that were received in 1991-1999 
for nonlinear Hamiltonian PDEs. The first results of this type were obtained for one-dimensional nonlinear wave equations 
\cite {K1991,K1995a,K1995b,K1999,K2000}. Later on these results were  extended to three-dimensional wave equations and Maxwell’s equations 
coupled to a charged relativistic particle \cite {KSK1997, KS2000},
and also to the three-dimensional wave equations with concentrated
nonlinearity. 
In \ci{D2012, T2010} the attraction \eqref {ate}
was established for finite systems of oscillators coupled to an infinite-dimensional thermostat. 
\smallskip

The global attraction \eqref {ate} can be easily demonstrated on trivial (but instructive) example of the d'Alembert equation
\be\la{dal}
   \ddot \psi (x, t) = \psi '' (x, t), \qquad x \in \R.
\ee
All derivatives here and below are understood in the sense of distributions.
This equation is formally equivalent to the Hamilton system
\be\la{hasys}
\dot \psi (t) = D_ \pi \cH, \, \, \dot \pi (t) = - D_ \psi \cH
\ee
with Hamiltonian
\be\la{hamdal}
\cH (\psi, \pi) = \frac12 \int [| \pi (x) | ^ 2 + | \psi '(x) | ^ 2] \, dx, \qquad (\psi, \pi) \in \cE_c: = H ^ 1_c (\R ) \oplus [L ^ 2 (\R )\cap L^1(\R)],
\ee
where $H ^ 1_c (\R )$ is the space of continuous functions  $\psi (x) $ with finite norm 
\be\la{Ec}
 \Vert \psi \Vert_ {H^1_c(\R)}:=\Vert \psi '\Vert_ {L ^ 2 ( \R )} + | \psi (0) |.
\ee
Let moreover,
\be\la{lim}
\psi(x) \to C_ \pm, \quad x \to \pm \infty.
\ee
For such initial data $ (\psi (x, 0), \dot \psi (x, 0)) = (\psi (x), \pi (x)) \in \cE_c $  the d'Alembert formula gives
\begin {equation} \label {dal2}
\psi (x, t)\to S_ \pm (x) =\frac {C_ + + C _-} 2 \pm \frac12 \int _ {- \infty} ^ {\infty} \pi (y) dy, \qquad t \to \pm \infty,
\end {equation}
where the convergence is uniform on every finite interval $ | x| < R $. Moreover,
\be\la{dal3}
\dot \psi (x, t) = \frac {\psi '(x + t) - \psi' (x-t)} 2+ \frac {\pi (x + t) + \pi (x-t)} 2
\to 0, \qquad t \to \pm \infty,
\ee
where convergence holds in $ L ^ 2 (-R, R) $ for each $ R> 0 $. Thus, the attractor is the set of 
states
$(\psi (x), \pi (x)) = (C, 0)$, where $C \in \R $ is any constant. Note that the limits (\ref {dal2})
for positive and negative times may be different.

\subsection{1D nonlinear wave equations}\la{s2.1}
In \cite {K1999}, global attraction to stationary states has been proved for
nonlinear wave equations
of type
\begin {equation} \label {neli}
\ddot \psi (x, t) = \psi '' (x, t) + \chi (x) F (\psi (x, t)), \qquad x \in \R,
\end {equation}
where
\begin {equation} \label {FU-}
\chi \in C_0 ^ \infty (\R), \quad \chi (x) \ge 0, \quad \chi (x) \not \equiv 0; \qquad
F (\psi) = - \nabla U (\psi), \quad \psi \in \R ^ N; \quad U (\psi) \in C ^ 2 (\R ^ N).
\end {equation}
The equation (\ref {neli}) can be formally written as Hamilton’s system
(\ref{hasys})
with Hamiltonian
$$
\cH (\psi, \pi) = \frac12 \int [| \pi (x) | ^ 2 + | \psi '(x) | ^ 2 + \chi (x) U (\psi (x, t) )] \, dx, \qquad (\psi, \pi) \in \cE_c^N=\cE_c\otimes\R^N.
$$
We assume that the potential is confining, i.e.
\begin {equation} \label {conf}
U (\psi) \to\infty, \qquad | \psi | \to \infty.
\end {equation}
In this case, it is easy to prove that a finite energy solution  $Y (t) = (\psi (t), \pi (t)) \in C (\R, \cE_c^N)$
exists and is unique for any initial state $ Y (0) \in \cE_c^N $, and the energy is conserved:
\be \la {hama}
\cH (Y (t)) = \const, \qquad t \in \R.
\ee
\bd
i) $\cE_F^N$ denote the space $\cE_c^N$ endowed with the seminorms
\begin {equation} \label {cER}
\Vert (\psi, \pi) \Vert _ {\cE_c^N, R} = \Vert \psi '\Vert_R + | \psi (0) | + \Vert \pi \Vert_R, \qquad R=1,2,\dots,
\end {equation}
where $ \Vert \cdot \Vert_R $ denotes the norm in $L ^ 2_R: = L ^ 2 (- R, R)$.
\smallskip\\
ii) 
The convergence in $\cE_F^N$ is equivalent to the convergence in every seminorm (\ref{cER}). 

\ed
The space $\cE_F^N$ is not complete, and the convergence in $\cE_F^N$ is equivalent to the convergence in the metric
\begin {equation} \label {metr}
{\rm dist} [Y_1, Y_2] =
\sum_1 ^ \infty 2 ^ {- R} \displaystyle \frac {\Vert Y_1-Y_2 \Vert _ {\cE_c^N, R}}
{1+ \Vert Y_1-Y_2 \Vert _ {\cE_c^N, R}}, \quad Y_1, Y_2 \in \cE_c^N.
\end {equation}
The main result of \cite {K1999} is the following theorem, which is illustrated by the Figure \ref {fig-1}.
Denote by $ \cS $ the set of stationary states $(\psi (x), 0) \in \cE_c^N$, where $\psi (x)$ is the solution of the stationary equation
$$
\psi '' (x) + \chi (x) F (\psi (x)) = 0, \qquad x \in \R.
$$
\begin {theorem} \label {t1}
{\rm i)}
Let conditions \eqref {FU-} and \eqref {conf} hold. Then any finite energy solution  $Y (t) = (\psi (t), \pi (t)) \in C (\R, \cE_c^N)$ attracts to  $\cS$:
\begin {equation} \label {Z}
Y (t) \toEFN \cS, \qquad t \to \pm \infty
\end {equation}
in the metric \eqref {metr}. It means that
\begin {equation} \label {conv}
{\rm dist} [Y (t), \cS]: =\inf_ {S \in \cS} {\rm dist} [Y (t), S]\to 0, \qquad t \to \pm \infty.
\end {equation}
{\rm ii)}
Suppose additionally that the function $ F (\psi) $ is real-analytic for $ \psi \in \R ^ N $. Then
$ \cS $ is a discrete subset of $ \cE_c^N $, and for any finite energy solution $ Y (t) = (\psi (t), \pi (t)) \in C (\R, \cE_c^N) $
\begin {equation} \label {Zd}
Y (t) \toEFN S_ \pm \in \cS, \qquad t \to \pm \infty.
\end {equation}
\end {theorem}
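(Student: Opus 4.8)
The plan is to combine the a priori energy bound with a compactness argument and then to identify the limiting trajectories as stationary states, exploiting the irreversible radiation of energy to infinity as the sole dissipative mechanism. First I would fix $R_0$ with $\supp\chi\subset[-R_0,R_0]$ and extract from \eqref{hama}, using that $U$ is confining \eqref{conf} and bounded below, the uniform bounds $\sup_t[\Norm{\psi'(t)}_{L^2}+\Norm{\pi(t)}_{L^2}]<\infty$ together with uniform control of $\psi(t)$ on $\supp\chi$ and of $\psi(0,t)$. In the exterior $|x|>R_0$ the field solves the free d'Alembert equation and is transported undistorted along characteristics; together with the finite energy and the equation this yields equicontinuity in $t$ in each local seminorm \eqref{cER}. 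Hence $\{Y(t)\}$ is precompact in $\cE_F^N$, and for any sequence $t_n\to+\infty$ a subsequence of the shifts $Y(t_n+\cdot)$ converges in $\cE_F^N$, uniformly on compact time intervals, to a global finite-energy solution $\gamma(t)$ --- an \emph{omega-limit trajectory}.

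The core of the argument is to show that every such $\gamma$ is a stationary state, and here I would invoke the radiation mechanism. In the exterior $x>R_0$ write $\psi(x,t)=f(x-t)+g(x+t)$; the outgoing (right-moving) component $f$ carries energy irretrievably to $+\infty$, and since the total energy is finite the total energy radiated through $x=R_0$ is finite as well, which forces the outgoing flux to tend to zero along the trajectory. The symmetric statement holds at $x=-R_0$. Meanwhile the \emph{incoming} field is fixed by the initial data, and the hypothesis $\pi\in L^1(\R)$ guarantees, exactly as in the d'Alembert computation \eqref{dal2}, that it converges to a constant. Passing to the limit, the omega-limit trajectory therefore neither radiates nor receives oscillatory radiation, so $\gamma$ is constant in $t$ in each exterior region $|x|>R_0$.

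The hard part --- the step I expect to be the main obstacle --- is to upgrade ``static in the exterior and non-radiating'' to ``globally stationary''. The point is that no nontrivial field can be trapped in the interval $[-R_0,R_0]$: any interior oscillation must, by finite propagation speed and the absence of a reflecting boundary, leak outward across $x=\pm R_0$ and thereby radiate, contradicting the vanishing of the outgoing flux for the global solution $\gamma$. Making this rigorous, via the d'Alembert representation on the backward light cone together with the matching of the constant exterior data, forces the velocity component of $\gamma$ to vanish identically, i.e. $\gamma\equiv(S,0)$ with $S\in\cS$. Since this holds for every omega-limit trajectory, the distance of $Y(t)$ to $\cS$ cannot fail to vanish along any sequence, which is precisely \eqref{Z}--\eqref{conv}.

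For part (ii) I would first analyze the stationary equation $\psi''+\chi F(\psi)=0$: outside $\supp\chi$ the solution is affine, and finite energy forces it to be constant, so a stationary state is determined by finitely many data (its value and derivative at $-R_0$ together with the two limiting constants), subject to analytic matching conditions inherited from the real-analyticity of $F$. This cuts $\cS$ out as a discrete subset of $\cE_c^N$. Finally, since the trajectory is precompact and continuous in $t$, its omega-limit set is connected; being contained in the discrete set $\cS$ it must reduce to a single point, which gives the convergence $Y(t)\toEFN S_\pm\in\cS$ asserted in \eqref{Zd}.
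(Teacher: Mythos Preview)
Your overall strategy---energy bounds, finite energy radiated through $x=\pm R_0$, omega-limit trajectories, and the connectedness argument for part~(ii)---matches the paper's. The genuine gap is exactly where you flag it: upgrading ``constant exterior and vanishing flux'' to ``globally stationary'' for the omega-limit trajectory. Your proposed mechanism (``interior oscillations must leak out'', via a d'Alembert representation on backward light cones) is the correct intuition but not a proof; in the nonlinear setting any such representation is implicit in $\psi$ through a Duhamel term, and you give no indication of how to close the argument. A priori one cannot rule out a nonlinear ``breather'' living on $[-R_0,R_0]$ with constant boundary traces.

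The paper's device is different and is the real content of the proof. It exploits the formal symmetry of \eqref{neli} under interchange of $x$ and $t$ (with a simultaneous sign flip of the potential $U$), and regards the equation on the strip $|x|\le a$, $t>0$ as a \emph{Cauchy problem in the spatial variable $x$}, with Cauchy data $(\psi(\pm a,\cdot),\psi'(\pm a,\cdot))$ prescribed on the lateral sides $x=\pm a$. The dissipation integral gives subsequential convergence of these boundary traces to constants; continuous dependence for the $x$-evolution then forces the interior to converge to the stationary solution determined by that constant data---which is exactly the statement that every omega-limit trajectory is a stationary state. The catch is that the $x$-dynamics has potential $-U$, unbounded below by \eqref{conf}, so global well-posedness in $x$ fails in general (e.g.\ $U(\psi)=\psi^4$ leads to $\psi''=4\psi^3$, which blows up); but the a priori bounds coming from the original energy \eqref{hama} in $t$ prevent blow-up on the finite interval $[-a,a]$, so local well-posedness suffices. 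This trick also sidesteps the precompactness of the full trajectory $\{Y(t)\}$ in $\cE_F^N$, which you assert but do not justify (boundedness of $\pi(t)$ in $L^2$ alone does not give compactness of $\pi(t)$ in $L^2_{\rm loc}$): one works directly with the scalar boundary traces, whose omega-compactness is immediate from the dissipation integral. Without the $x\leftrightarrow t$ device or a genuine substitute, the step you yourself identify as the main obstacle remains open.
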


\begin{figure}[htbp]
\begin{center}
\includegraphics[width=0.7\columnwidth]{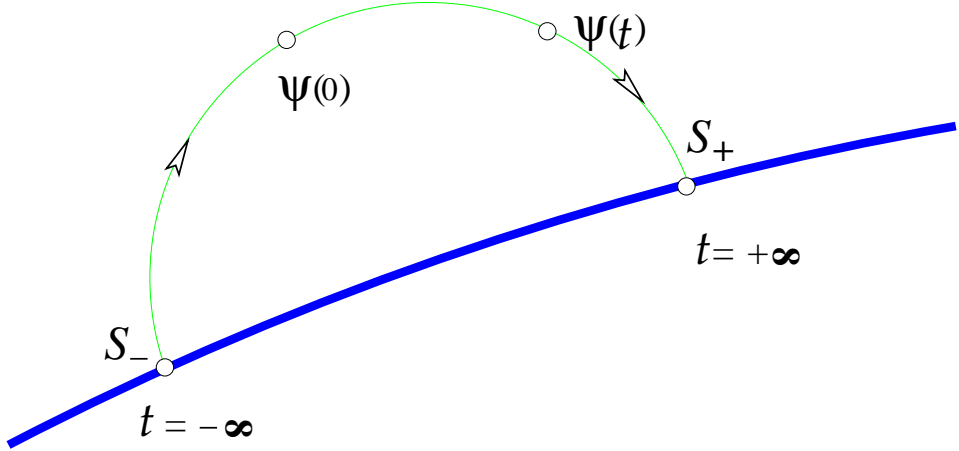}
\caption{Convergence to stationary states}
\label{fig-1}
\end{center}
\end{figure}

\vspace{20mm}
\noindent {\bf Sketch of the proof.}
It suffices to consider only the case $ t \to \infty $. Our  proof of (\ref {Z}) and (\ref {Zd}) in \ci{K1999}
relies on new method of {\it omega-limit trajectories}  which is a development of the method of omega-limit points used in  \cite {K1995b} .
Later on this method played an essential role in the theory of global attractors for $U(1)$-invariant  PDEs \ci{K2003,KK2006,KK2007,KK2010a,KK2008,KK2009,KK2010b,KK2013t,
K2017, KK2019b,KK2019,C2012,C2013}.

First, we note that the finiteness of energy radiated from the segment $ [- a, a] \supset \supp \chi $
implies the finiteness of the ``dissipation integral'' \cite [(6.3)] {K1999}:
$$
\int_0^\infty [| \dot \psi (-a, t) | ^ 2 + | \psi '(- a, t) | ^ 2 + | \dot \psi (a, t) | ^ 2 + | \psi' (a, t) | ^ 2] dt <\infty.
$$
This means, roughly, that
\begin {equation} \label {idis2}
\psi (\pm a, t) \sim C_ \pm, \qquad \psi '(\pm a, t) \sim 0, \qquad t \to \infty.
\end {equation}
More precisely, the functions $ \psi (\pm a, t) $ and $ \psi '(\pm a, t) $ are slowly varying for large times, so
their shifts form {\it omega-compact families}. 
Namely, from an arbitrary sequence $ s_k \to \infty $, one can choose a subsequence $ s_ {k '} \to \infty $ for which
\be\la{bav}
\psi (\pm a, t + s_ {k '}) \to C_ \pm, \qquad k' \to \infty,
\ee
where the constants $C_\pm$ depend on the subsequence, and the convergence holds in $C [0, T]$ {\it for any} $ T> 0 $.
It remains to prove that 
\be\la{ins}
\psi (x, t + s_ {k '}) \to S _ + (x)\in \cS, \qquad k' \to \infty,
\ee
in $C(0,T;H^1[-a,a])$  for any $ T> 0 $. In other words, {\it each omega limit trajectory} is a stationary state.

Roughly speaking we need to justify the correctness of the boundary value problem for a nonlinear differential equation (\ref {neli})
in the half-strip $ -a \le x \le a $, $ t> 0 $, with the Cauchy boundary conditions (\ref {idis2}) on the sides $ x = \pm a $. 
Then the convergence (\ref{bav})  of  boundary values  implies the convergence  (\ref{ins}) of the solution inside
the strip.

Our main idea is to use evident symmetry of  wave equation with respect to interchange of variables
$ x $ and $ t $ with a simultaneous change of the sign of the potential $ U $. However, in this equation with the ``time''  $x $
the condition (\ref {conf}) makes new potential $ -U $ unbounded from below! Consequently,
this dynamics with $ x $ as the time variable is not correct on the interval $ | x | \le a $.
For example, in the case  $ U(\psi) = \psi ^ 4 $,  the equation \eqref{neli} for  solutions of type $\psi(x,t)=\psi(x)$ reads
$ \psi '' (x) -4\psi^3(x) = 0 $. Solutions of this ordinary equation with finite
Cauchy's initial data at $ x = -a $ can become infinite at any point  $ x \in (-a, a) $. 
However,  in our situation a  local correctness  is sufficient due to \textit {a priori bounds},
which follow from energy conservation (\ref {hama}) by the conditions (\ref {FU-}) and (\ref {conf}).

\br
{\rm
i) The energy of the limit states $ S_ \pm $ may be less than the conserved energy of the corresponding solution.
This limit jump of energy is similar to the well-known property of the norm for {\it weak convergence} of a sequence in the Hilbert space.
\smallskip \\
ii) The discreteness of the set $ \cS $ is essential for the asymptotics (\ref {Zd}).
For example, convergence (\ref {Zd}) fails for the solution $ \psi (x, t) = \sin [\log (| x-t | +2)] $ in the case when $ F (\psi) = 0 $ for $ | \psi | \le 1 $.
}
\er

\subsection {String coupled to nonlinear oscillators}\la{s2.2}
{\bf I.} First results on global attraction to stationary states (\ref {Z}) and (\ref {Zd}) were established in \cite {K1991, K1995a, KM2009a}
for the case of point nonlinearity (``Lamb system''):
\begin {equation} \label {w1m}
(1 + m \delta (x)) \ddot \psi (x, t) = \psi '' (x, t) + \delta (x) F (\psi (0, t)), \qquad x \in \mathbb R.
\end {equation}
This equation describes transversal oscillations of a  string with vector displaysments
$ \psi (t) \in \R ^ N $ coupled to an oscillator attached at $ x = 0 $, and
acting on the string with a force $ F (\psi (0, t)) $ orthogonal to the string;
$ m> 0 $ is the mass of a particle attached to the string at the point $ x = 0 $.
For  linear force function $ F (\psi) = - k \psi $, such  system was first considered by H. Lamb \cite {Lamb1900}.

The conserved energy reads
\be\la{hamdal2}
\cH (\psi, \pi,p) = \frac12 \int [| \pi (x) | ^ 2 + | \psi '(x) | ^ 2]  dx+\fr{m p^2}2+U(\psi(0)).
\ee
We denote $ Z: = \{z \in \mathbb R ^ N:  F (z) = 0 \} $. 
Obviously, every 
finite energy
stationary solution of the equation (\ref {w1m}) is a constant function $ \psi_z (x) = z \in  Z $. Let us denote by $ \cS$ the  manifold of all finite energy stationary states,
$$
\cS: = \{S_z = (\psi_z, 0): z \in Z \}.
$$
This set  is discrete in $ \cE_c $, if $ Z $ is discrete in $ \mathbb R ^ N $.
Now the proof of attractions (\ref {Z}) and (\ref {Zd}) relies on the  reduced equation for the oscillator
$$
m \ddot y (t) = F (y (t)) - 2 \dot y (t) + 2 \dot w _ {\rm in} (t), \qquad t> 0,
$$
where $ \dot w _ {\rm in} \in L ^ 2 (0, \infty) $. This equation follows  from the d'Alembert representation for the solution $\psi(x,t)$ at $x>0$ and $x<0$.

In \cite {KM2009a}  a stronger asymptotics in  the global norm of the  Hilbert space $ \cE_c $ are obtained  instead of asymptotics (\ref {Zd}) 
in local seminorms. This is achieved by identifying  the corresponding d'Alembert outgoing and incoming waves. 
In \cite {KM2009b, KM2013} the   asymptotic completeness of the corresponding nonlinear scattering operators  has been proved.
\smallskip \\
{\bf II.}
In \cite {K1995b} we have extended the results \cite {K1991, K1995a} on global attraction to stationary states 
to the case of a string with several oscillators:
$$
\ddot \psi (x, t) = \psi '' (x, t) + \sum_1 ^ M \delta (x-x_k) F_k (\psi (x_k, t)).
$$
This equation reduces to a system of $ M $ ordinary equations with delay. Its study required new approach
relying on a special analysis of  {\it omega-limit points} of trajectories.
\medskip \\
Note that detailed proofs of all results \cite {K1991,K1995a,K1995b,K1999}  are available in the survey \cite {K2000}.

\subsection {Wave-particle system}\la{sWP}
In \cite {KSK1997}, the first result on global attraction  
to stationary states \eqref {ate} is obtained  for three-dimensional
real scalar wave field coupled to a relativistic particle.
The scalar field satisfies  3D wave equation
\begin {equation} \label {w3}
\ddot \psi (x, t) = \Delta \psi (x, t) - \rho (x-q (t)), \qquad x \in \mathbb R ^ 3,
\end {equation}
where $ \rho \in C ^ \infty_0 (\mathbb R ^ 3) $ is a fixed function representing the charge density of the particle,
and $ q (t) \in \mathbb R ^ 3 $ is the particle position. The particle motion obeys the 
Hamilton equation with relativistic kinetic energy $ \sqrt {1 + p ^ 2} $:
\begin {equation} \label {q3}
\dot q (t) = \frac {p (t)} {\sqrt {1 + p ^ 2 (t)}}, \qquad \dot p (t) = - \nabla V (q (t)) - \int \nabla \psi (x, t) \rho (x-q (t)) \, dx,\quad p^2:=p\cdot p.
\end {equation}
Here $ - \nabla V (q) $ is  external force corresponding to  real potential $ V (q) $, and the integral term is a self-force.
Thus,  wave function $ \psi $ is generated by  charged particle, and 
plays the role of a potential acting on the particle, along with the external potential $ V (q) $.

The system \eqref {w3}--\eqref {q3} can formally be represented in Hamiltonian form
\begin {equation} \label {HAM}
\dot \psi = D_ \pi \cH, \quad \dot \pi = -D_ \psi \cH, \quad
\dot q (t) = D_p \cH, \quad \dot p = -D_q \cH
\end {equation}
with Hamiltonian (energy)
\begin {equation} \label {Ham}
\cH (\psi, \pi, q, p) = \displaystyle \frac12 \! \int [| \pi (x) | ^ 2 + | \nabla \psi (x) | ^ 2] \, dx +  \int \psi (x) \rho (x -  q) \, dx +
\sqrt {1 +  p ^ 2} +  V (q).
\end {equation}

By $ \Vert \cdot \Vert $ we denote the norm in the Hilbert space $ L ^ 2: = L ^ 2 (\R ^ 3) $,
and $ \Vert \cdot \Vert_R $ denotes the norm in $ L ^ 2 (B_R) $, where $ B_R $ being the ball
$ | x | \le R $. Let  $\Ho^1:=\Ho^1(\R^3)$ be the completion of the  space $C_0^\infty(\R^3)$ in the norm $\Vert\nabla\psi(x)\Vert$.
\begin {definition} \la {dcE}
i) $ \cE: = \Ho^ 1  \oplus L ^ 2 \oplus \mathbb R ^ 3 \oplus \mathbb R ^ 3 $
 is the Hilbert phase space of tetrads $ (\psi, \pi, q, p) $ with finite norm
$$
\Vert (\psi, \pi, q, p) \Vert _ {\cE} = \Vert \nabla \psi \Vert + \Vert \pi \Vert + | q | + | p |.
$$
ii) $ \cE_ \si $ for $ \si \in \R $ is the space of $ Y = (\psi, \pi, q, p) \in \cE $ with $ \psi \in C ^ 2 (\R ^ 3) $ 
and $ \pi \in C ^ 1 (\R ^ 3) $ satisfying the estimate
\be \la {WP8}
| \nabla \psi (x) | + | \pi (x) | + | x | (| \nabla \nabla \psi (x) | + | \nabla \pi (x) |)
= {\cal O} (| x | ^ {- \sigma}), \quad | x | \to \infty.
\ee
iii) $ \cE_F $ is the space $ \cE $ with metric of type (\ref {metr}), where the corresponding seminorms are defined as
\begin {equation} \label {cER3}
\Vert (\psi, \pi, q, p) \Vert _ {\cE,R} = \Vert \nabla \psi \Vert_R + \Vert \psi \Vert_R + \Vert \pi \Vert _R+ | q | + | p |.
\end {equation}
\end {definition}

Obviously,  the energy (\ref {Ham}) is a continuous functional on $ \cE $, and $ \cE_ \si \subset \cE $
for $ \si> 3/2 $. The convergence in $\cE_F$ is equivalent to the convergence in every seminorm (\ref{cER3}).
We assume the  external potential be  confining:
\begin {equation} \label {V}
V (q) \to \infty, \qquad | q | \to \infty.
\end {equation}
In this case the Hamiltonian (\ref {Ham}) is bounded below:
 \be \la {WPHm}
  \inf_ {Y \in {\cal E}} {\cal H} (Y) = V_ {0} + \fr 12 (\rho, \Delta ^ {- 1} \rho),
\ee
where
\be \la {V0}
 V_0: = \inf_ {q \in \R ^ 3} V (q)> - \infty.
 \ee
 The following lemma is proved in \ci [Lemma 2.1] {KSK1997}.

\bl \la {lex}
 Let   $ V (q) \in C ^ 2 (\mathbb R ^ 3) $  satisfies the condition (\ref {V0}).
 Then for any initial state $ Y (0) \in \cE $ there exists a unique  finite energy solution
$ Y (t) = (\psi (t), \pi (t), q (t), p (t)) \in C (\mathbb R, \cE) $, and
\\
i) for every $ t \in \R $ the map $ W_t: Y_0 \mapsto Y (t) $ is continuous both on $ {\cal E} $ and on $ {\cal E} _F $;
\\
ii) the energy $ {\cal H} (Y (t)) $ is conserved, i.e.
 \be \la {WPEC}
  {\cal H} (Y (t)) = {\cal H} (Y_0) ~~~ for ~~ t \in \R;
 \ee
iii) a priori estimates hold
 \be \la {apr}
   \sup_ {t \in \R} [\Vert \na \psi (t) \Vert + \Vert \pi (t) \Vert] <\infty, \quad
   \quad  \sup_ {t \in \R} | \dot q (t) | =\ov v<1;
 \ee
iv) if (\ref {V}) holds, then also
 \be \la {aprq}
     \sup_ {t \in \R} | q (t) | =\ov q_0<\infty.
 \ee
\el
\br
{\rm
In the case of  point particle $ \rho (x) = \delta (x) $, the system \eqref {w3}--\eqref {q3} is incorrect, 
since in this case any solution of the wave equation \eqref {w3} is singular at the point  $ x = q (t) $,
and, accordingly, the integral in \eqref {q3} is not defined. Energy functional (\ref {Ham}) in this
 case is not bounded from below, because the integral in (\ref {WPHm}) diverges and is equal to $ - \infty $.
 Indeed, in the Fourier transform, this integral has the form
 $$
 (\rho, \Delta ^ {- 1} \rho) = -\int \fr {| \hat \rho (k) |^2} {k ^ 2} dk,
$$
where $ \hat \rho (k) \equiv 1 $. This is the famous ``ultraviolet divergence.''
Thus, the self-energy of  point charge is infinite, that suggested Abraham to introduce the model of an ``extended  electron''
with a continuous charge density $ \rho (x) $.
  }
 \er

Denote $ Z = \{q \in \R ^ 3: \nabla V (q) = 0 \} $.
It is easy to verify that  stationary states of the system \eqref {w3}--\eqref {q3}  have the form
$
S_q = (\psi_q, 0, q, 0)$, where $q \in Z$
and  $ \Delta \psi_q (x) = \rho (x-q) $. Therefore, $ \psi_q (x) $ is the Coulomb potential
$$
\psi_q (x): = - \displaystyle \frac1 {4 \pi} \int \frac {\rho (y-q) \, dy} {| x-y |}
$$
Respectively, the set of all stationary states of this system is
$$
\cS: = \{S_q: q \in Z \}.
$$
If the set $ Z \subset \R ^ N $ is discrete, then the set $ \cS $ is also discrete in $ \cE $ and in $ \cE_F $.
Finally, assume that the ``form-factor'' $ \rho $ satisfies Wiener's {\it condition}
\begin {equation} \label {W1}
\hat \rho (k): = \int e ^ {ikx} \rho (x) \, dx \ne 0, \qquad k \in \mathbb R ^ 3.
\end {equation}

\br
{\rm The Wiener condition  means a strong coupling of  scalar wave field $ \psi (x) $ to the particle. 
It is a suitable version of the ``Fermi Golden Rule'' for the system  \eqref {w3}--\eqref {q3}: 
the perturbation $\rho(x-q)$ is not orthogonal to all eigenfunctions of continuum spectrum of the Laplacian $\De$. 
}
\er
 For simplicity of the exposition we assume that
\be\la{C}
\rho\in C_0^\infty(\R^3),
 ~~~~~\rho(x)=0{\rm ~~for~~}|x|\geq R_\rho,~~~~\rho(x)=\rho_r(|x|).
\ee
The main result of \cite {KSK1997} is as follows.

\begin {theorem} \label {t4}
{\rm i)} Let the conditions \eqref {V} and \eqref {W1} hold, and $ \si> 3/2 $. Then for any initial state $ Y (0)=(\psi_0,\pi_0,q_0,p_0) \in \cE_ \si $
the corresponding solution $ Y (t) = (\psi (t), \pi (t), q (t), p (t)) \in C(\R,\cE)$ to the system \eqref {w3}--\eqref {q3} 
attracts to the set of stationary states:
\begin {equation} \label {WP9}
Y (t) \toEF \cS, \qquad t \to \pm \infty,
\end {equation}
where attraction holds in the  metric \eqref {metr} defined by the seminorms \eqref {cER3}.
\medskip \\
{\rm ii)}
Let, additionally,  the set $ Z $ be discrete in $ \mathbb R ^ 3 $. Then
\begin {equation} \label {WP10}
Y (t) \toEF S_ \pm \in \cS, \qquad t \to \pm \infty.
\end {equation}
\end {theorem}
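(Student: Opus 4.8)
The plan is to run the method of omega-limit trajectories, using the finiteness of the energy radiated to infinity in place of a friction term. First I would invoke Lemma~\ref{lex}: the a priori bounds \eqref{apr}--\eqref{aprq} give $\sup_t(\Vert\nabla\psi(t)\Vert+\Vert\pi(t)\Vert)<\infty$, $|\dot q(t)|\le\ov v<1$ and $|q(t)|\le\ov q_0$, so along any sequence $s_k\to\infty$ the shifted trajectories $Y(\cdot+s_k)$ are precompact in $C(\R,\cE_F)$. Passing to a subsequence I obtain a limiting (omega-limit) trajectory $Y_*(t)=(\psi_*,\pi_*,q_*,p_*)(t)$, which by the continuity of the flow on $\cE_F$ (Lemma~\ref{lex}~i) again solves \eqref{w3}--\eqref{q3}. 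The whole theorem then reduces to showing that every such $Y_*$ is a single stationary state $S_{q_*}\in\cS$.

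Next I would split the field as $\psi=\psi_{\rm in}+\psi_{\rm ret}$, where $\psi_{\rm in}$ solves the free wave equation with the initial data $(\psi_0,\pi_0)$ and $\psi_{\rm ret}$ is the retarded field of the moving source $-\rho(x-q(t))$. For data in $\cE_\si$ with $\si>3/2$, the decay \eqref{WP8} together with the strong Huygens principle forces $\psi_{\rm in}$ to leave every ball, so $\Vert\nabla\psi_{\rm in}\Vert_R+\Vert\pi_{\rm in}\Vert_R\to0$ and the incoming field asymptotically decouples from the particle. Energy conservation \eqref{WPEC}, with $\cH$ bounded below by \eqref{WPHm}, shows that the energy flux through large spheres is integrable in time; this is the dissipation integral that here replaces friction, and it yields an $L^2(0,\infty)$ bound on the radiation field generated by the source, hence on its effective acceleration.

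The crux is that the omega-limit trajectory $Y_*$ carries no radiation at all: every bit of energy that can escape has already escaped before the arbitrarily large times $s_k$. I would express the outgoing field of $Y_*$ through the source velocity $v_*=\dot q_*$ convolved with the form-factor $\rho$; in Fourier variables the radiated amplitude carries the factor $\hat\rho(k)$. The Wiener condition \eqref{W1}, $\hat\rho(k)\ne0$ for all $k$, is precisely what lets me apply the Wiener Tauberian theorem to divide out $\hat\rho$: vanishing radiated power then forces the spectrum of $v_*$ to sit at the zero frequency, i.e. $\dot q_*\equiv\const$. The bound $|q_*|\le\ov q_0$ excludes a nonzero constant velocity, so $\dot q_*\equiv0$ and $q_*\equiv\const$; with the source at rest the field relaxes to the static Coulomb potential $\psi_{q_*}$, and \eqref{q3} gives $\nabla V(q_*)=0$, i.e. $q_*\in Z$ and $Y_*=S_{q_*}\in\cS$. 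This proves \eqref{WP9}.

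For part~(ii), discreteness of $Z$ makes $\cS$ discrete; since the above yields $\dot q(t)\to0$ while $q(t)$ is continuous and can accumulate only at the isolated points of $Z$, it cannot travel between distinct critical points and must converge to a single $q_\pm\in Z$, whence $Y(t)\toEF S_{q_\pm}$. The hard part will be paragraph three: making rigorous that an omega-limit trajectory radiates nothing, and that, via the Wiener Tauberian theorem, the non-vanishing of $\hat\rho$ upgrades ``finite radiated energy'' into the genuine convergence $\dot q_*\to\const$. This step has to control the relativistic self-force and the coupling of the source to the continuous spectrum of $\Delta$, and it is exactly here that the Wiener condition is indispensable: without it the particle could sustain undamped oscillations at frequencies where $\hat\rho$ vanishes.
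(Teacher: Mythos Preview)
Your plan uses the right ingredients---retarded/free splitting, finiteness of the radiated energy, the Wiener Tauberian theorem, and an omega-limit argument---but you invert the order in which the paper deploys them, and that reordering creates real work you are glossing over.

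In the paper the Wiener Tauberian step is applied to the \emph{original} trajectory, not to an omega-limit one. Concretely, the far-field amplitude $\bar\pi(\omega,t)$ of the retarded Li\'enard--Wiechert field is written as a one-dimensional time convolution $\tilde\rho*g_\omega(t)$, where $\tilde\rho$ is the marginal of $\rho$ along $\omega$ and $g_\omega(\theta)=\ddot r(\tau(\theta))/(1-\dot r(\tau(\theta)))^3$ encodes the \emph{acceleration}, not the velocity. Finiteness of the radiated energy plus global Lipschitz continuity of $\bar\pi$ give $\bar\pi(\omega,t)\to0$; Pitt's extension of the Tauberian theorem (using that $g_\omega'$ is bounded by the a priori estimates) and the Wiener condition \eqref{W1} then yield $g_\omega\to0$, hence $\ddot q(t)\to0$ and, since $|q|\le\ov q_0$, $\dot q(t)\to0$. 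Only \emph{after} this relaxation is established does the paper pass to omega-limits: the decay $\dot q\to0$ is exactly what produces the explicit compact attracting set $\cA=\{S_q:|q|\le\ov q_0\}$, and then the omega-limit argument with continuity of $W_t$ on $\cE_F$ is soft.

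Your route---pass to an omega-limit first, argue it radiates nothing, then apply Wiener Tauberian to $Y_*$---is not wrong in principle, but two points need more than you give them. First, precompactness of $Y(\cdot+s_k)$ in $C(\R,\cE_F)$ does not follow from the energy bounds alone: boundedness of $\nabla\psi(t)$ in $L^2$ gives no compactness of $\nabla\psi(t)$ in $L^2(B_R)$. You would have to go through the retarded representation and Arzel\`a--Ascoli for $q(\cdot)$ to get it, which is essentially what the paper's Lemma~\ref{WPCAT} does---but only \emph{after} knowing $\dot q\to0$. Second, the Li\'enard--Wiechert estimates separating the free Kirchhoff piece (Lemma~\ref{WPI00}) need the $\cE_\sigma$ decay \eqref{WP8} of the initial data; an omega-limit trajectory has initial data only in $\cE$, so you cannot run the same radiation bookkeeping on $Y_*$ directly. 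The paper's ordering sidesteps both issues.

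Your description of what the Tauberian theorem delivers is also slightly off: it controls the acceleration profile $g_\omega$, not ``the spectrum of $v_*$''. The conclusion $\dot q_*\equiv\mathrm{const}$ is correct, but it comes via $\ddot q_*\equiv0$.
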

The key point in the proof of this  is the  relaxation of the acceleration
\begin {equation} \label {rel}
\ddot q (t) \to 0, \qquad t \to \pm \infty.
\end {equation}
This relaxation has long been known in classical electrodynamics as ``radiation damping''.
Namely, the  Li\'enard--Wiechert   formulas for retarded potentials suggest that  a particle with a non-zero acceleration 
radiates energy to infinity. The radiation cannot last forever, because the total energy of the solution is finite. 
These arguments result in the conclusion (\ref{rel}) that can be found in any textbook on classical electrodynamics.

However,  rigorous proof is not so obvious and it  was done for the first time in \ci {KSK1997}. 
The proof relies on  calculation of  total energy amount radiated to infinity using the Li\'enard-Wiechert   formulas. 
The central point is the representation  of this amount in the form of a convolution  
and subsequent application of the Wiener Tauberian theorem.

Below we give a streamlined version of this proof for $ t \to + \infty $.

\begin {remark} \label {WC1}
{\rm
i) The condition \eqref {V} is not necessary for relaxation ~ \eqref {rel}.
The relaxation also takes place under the condition \eqref {V0} (see Remark \ref {rVW}).
\medskip \\
ii) The Wiener condition ~ \eqref {W1} also is not necessary for relaxation ~ \eqref {rel}.
For example, ~ \eqref {rel} obviously holds
in the case when
$ V(x) \equiv 0 $ and $\rho (x) \equiv 0 $. More generally, such relaxation also holds
when $ V(x) \equiv 0 $ and the norm $ \Vert \rho \Vert $ is sufficiently small, see  (\ref{rosm}). }

\end {remark}

\subsubsection{Li\'enard-Wiechert asymptotics}
Let us recall  long range asymptotics of the  Li\'enard-Wiechert potentials \ci{KSK1997, KS2000}.
Denote by $\psi_r(x,t)$ the retarded potential 
\be\la{WPretp}
  \psi_r(x,t)=-\fr 1{4\pi}\int\fr{d^3y~\theta(t-|x-y|)}{|x-y|}\rho(y-q(t-|x-y|)),
\ee
and set $ \pi_r(x,t)=\dot\psi_r(x,t)$. Denote $T_r:=\ov q_0+R_\rho$.
\begin{lemma}\la{WPKR} 
The following asymptotics hold 
\begin{equation}\label{WP3.7}
\left\{ \begin{array}{ll}
\pi_r(x,|x|+t)=\ov \pi(\om(x),t)|x|^{-1}+{\cal O}(|x|^{-2})\\\\
\nabla\psi_r(x,|x|+t)=-\om(x)\ov \pi(\om(x),t)|x|^{-1}+{\cal O}(|x|^{-2})
\end{array}\right|,\quad |x|\to\infty
\end{equation}
uniformly in $t\in[T_r,T]$ for any  $T>T_r$. Here $\om(x)=x/|x|$, and $\ov \pi(\om(x),t)$ is given in (\ref{WP3.17}).
\end{lemma}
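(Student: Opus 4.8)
The plan is to exploit the compact support of the source together with the far-field expansion of $|x-y|$, and then to read off the two field components from the outgoing spherical-wave structure of $\psi_r$.

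First I would localize the integral. Since the solution obeys the a priori bounds $|q(t)|\le\ov q_0$ and $|\dot q(t)|\le\ov v<1$ from Lemma \ref{lex} (see \eqref{apr}, \eqref{aprq}), and $\rho(z)=0$ for $|z|\ge R_\rho$, the integrand in \eqref{WPretp} vanishes unless $|y|\le\ov q_0+R_\rho=T_r$. Thus the $y$-integration effectively runs over the fixed ball $|y|\le T_r$, and for $t\ge T_r$ and $|x|$ large the retarded time $\tau(x,y)=|x|+t-|x-y|$ satisfies $\tau\ge t-T_r\ge 0$, so $\theta(\tau)=1$ on the effective support. On this ball I would insert the expansions $|x-y|=|x|-\om(x)\cdot y+\mathcal{O}(|x|^{-1})$ and $|x-y|^{-1}=|x|^{-1}+\mathcal{O}(|x|^{-2})$, whence $\tau=t+\om(x)\cdot y+\mathcal{O}(|x|^{-1})$. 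Because $\rho\in C_0^\infty$ and $\dot q$ is bounded, the map $y\mapsto\rho(y-q(\tau))$ is Lipschitz uniformly in the relevant parameters, so replacing $\tau$ by $t+\om(x)\cdot y$ and $|x-y|^{-1}$ by $|x|^{-1}$ costs only $\mathcal{O}(|x|^{-2})$. This yields
\be
\psi_r(x,|x|+t)=\frac{\ov\psi(\om(x),t)}{|x|}+\mathcal{O}(|x|^{-2}),\qquad
\ov\psi(\om,t):=-\frac{1}{4\pi}\int\rho(y-q(t+\om\cdot y))\,d^3y,
\ee
uniformly in $t\in[T_r,T]$ and in the direction $\om(x)\in S^2$.

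Next I would obtain the two components by differentiation. Since all bounds are uniform on $[T_r,T]$ and the integrand is smooth with compact $y$-support, differentiation under the integral sign is legitimate and $\partial_t$ commutes with the expansion, giving $\pi_r(x,|x|+t)=\ov\pi(\om(x),t)|x|^{-1}+\mathcal{O}(|x|^{-2})$ with $\ov\pi=\partial_t\ov\psi$ as in \eqref{WP3.17}. For the spatial gradient I would observe that the only leading contribution to $\nabla\psi_r$ comes from the $x$-dependence of the retarded time $\tau$ inside $\rho$: since $\nabla_x\tau=-(x-y)/|x-y|=-\om(x)+\mathcal{O}(|x|^{-1})$ while $\partial_s\tau=1$ (where $s=|x|+t$), one has $\nabla_x[\rho(y-q(\tau))]=-\om(x)\,\partial_s[\rho(y-q(\tau))]+\mathcal{O}(|x|^{-1})$. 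The gradient of the Coulomb prefactor $|x-y|^{-1}$ is $\mathcal{O}(|x|^{-2})$. Hence $\nabla\psi_r(x,|x|+t)=-\om(x)\,\pi_r(x,|x|+t)+\mathcal{O}(|x|^{-2})$, and inserting the expansion of $\pi_r$ gives $\nabla\psi_r(x,|x|+t)=-\om(x)\ov\pi(\om(x),t)|x|^{-1}+\mathcal{O}(|x|^{-2})$, which is exactly the second line of \eqref{WP3.7}.

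The main obstacle is the uniform control of the remainders in the gradient computation: one must verify that differentiating the kernel $|x-y|^{-1}$, replacing the true unit vector $(x-y)/|x-y|$ by $\om(x)$, and the $\mathcal{O}(|x|^{-1})$ errors in the retarded time all contribute only at order $|x|^{-2}$, uniformly in $t\in[T_r,T]$ and in $\om(x)$. This is where the smoothness $\rho\in C_0^\infty$ and the strict bound $|\dot q|\le\ov v<1$ enter decisively: they bound $\nabla\rho$ and $\dot q$ on the fixed support, so that the Lipschitz constants appearing in the error estimates are independent of $x$ and of the direction, and no hidden $|x|$-dependence degrades the decay rate.
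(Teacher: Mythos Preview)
Your argument is correct and follows essentially the same route as the paper: compact support in $y$, the far-field expansion $|x-y|=|x|-\om(x)\cdot y+\cO(|x|^{-1})$, and the key observation $(x-y)/|x-y|=\om(x)+\cO(|x|^{-1})$ yielding $\nabla\psi_r=-\om(x)\pi_r+\cO(|x|^{-2})$. The only cosmetic difference is ordering: the paper computes $\pi_r$ and $\nabla\psi_r$ directly from their integral representations and then expands, whereas you first expand $\psi_r$ itself and then differentiate; the paper's order sidesteps the need to argue that $\partial_t$ commutes with the $\cO(|x|^{-2})$ remainder, but your justification via uniform bounds on the compactly supported smooth integrand is perfectly adequate.
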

\begin{proof}
The integrand of (\ref{WPretp}) vanishes for $|y|>T_r$.
Then $|x-y|\le t$ for $t-|x|>T_r$, and  (\ref{WPretp}) implies
\beqn\nonumber
\nabla\psi_r(x,t)
&=&\int \fr {d^3y} {4\pi |x-y|} n\nabla\rho(y-q(t-|x-y|))\cdot\dot q(t-|x-y|)
+{\cal O}(|x|^{-2})\\
\nonumber
&=&-\om(x) \pi_r(x,t)+{\cal O}(|x|^{-2}),\qquad  t-|x|>T_r,
\eeqn
because $\ds n=\fr{x-y}{|x-y|}=\om(x)+{\cal O}(|x|^{-1})$ for bounded $|y|$. 
Hence, it suffices to prove asymptotics (\ref {WP3.7}) for $\pi_r$ only. We have
\be\la{WP3.12}
\pi _r(x,t)=-\int d^3y~\fr 1 {4\pi |x-y|}\nabla\rho(y-q(\tau))\cdot\dot q(\tau),\quad  \tau:=t-|x-y|.
\ee 
Replacing  $t$ by $|x|+t$  in  definition of  $\tau$, we obtain  
$$
\tau=|x|+t-|x-y|=t+\omega(x) \cdot y+{\cal O}(|x|^{-1})
=\ov\tau+{\cal O}(|x|^{-1}),\qquad \ov\tau=t+\om\cdot y,
$$
since
$$
|x|-|x-y|=|x|-\sqrt{|x|^2-2x\cdot y+|y|^2}\sim
|x|\Big(\fr {x\cdot y}{|x|^2}-\fr{|y|^2}{2|x|^2}\Big)
=\omega(x) \cdot y+{\cal O}(|x|^{-1}).
$$
 Hence (\ref{WP3.12}) implies (\ref{WP3.7}) with
\be\la{WP3.17}
\ov \pi(\om,t):=-\fr 1 {4\pi}\int d^3y~ \nabla\rho(y-q(\ov\tau))\cdot\dot q(\ov\tau).
\ee
\end{proof}

\subsubsection{Free wave equation}
Consider now the solution  $\psi_K(x,t)$ of  free wave equation
with initial conditions  
\be\la{pK}
\psi_K(x,0)=\psi_0(x),\quad \dot\psi_K(x,0)=\pi_0(x),\qquad x\in\R^3.
\ee
The Kirchhoff formula gives
\be\la{kf}
\psi_K(x,t)=\fr 1{4\pi t}\int_{S_t(x)}~d^2y~\pi_0(y)+\fr \pa {\pa t}~\Big(\fr 1{4\pi t}\int_{S_t(x)}~d^2y~\psi_0(y)\Big).
\ee
Here $S_t(x)$ is the sphere $\{y:~|y-x|=t\}$. 
Denote $\pi_K(x,t)=\dot\psi_K(x,t)$. 
\begin{lemma}\la{WPI00}
Let $Y_0\in{\cal E}_\sigma$. Then for any $R>0$ and  any $T_2>T_1\ge0$
\be\la{WP3.10}
 \int_{R+T_1}^{R+T_2}dt \int_{\pa B_R}d^2x~\Big(|\pi_K(x,t)|^2+|\nabla\psi_K(x,t)|^2\Big) \leq I_0<\infty.
\ee
\end{lemma}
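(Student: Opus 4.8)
The plan is to reduce the claim to a single time-global estimate and then extract pointwise decay of the free field on the sphere $\pa B_R$ from the Kirchhoff formula. Since the integrand is nonnegative and the bound is to be uniform in $T_1,T_2$, it suffices to take $T_1=0$ and let $T_2\to\infty$, and to show that
\[
I_0:=\int_R^\infty dt\int_{\pa B_R}d^2x\,\big(|\pi_K(x,t)|^2+|\na\psi_K(x,t)|^2\big)<\infty;
\]
every integral in \eqref{WP3.10} is then dominated by $I_0$. All the work lies in proving that $\pi_K$ and $\na\psi_K$ decay fast enough in $t$ on the fixed sphere $|x|=R$.

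First I would record the decay of the data supplied by $Y_0\in\cE_\si$. The estimate \eqref{WP8} gives $|\pi_0(x)|+|\na\psi_0(x)|=\cO(|x|^{-\si})$ and $|\na\pi_0(x)|+|\na\na\psi_0(x)|=\cO(|x|^{-\si-1})$. Since $\psi_0\in\Ho^1(\R^3)$ vanishes at infinity, integrating $\na\psi_0$ along rays yields in addition $|\psi_0(x)|=\cO(|x|^{1-\si})$. Next I would expand the Kirchhoff formula \eqref{kf} in its standard form
\[
\psi_K(x,t)=\fr1{4\pi t^2}\int_{S_t(x)}\big[t\,\pi_0(y)+\psi_0(y)+\na\psi_0(y)\cdot(y-x)\big]\,d^2y,
\]
so that $\psi_K$, and after one differentiation $\pi_K=\dot\psi_K$ and $\na\psi_K$, are expressed as spherical means of $\pi_0,\psi_0,\na\psi_0$ and their first derivatives over $S_t(x)=\{|y-x|=t\}$.

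The heart of the proof is a pointwise bound for $|x|=R$ and $t\ge 2R$. On such a sphere $|y|\ge t-R\ge t/2$ for every $y\in S_t(x)$, so each factor of the data is $\cO(t^{-\si})$ (and its gradient $\cO(t^{-\si-1})$). When $\pi_K$ or $\na\psi_K$ is formed, every $\pa_t$ or $\na_x$ that falls on a spherical mean either strips a prefactor $t$, producing a mean of the same data with an extra power $t^{-1}$, or differentiates the integrand, replacing the data by its gradient and hence gaining a power $|x|^{-1}\sim t^{-1}$ on the support. In either case one power of $t$ is gained, and a direct count shows the slowest surviving term is $\cO(t^{-\si})$; thus $|\pi_K(x,t)|+|\na\psi_K(x,t)|\le C_R\,t^{-\si}$ for $t\ge2R$. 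On the compact slab $t\in[R,2R]$ the global solution is of class $C^2$ (the data being $C^2$), so $|\pi_K|+|\na\psi_K|$ is simply bounded there. Combining the two regimes,
\[
I_0\le 4\pi R^2\Big[R\sup_{[R,2R]\times\pa B_R}\!\big(|\pi_K|^2+|\na\psi_K|^2\big)+C_R^2\int_{2R}^\infty t^{-2\si}\,dt\Big],
\]
which is finite because $2\si>1$ (indeed $\si>3/2$).

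The main obstacle is the middle step: verifying honestly that differentiating the spherical means in \eqref{kf} never loses decay, i.e.\ that each $\pa_t$ and each $\na_x$ is absorbed by the hypotheses \eqref{WP8} on the second derivatives of the data. This is exactly where the $C^2$-regularity and the extra power of decay for $\na\pi_0$ and $\na\na\psi_0$ built into $\cE_\si$ are used, and where one must check that no term of order $t^{1-\si}$, which is genuinely present in $\psi_K$ itself, survives in $\pi_K$ or $\na\psi_K$. The slight subtlety that $\psi_0\in\Ho^1$ need not lie in $L^2$ is harmless, since only the decay $\cO(|x|^{1-\si})$ of $\psi_0$, and not its integrability, enters the estimate.
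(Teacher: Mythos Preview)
Your proof is correct and follows the same route as the paper: differentiate the Kirchhoff representation, observe that only $\pi_0,\na\psi_0,\na\pi_0,\na\na\psi_0$ enter $\pi_K$ and $\na\psi_K$, and use the decay in \eqref{WP8} to conclude integrability in $t$ on $\pa B_R$. The only cosmetic difference is that the paper evaluates the spherical average $\int_{S_1}|x+tz|^{-\alpha}\,d^2z$ in closed form and integrates the resulting expression over $t>R$ directly, whereas you use the cruder bound $|x+tz|\ge t-R\ge t/2$ for $t\ge 2R$ and handle $t\in[R,2R]$ separately by continuity; both yield the same $t^{-\si}$ pointwise decay and the same convergence criterion $2\si>1$.
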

\begin{proof}
Formula (\ref{kf}) implies 
$$
\nabla\psi_K(x,t)=\fr{t}{4\pi}\int_{S_1}\!\! d^2z~\na\pi_0(x+tz)+
\fr{1}{4\pi}\int_{S_1}\!\! d^2z~\na\psi_0(x+tz)+\fr{t}{4\pi}\int_{S_1}\!\! d^2z~\na_x(\na\psi_0(x+tz)\cdot z).
$$
Here $S_1:=S_1(0)$. From (\ref{WP8}) it follows that
\beqn\nonumber
|\nabla\psi_K(x,t)|&\le&
 C \sum_{s=0}^1 t^{s}\int_{S_1}d^2z~|x+tz|^{-\sigma-1-s}\\
 \nonumber
& =& C \sum_{s=0}^1 \fr{2\pi t^{s-1}}{(\si+s-1)|x|}\Big((t-|x|)^{-\si-s+1}-(t+|x|)^{-\si-s+1}\Big).
\eeqn
Therefore, 
\beqn\nonumber
 \int\limits_{ R+T_1}^{R+T_2}dt \int\limits_{\pa B_R}d^2x |\nabla\psi_k(x,t)|^2 
\!\!\!&\le&\!\!\! C 
 \int\limits_{R+T_1}^{R+T_2} \Big[\fr{(t+R)^{2-2\si}+(t-R)^{2-2\si}}{t^2}+(t-R)^{-2\sigma}\Big]dt\\
\nonumber
\!\!\! &\le& \!\!\! C_1\int\limits_{R+T_1}^{R+T_2}dt \Big[\Big(1+\fr Rt\Big)^{2}+\Big(1-\fr Rt\Big)^{2}+1\Big](t-R)^{-2\sigma}\le I_0<\infty.
 \eeqn
The integral with  $\nabla\pi_K(x,t)$ can be estimated similarly.
\end{proof}

 \subsubsection{Scattering of energy to infinity}
Now we obtain a  bound on the total energy radiated to infinity which we will represent as a   ``radiation integral''.
This integral has to be bounded a priori by (\ref{apr}).
Indeed,
the energy ${\cal H}_R(t)$ at time $t\in\R$ in the ball $B_R$ is defined by
$$
{\cal H}_R(t)=\fr 12\int_{B_R}d^3x ~\Big(|\pi(x,t)|^2+|\nabla\psi(x,t)|^2\Big)+ \sqrt{1+p^2(t)}+V(q(t))+\int d^3x\, \psi(x,t)\rho(x-q(t))\,.
$$
Consider  the energy $I_R(T_1,T_2)$ radiated from the ball $B_R$ during the time interval $[T_1, T_2]$ with $T_2>T_1>0$:
$$
I_R(T_1,T_2)={\cal H}_R(T_1)-{\cal H}_R(T_2).
$$
This energy is bounded a priori, because 
by (\ref{apr})
the energy
${\cal H}_R(T_1)$ is bounded from above,  while ${\cal H}_R(T_2)$ is bounded from
 below. Thus,
\be\la{WP3.4}
I_R(T_1,T_2)\le I<\infty,
\ee
where  $I$ does not depend on $T_1$, $T_2$  and $R$.  Further, one has
$$
\fr d{dt}{\cal H}_R(t)
=\int_{\pa B_R}d^2 x ~\omega (x)\cdot \pi(x,t)\nabla\psi(x,t),\quad t>R.
$$
Hence, (\ref{WP3.4}) implies
$$
\int_{R+T_1}^{R+T_2}~dt~\int_{\pa B_R}d^2 x ~\omega (x)\cdot \pi(x,t)\nabla\psi(x,t)\le I.
$$
The solution admits the splitting
$\pi=\pi_r+\pi_K$, $\psi=\psi_r+\psi_K$, and hence,
$$
\int_{R+T_1}^{R+T_2}~dt~\int_{\pa B_R}d^2 x ~\omega (x)\cdot
(\pi_r \nabla\psi_r+\pi_K \nabla\psi_r+\pi_r \nabla\psi_K+\pi_K \nabla\psi_K)\leq I.
$$
Lemmas \ref{WPKR} and \ref{WPI00}  together with  the Cauchy-Schwarz inequality imply
$$
\int_{T_r}^T~dt~\int_{S_1}d^2 \om ~|\ov \pi(\om,t)|^2\le I_1+T{\cal O}(R^{-1}),\quad T>T_r,
$$
where $I_1<\infty$ does not depend on $T$ and $R$.
 Taking the limit $R\to\infty$ and then $T\to\infty$ we obtain
 the finiteness of   the energy radiated to infinity:
 \be\la{WP3.1}
\int_{0}^\infty dt\int_{S_1} d^2\om |\ov \pi(\om,t)|^2<\infty.
\ee
 \subsubsection{Convolution representation and relaxation of   acceleration and velocity}
Applying  a partial integration in (\ref{WP3.17}), we obtain 
\beqn\la{WP3.0}
\!\!\!\!\!\!\ov \pi(\om,t)\!\!\!&=&\!\!\! \int d^3y~\nabla\rho(y-q(\ov\tau))\cdot\dot q(\ov\tau)
=\int d^3y~\nabla_y\rho(y-q(\ov\tau))\cdot\dot q(\ov\tau)\fr 1{1-\om\cdot\dot q(\ov\tau)}\nonumber\\
\!\!\!&=&\!\!\!-\int d^3y~\rho(y-q(\ov\tau))\fr{\pa}{\pa y_\al}\fr{\dot q_\al(\ov\tau)}{1-\om\cdot\dot q(\ov\tau)}
=\fr 1 {4\pi}\int d^3y~\rho(y-q(\ov\tau))\frac{\om\cdot\ddot q(\tau)}{(1-\om\cdot\dot q(\tau))^2}. 
\eeqn
The function
 $\ov\pi(\omega,t)$ is globally Lipschitz continuous in $\omega$ and $t$ due to  (\ref{apr})
 Hence, (\ref{WP3.1}) implies 
 \be\la{WP4.3}
 \lim_{t\to\infty} \ov\pi(\omega,t)=0
 \ee
 uniformly in $\omega\in S_1$.
 Denote $r(t)=\omega\cdot q(t)$, $s=\omega\cdot y$, $\tilde\rho(q_3)=\ds\int dq_1dq_2\rho(q_1,q_2,q_3)$
 and  decompose  the $y$-integration  in (\ref{WP3.0}) along and transversal to $\omega$. Then
 \beqn
\ov\pi(\omega,t) & = & \int ds \,\tilde\rho(s-r(t+s))\,\frac{\ddot r(t+s)}
 {{(1-\dot r(t+s))}^2}\nonumber\\
 \nonumber& = &
 \int d\tau\, \tilde\rho(t-(\tau-r(\tau)))\,
 \frac{\ddot r(\tau)}{{(1-\dot r(\tau))}^2}
 = \int d\theta\, \tilde\rho(t-\theta) g_\omega(\theta)=\tilde\rho*g_\omega(t).
\eeqn
 Here  $\theta=\theta(\tau)=\tau-r(\tau)$ is a nondegenerate diffeomorphism 
 of $\R$ since $\dot r\leq \ov r<1$ due to (\ref{apr}), and 
 \be\la{Larmor}
 g_{\omega}(\theta)=\fr {\ddot r(\tau(\theta))}
{(1-\dot r(\tau(\theta)))^3}.
 \ee
 Let us extend $q(t)=0$  for $t<0$. Then $\tilde\rho*g_\omega\,(t)$ is defined for all $t$, and coincides  
 with $\ov\pi(\omega,t)$ for sufficiently  large $t$. Hence, (\ref{WP4.3}) reads as a convolution limit
 \be\la{WP4.6}
 \lim_{t\to\infty} \tilde\rho*g_{\omega}(t)=0.
 \ee
 Moreover, $g^\prime_\omega(\theta)$ is bounded by (\ref{apr}). 
 Therefore, (\ref{WP4.6}) and the Wiener condition (\ref{W1}) imply  \be\la{WP4.7}
 \lim_{\theta\to\infty} g_{\omega}(\theta)=0,\qquad \om\in S_1
 \ee
 by Pitt's extension of the Wiener Tauberian Theorem, cf.~\cite[Thm.~9.7(b)]{Rudin}. Hence, (\ref{Larmor}) implies
 \be\la{WP4.7'}
 \lim_{t\to\infty} \ddot q(t)=0.
  \ee
since $\theta(t)\to\infty$ as $t\to\infty$.
Finally, 
\be\la{WP4.7''}
\lim_{t\to\infty}\dot q(t)=0,
\ee
 since   $|q(t)|\le \ov q_0$ due to (\ref{apr}).
 \begin{remark}\la{rVW}
(i) We have used condition (\ref{V}) in the proof of (\ref{WP3.4}).
However, (\ref{V0}) at this point is also sufficient. Hence, the relaxation  (\ref{WP4.7'})  holds also under condition (\ref{V0}).
\\
(ii) For  point charge $\rho(x)=\delta(x)$, (\ref{WP4.6}) implies (\ref{WP4.7}) directly.
\\
(iii) Condition (\ref{W1}) is necessary for the implication (\ref{WP4.7})$\Rightarrow$(\ref{WP4.7'}). 
Indeed, if (\ref{W1}) is violated, then $\hat\rho_a(\xi)=0$ for some $\xi\in\R$, and with the choice
 $g(\theta)=\exp(i\xi \theta)$ we have $\rho_a*g(t)\equiv 0$ whereas $g$ does not decay to zero.
 \end{remark}

\subsubsection{A compact attracting set}
Here we show that the set
\be\la{cA}
{\cal A} = \{S_q:~q\in\R^3,~ |q|\le \ov q_0\}
\ee 
is an attracting subset.
It is compact in ${\cal E}_F$ since ${\cal A}$ is homeomorphic to a closed ball in $\R^3$.
\begin{lemma}\la{WPCAT}
The following attraction holds,
\be\la{WPcA} 
Y(t)\tocEF {\cal A},\quad t\longrightarrow \pm\infty.
\ee
\end{lemma}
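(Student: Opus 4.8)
The plan is to use the \emph{instantaneous} stationary state $S_{q(t)}=(\psi_{q(t)},0,q(t),0)$ as the point of $\cA$ that $Y(t)$ approaches. Since $|q(t)|\le\ov q_0$ by \eqref{aprq}, we have $S_{q(t)}\in\cA$, so
$$
{\rm dist}[Y(t),\cA]\le {\rm dist}[Y(t),S_{q(t)}],
$$
and it suffices to prove ${\rm dist}[Y(t),S_{q(t)}]\to0$. Because convergence in the $\cE_F$-metric \eqref{metr} is equivalent to convergence in each seminorm \eqref{cER3}, and because the $q$-components cancel, the claim reduces to proving, for every $R$,
$$
\Vert\nabla(\psi(t)-\psi_{q(t)})\Vert_R+\Vert\psi(t)-\psi_{q(t)}\Vert_R+\Vert\pi(t)\Vert_R+|p(t)|\to0,\qquad t\to+\infty.
$$
The momentum term is immediate from the relaxation of velocity \eqref{WP4.7''}: since $|\dot q|=|p|/\sqrt{1+p^2}$ is strictly increasing in $|p|$, $\dot q(t)\to0$ forces $p(t)\to0$.

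First I would use the splitting $\psi=\psi_r+\psi_K$, $\pi=\pi_r+\pi_K$ into the retarded potential \eqref{WPretp} and the free wave with data \eqref{pK}. For the free part, the Kirchhoff formula \eqref{kf} together with the decay \eqref{WP8} of the data gives, exactly as in the proof of Lemma \ref{WPI00} but evaluated at a fixed instant, the pointwise bound $|\psi_K(x,t)|+|\nabla\psi_K(x,t)|+|\pi_K(x,t)|=\mathcal O(t^{1-\sigma})$ uniformly for $x\in B_R$; since $\sigma>3/2>1$ this yields $\psi_K(t),\nabla\psi_K(t),\pi_K(t)\to0$ in $L^2(B_R)$, so the free part does not contribute to the limit.

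The core is the retarded part. For $x\in B_R$ and $t$ large, the constraint $y-q(t-|x-y|)\in\supp\rho$ confines $y$ to a fixed compact set (recall $|q|\le\ov q_0$ and $\supp\rho\subset\{|x|\le R_\rho\}$), so the cutoff $\theta(t-|x-y|)$ is inactive. I would then compare with the Coulomb field by writing
$$
\psi_r(x,t)-\psi_{q(t)}(x)=-\fr1{4\pi}\int \fr{\rho(y-q(t-|x-y|))-\rho(y-q(t))}{|x-y|}\,d^3y,
$$
and estimate the numerator through $|q(t-|x-y|)-q(t)|\le|x-y|\sup_{s\ge t-(R+R_\rho)}|\dot q(s)|$, which tends to $0$ uniformly on $B_R$ by \eqref{WP4.7''}. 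As $|x-y|^{-1}$ is locally integrable, this gives $\psi_r(t)\to\psi_{q(t)}$ in $L^2(B_R)$. Differentiating in $x$ splits $\nabla\psi_r$ into the term where $\nabla_x|x-y|^{-1}$ acts (which reproduces $\nabla\psi_{q(t)}$ in the limit, again using $q(t-|x-y|)\to q(t)$) and a chain-rule term carrying the explicit factor $\dot q(t-|x-y|)\to0$; hence $\nabla\psi_r(t)\to\nabla\psi_{q(t)}$ in $L^2(B_R)$. Finally $\pi_r=\dot\psi_r$ differentiates only the retarded argument, so it always carries the factor $\dot q(t-|x-y|)\to0$ and $\pi_r(t)\to0$ in $L^2(B_R)$.

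Combining the three displays gives $\Vert\psi(t)-\psi_{q(t)}\Vert_{H^1(B_R)}+\Vert\pi(t)\Vert_R+|p(t)|\to0$ for each $R$, i.e. \eqref{WPcA} for $t\to+\infty$; the case $t\to-\infty$ is symmetric. The step I expect to require the most care is the convergence of the retarded field in the local $H^1$ norm, not merely in $L^2$: the gradient $\nabla\psi_r$ must be shown to converge to $\nabla\psi_{q(t)}$ uniformly on bounded sets, which hinges on isolating the chain-rule contribution and recognizing that it is governed by the first derivative $\dot q$, so that the already-established relaxation \eqref{WP4.7''} — together with the local decay of the free wave — is exactly what is needed, and no control of higher derivatives of $q$ is required at this stage.
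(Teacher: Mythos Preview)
Your proposal is correct and follows essentially the same approach as the paper: compare $Y(t)$ to the instantaneous stationary state $S_{q(t)}\in\cA$, split the field into the Kirchhoff part $\psi_K$ and the retarded potential $\psi_r$, and use the relaxation $\dot q(t)\to0$ to show $\psi_r(\cdot,t)-\psi_{q(t)}\to0$ and $\pi_r(\cdot,t)\to0$ locally. You are in fact more explicit than the paper in two places: the paper simply writes ``Then also $\Vert\pi(t)\Vert_R\to0$'' and ``Then also $\Vert\psi(t)-\psi_{q(t)}\Vert_R\to0$'' without spelling out the local decay of $(\psi_K,\pi_K)$, and it dismisses the gradient term with ``can be estimated in a similar way''; your treatment of both points is correct (one small typo: the retarded-time window should be $t-(R+T_r)$ with $T_r=\ov q_0+R_\rho$, not $t-(R+R_\rho)$).
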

\begin{proof}
We need to check that for every $R>0$ 
\begin{eqnarray}\la{WP5.1}
{\rm dist}_R(Y(t),{\cal A})\!=\!|p(t)|\!+\!\Vert\pi(t)\Vert_R\!+\!\!\inf_{S_q\in{\cal A}}\!\Big(|q(t)\!-\!q|\!+\!\Vert\psi(t)\!-\!\psi_{q}\Vert_R
\!+\!\Vert\nabla(\psi(t)\!-\!\psi_{q})\Vert_R\Big)
\to 0,~~ t\to\infty.
\end{eqnarray}
We estimate each summand separately.\\
i) $|p(t)|\to 0$ as $t\to\infty$ by  (\ref{WP4.7'}).\\
ii) $\inf\limits_{|q|\le\ov q_0}|q(t)-q|=0$ for any $t\in\R$ by  (\ref{apr}).\\
iii) (\ref{WPretp}) implies for $t>R+T_r$ and $|x|<R$
\[
 |\pi_r(x,t)|\le~C\max\limits_{t-R-T_r\le \tau\le t}|\dot q(\tau)|\int_{|y|<T_r} d^3y \,\fr 1{|x-y|} |\nabla\rho (y-q(t-|x-y|))|\,.
 \]
 The integral in the RHS is bounded uniformly in $t>R+T_r$ and $x\in B_R$. Hence,
 $\Vert\pi_r(t)\Vert_{R}\to 0$ as $t\to\infty$ by  (\ref{WP4.7''}). Then  also $\Vert\pi(t)\Vert_{R}\to 0$.
 \\
iv) Obviously,  we can replace  $q$ with  $q(t)$ in the last summand in (\ref{WP5.1}). Then for $t>R+T_r$ and $|x|<R$, one has
$$
 \psi_r(x,t)-\psi_{q(t)}(x) =-\int_{|y|<T_r} d^3y\fr 1{4\pi|x-y|}\Big(\rho(y-q(t-|x-y|))-\rho(y-q(t))\Big)
$$
by  (\ref{WPretp}). Moreover, $\rho(y-q(t-|x-y|))-\rho(y-q(t))\to 0$ as $t\to\infty$ uniformly in $x\in B_R$ due to (\ref{WP4.7''}).
 Hence, $\Vert\psi_r(t)-\psi_{q(t)}\Vert_R\to 0$ as $t\to\infty$. Then also  $\Vert\psi(t)-\psi_{q(t)}\Vert_{R}\to 0$.
 Finally,  $\Vert\nabla(\psi(t)- \psi_{q(t)})\Vert_R$ can be estimated in a similar way.
 \end{proof}

\subsubsection{Global attraction}
Now we complete the proof of Theorem \ref{t4}.\\
{\it  i)}
Let $ Y (t) \in C (\R, \cE) $ be any  finite energy solution to the system \eqref {w3}--\eqref {q3}.
If  the attraction (\ref {WP9}) does not hold,  there is a sequence
$ t_k \to \infty $ for which
\be \la {tk}
{\rm dist} (Y (t_k), \cS) \ge \de> 0, \qquad k = 1,2, \dots
\ee
 Since $\cA$ is a compact set in $ {\cal E}_F $, (\ref{WPcA}) implies that
\be \la {tk2}
Y (t_ {k '}) \toEF \ov Y\in\cA, \qquad k' \to \infty
\ee
 for some subsequence $ k '\to \infty $.
It remains to check that $ \ov Y = S_ {q _ *} \in \cS $ with some $ q _ * \in Z $, since this contradicts (\ref {tk}).

 
First, $ \ov Y = S_q  $ with some
$ | q | \le \ov q_0 $
 by the definition (\ref{cA}). Similarly,
 by the continuity of the map $ W_t $ in $ {\cal E} _F $,
\be \la {tk3}
W_tY (t_ {k '}) = Y (t_ {k'} + t) \toEF W_t \ov Y = S_ {Q (t)}, \qquad k '\to \infty,
\ee
where $ Q (\cdot) \in C ^ 2 (\R, \cE) $, since $  W_t \ov Y\in C (\R, \cE) $ is a solution
to the system \eqref {w3}--\eqref {q3}.
Finally,
for $ S_ {Q (t)} $ to be a solution to the system (\ref {w3})--(\ref {q3}), there must be $ \dot Q (t) \equiv 0 $. Therefore, $ Q (t) \equiv q _ * \in Z $ and
$ \ov Y = S_ {q _ *} \in \cal S $.
\medskip\\
{\it  ii)} If  the set $Z$  is discrete in $\R^3$,  then solitary  manifold  $\cal S$ is discrete in ${\cal  E}_F$.
\hfill$\Box$

\subsection {Maxwell--Lorentz equations: radiation damping}\la{sML}
In \cite {KS2000}  global attraction to stationary states
similar to
(\ref {WP9}), (\ref {WP10}) was extended to the
Maxwell--Lorentz equations with  charged relativistic particle:
\begin {equation} \label {ML}
\!\!\!\!\!\!\!\!\left \{\! \!\begin {array} {l}
\dot E (x, t) \! = \! \rot B (x, t) - \dot q \rho (x \! - \! q), ~\dot B (x, t) \! = \! - \rot E (x, t), ~ \dv E (x, t) \! = \! \rho (x \! - \! q), ~ \dv B (x, t) \! = \! 0
\\
\\
\dot q (t) \! = \! \displaystyle \frac {p (t)} {\sqrt {1 \! + p ^ 2 (t)}}, ~~
\dot p (t) \! = \! \displaystyle \int [E (x, t) \! + E ^ {\rm ext} (x, t) \! + \dot q (t) \wedge (B (x, t) \! + B ^ {\rm ext} (x, t))] \rho (x \! - q (t) ) \, dx
\end {array} \right |.
\end {equation}
Here
$ \rho (x-q) $ is the particle charge density, $ \dot q \rho (x-q) $ is the corresponding current density,
and $ E ^ {\rm ext} = - \nabla \phi ^ {\rm ext} (x) $ and $ B ^ {\rm ext} = - \rot A ^ {\rm ext} (x) $ are external static Maxwell  fields. 
Similarly to \eqref {V}, we assume that {\it effective scalar potential} is confining:
\begin {equation} \label {VML}
V (q): = \int \phi ^ {\rm ext} (x) \rho (x-q) \, dx \to \infty, \qquad | q | \to  \infty.
\end {equation}
This system describes classical electrodynamics with “extended electron” introduced by Abraham \cite {A1902, A1905}.
In the case of a point electron, when $ \rho (x) = \delta (x) $, such system is not well defined. 
Indeed, in this case, any solutions $ E (x, t) $ and $ B (x, t) $ of Maxwell's equations (the first line of \eqref {ML}) 
are singular for $ x = q (t) $, and, accordingly, the integral in the last equation \eqref {ML} does not exist.

This system may be formally
presented in Hamiltonian form, if the fields are expressed in terms of potentials
$ E (x, t) = - \nabla \phi (x, t) - \dot A (x, t) $, $ B (x, t) = - \rot A (x, t) $, \ci {IKM2004}.
The corresponding Hamilton functional reads
\begin {equation} \label {HAMml}
\cH = \frac12 [\langle E, E \rangle + \langle B, B \rangle] + V (q) + \sqrt {1 + p ^ 2} =
\frac12 \int [E ^ 2 (x) + B ^ 2 (x)] \, dx + V (q) + \sqrt {1 + p ^ 2}.
\end {equation}
The Hilbert phase space  of finite energy states is defined as $ \cE: = L ^ 2  \oplus L ^ 2  \oplus \R ^ 3 \oplus \R ^ 3 $.
Under the condition (\ref {VML}) a solution $ Y (t) = (E (x, t), B (x, t), q (t), p (t)) \in C (\R, \cE) $ of finite  energy  
exists and is unique for any initial state $ Y (0) \in \cE $.

The Hamiltonian (\ref {HAMml}) is conserved along  solutions, what provides \textit {a priori estimates}, 
which play an important role in proving an attraction of the type (\ref {WP9}), (\ref {WP10}) in \cite {KS2000}.
Key role in the proof of  relaxation of an acceleration plays again (\ref {rel}), 
which is derived by a suitable generalization of our  methods \cite {KSK1997}: 
the expression of energy  radiated  to infinity  via Li\'enard--Wiechert  retarded potentials, its
representation in the form of a convolution and the use of Wiener's Tauberian theorem.
\medskip

In classical electrodynamics the relaxation \eqref {rel} known as {\bf radiation damping}.
It is traditionally derived from the Larmor and Li\'enard formulas
for  radiation power of a point particle  (see formulas (14.22) and  (14.24) of \cite{Jackson}),
but this approach ignores field feedback although it plays the key role in the  relaxation.
The main problem is that this reverse field reaction  for point particles is infinite.
A rigorous  sense of these classical calculations  was first found in \cite {KSK1997, KS2000}
for the Abraham model of  ``extended electron'' under the  Wiener condition \eqref {W1}.
Details can be found in \cite {S2004}.

\subsection{Wave equation with concentrated nonlinearities}\la{sCN}

Here we prove  
the result of \ci{NP} on
global attraction to solitary manifold for 3D wave equation with point coupling to an $\mathbf{U}(1)$-invariant  nonlinear oscillator.  
This goal  is inspired by fundamental mathematical  problem of an  interaction of point particles with the fields. 

Point interaction models were  first considered since 1933  in the papers of Wigner, Bethe and Peierls, Fermi and others
(see \cite{AF2018} for a detailed survey) and of Dirac  \cite{Dirac1938}. Rigorous mathematical results were obtained  since 1960 
by   Zeldovich, Berezin, Faddev,  Cornish, Yafaev,  Zeidler and others \cite{BF1961, Cornish1965, GKZ1998, Y1992, Z1960},  
and since 2000 by Noja, Posilicano, Yafaev and others \cite{NP, Y2017, ANO}.
\smallskip

We consider real wave field $\psi(x,t)$ coupled to a nonlinear oscillator 
\begin{equation}\label{W}
\left\{\begin{array}{c}
\ddot \psi(x,t)=\Delta\psi(x,t)+\zeta(t)\delta(x)\\\\
\lim\limits_{x\to 0}(\psi(x,t)-\zeta(t)G(x))=F(\zeta(t))
\end{array}\right|\quad x\in\R^3,\quad t\in\R,
\end{equation} 
where $G(x)=\ds\frac{1}{4\pi|x|}$ is the Green's function of the operator $-\Delta$ in $\R^3$. 
Nonlinear function  $F(\zeta)$ admits a potential:
\begin{equation}\label{FU}
 F(\zeta)=U'(\zeta), \quad\zeta\in\R,\quad U\in C^2(\R).
\end{equation} 
We assume that the potential is confining, i.e., 
\begin{equation}\label{bound-below}
U(\zeta)\to\infty, \quad \zeta\to\pm\infty.
\end{equation}
Еhe system (\ref{W}) admits stationary solutions  $\psi_q=qG(x)\in L^2_{loc}(\R^3)$, where $q\in Q:=\{q\in R: F(q)=0\}$.
We assume that the set $Q$ is nonempty and does not contain intervals, i.e.,
\begin{equation}\label{ab}
[a,b]\not\subset Q
\end{equation}
for any $a<b$.

As before, $ \Vert \cdot \Vert $ and $ \Vert \cdot \Vert_R $ denote the norms in $ L ^ 2= L ^ 2 (\R ^ 3)$
and  in $ L ^ 2 (B_R) $ respectively, and
$\Ho^1=\Ho^1(\R^3)$ is the completion of the  space $C_0^\infty(\R^3)$ in the norm $\Vert\nabla\psi(x)\Vert$.
Denote
\[
\Ho^2=\Ho^2(\R^3):=\{ f\in \Ho^1,~~\Delta f\in L^2\},\quad t\in\R.
\]
We  define the function sets
$$
D=\{\psi\in L^2:\psi(x)=\psi_{reg}(x)+\zeta G(x),~~\psi_{reg}\in \Ho^2,~~
 \zeta\in\R,~~\lim\limits_{x\to 0}\psi_{reg}(x)=F(\zeta)\}
$$
and
\[
\dot D=\{\pi\in L^2(\R^3):\pi(x)=\pi_{reg}(x)+\eta G(x),
~~\pi_{reg}\in \Ho^1, ~~\eta\in\R\}.
\]
Obviously, $D\subset\dot D$.
\begin{definition}\label{cDdef}
${\cal D}$ is the Hilbert manifold  of  states $\Psi=(\psi,\pi)\in D\times\dot D$. 
\end{definition}
 First, we  prove  global well-posedness for the system (\ref{W}).
\begin{theorem}\label{theorem-well-posedness}
Let conditions (\ref{FU}) and (\ref{bound-below}) hold. Then 
\begin{enumerate}
\item
For every initial data $\Psi_0=(\psi_0,\pi_0)\in {\cal D}$  the system
(\ref{W}) has a unique  solution 
$\Psi(t)=(\psi(t),\dot\psi(t))\in C(\R,{\cal D})$.
\item
The energy is conserved:
\be\la{enerc}
{\cal H}(\Psi(t)):=\frac 12 \Big(\Vert\dot\psi(t)\Vert^2+\Vert\nabla\psi_{reg}(t)\Vert^2\Big)+U(\zeta(t))=\const, \quad t\in\R.
\ee
\item
The following a priori bound holds
\begin{equation}\label{zeta-bound}
|\zeta(t)|\le C(\Psi_0), \quad t\in \R. 
\end{equation}
\end{enumerate}
\end{theorem}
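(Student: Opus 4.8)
The plan is to reduce the singular PDE \eqref{W} to a single first-order nonlinear ODE for the ``charge'' $\zeta(t)$, to solve that ODE locally in time by Picard iteration, and then to globalize using energy conservation together with the confining condition \eqref{bound-below}. I write the field in the causal form $\psi(x,t)=w(x,t)+u(x,t)$, where $u(x,t)=\zeta(t-|x|)G(x)$ is the outgoing spherical wave radiated by the point source $\zeta(t)\delta(x)$ (the retarded solution of $\ddot u-\Delta u=\zeta(t)\delta$), and $w$ solves the free wave equation $\ddot w=\Delta w$ and carries the regular part of the data. Since $G(x)=1/(4\pi|x|)$, Taylor expansion of $\zeta(t-|x|)$ at the origin gives $u(x,t)-\zeta(t)G(x)\to-\dot\zeta(t)/(4\pi)$ as $x\to0$, while $w(0,t)=:W(t)$ is a prescribed continuous function of $t$ (by the sharp Huygens principle in $\R^3$ the radiated wave never returns, so there is no feedback and $W$ is independent of $\zeta$). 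Substituting into the nonlinear matching condition in \eqref{W} yields
\be
\dot\zeta(t)=4\pi\big(W(t)-F(\zeta(t))\big).
\ee
By construction $\psi_{reg}:=\psi-\zeta G=w+\big(\zeta(t-|x|)-\zeta(t)\big)G$ then satisfies $\psi_{reg}(0,t)=W(t)-\dot\zeta(t)/(4\pi)=F(\zeta(t))$, so the constraint defining $D$ holds identically once $\zeta$ solves the ODE.

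Local well-posedness of the reduced problem is then routine: since $U\in C^2$, the force $F=U'$ is locally Lipschitz, so the non-autonomous ODE has a unique $C^1$ solution on a maximal interval. Reconstructing $\psi$ and $\dot\psi$ from $\zeta$ and $w$, I would verify that $\Psi(t)=(\psi(t),\dot\psi(t))$ stays in the manifold ${\cal D}$, i.e. that the regular parts remain in $\Ho^2\times\Ho^1$ with the correct traces at the origin. The distributional identity $\Delta\psi_{reg}=\ddot\psi_{reg}+\ddot\zeta\,G$, in which the two $\delta$-contributions cancel, shows that $\psi_{reg}(t)\in\Ho^2$ amounts precisely to $\ddot\zeta\in L^2_{loc}$ together with the standard energy estimates for the free propagator applied to $w$.

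To pass from local to global existence I would first establish energy conservation \eqref{enerc} on the interval of existence by differentiating $\cH(\Psi(t))=\frac12(\Vert\dot\psi\Vert^2+\Vert\nabla\psi_{reg}\Vert^2)+U(\zeta)$ and integrating by parts: the boundary flux produced at the origin by the Coulomb singularity cancels exactly against $\frac{d}{dt}U(\zeta)=F(\zeta)\dot\zeta$ by virtue of the matching condition. Granting \eqref{enerc}, the a priori bound \eqref{zeta-bound} is immediate, since the confining assumption \eqref{bound-below} makes $U$ bounded below and coercive; as the kinetic and gradient terms are nonnegative, $U(\zeta(t))\le\cH(\Psi_0)$ forces $|\zeta(t)|\le C(\Psi_0)$. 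This bound rules out blow-up of the ODE solution, so the maximal interval is all of $\R$ and the solution is global, in $C(\R,{\cal D})$.

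I expect the \emph{main obstacle} to be the rigorous justification of this reduction in the presence of the nonintegrable Coulomb gradient: because $\zeta G\notin\Ho^1$, the field $\psi$ carries infinite unrenormalized gradient energy, and one must work throughout with the renormalized energy counting only $\nabla\psi_{reg}$. Two points are delicate. First, one must arrange the decomposition $\psi=w+u$ so that $w$ absorbs the entire regular part of the initial data while the full $1/|x|$ singularity is carried by the retarded spherical wave and matches the prescribed initial singular coefficient $\zeta_0$; this fixes the history of $\zeta$ on $(-\infty,0]$. Second, showing $\psi_{reg}(t)\in\Ho^2$ requires $\ddot\zeta\in L^2_{loc}$, hence $\dot W\in L^2_{loc}$, which is not a pointwise trace but a trace estimate for the free field $w$ along the worldline $x=0$; this is the genuine harmonic-analytic input. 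Once these regularity and trace issues are settled, the local theory, the energy identity, and the a priori bound combine as above to give the three assertions.
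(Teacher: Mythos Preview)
Your plan is correct and matches the paper's proof in all essentials: the reduction to the first-order ODE $\dot\zeta/(4\pi)+F(\zeta)=W(t)$, the identification of the trace regularity $\dot W\in L^2_{loc}$ as the key analytic input for $\psi_{reg}\in\Ho^2$ (via $\ddot\zeta\in L^2_{loc}$), and the passage from local to global via the conserved renormalized energy and the confining potential.

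There are two organizational differences worth noting. First, the paper lets the \emph{free} wave $\psi_f$ carry the \emph{entire} initial data $(\psi_0,\pi_0)\in{\cal D}$, including the singular parts $(\zeta_0G,\eta_0G)$, and takes the spherical wave with a Heaviside cutoff, $\psi_S(x,t)=\theta(t-|x|)\,\zeta(t-|x|)/(4\pi|x|)$, so that $\psi_S$ has zero Cauchy data; the evolution of the singular initial data is then computed explicitly inside $\psi_f$. This avoids your step of prescribing a history $\zeta|_{(-\infty,0]}$ and the attendant bookkeeping of which piece of the initial singularity sits in $w$ versus $u$. Second, rather than invoking a maximal-interval argument, the paper first truncates $U$ outside the sublevel set $\{U\le\cH(\Psi_0)\}$ to make $\tilde F$ globally Lipschitz, obtains a uniform local existence time, proves energy conservation for the modified system (citing Noja--Posilicano), and then observes a posteriori that $|\zeta(t)|\le\Lambda(\Psi_0)$ so $\tilde F=F$ along the trajectory; this gives iteration with a fixed step size. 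Your maximal-interval route is equally valid once energy conservation is in hand; the paper's device just packages the bootstrap more cleanly.
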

\begin{proof}
It suffices to prove the theorem  for $t\ge 0$.
\smallskip

{\it Step i)} 
First we consider  free wave equation with initial data from $\cal D$:
\begin{equation}\label{CP1}
\ddot{\psi}_f(x,t) = \Delta\psi_f(x,t),
\quad (\psi_f(0), \dot\psi_f(0) ) =( \psi_0,  \pi_0)=(\psi_{0,reg},\pi_{0,reg})+(\zeta_0 G,\eta_0 G)\in{\cal D},
\end{equation}
where  $(\psi_{0,reg},\pi_{0,reg})\in \Ho^2\oplus\Ho^1$.
 \begin{lemma}\label{wdl}
There exists a unique solution $\psi_f(t)\in C([0;\infty), L^2_{loc})$ to (\ref{CP1}). Moreover, for any $t>0$
there exists the limit
$$
\lambda(t):=\lim\limits_{x\to 0}\psi_f(x,t)\in C[0,\infty),
$$ 
and 
\begin{equation}\label{dot_lam} 
\dot\lambda(t)\in L^2_{loc}[0,\infty).
\end{equation} 
\end{lemma}
 \begin{proof}
We split $\psi_f(x,t)$ as 
 \[
 \psi_f(x,t)=\psi_{f,reg}(x,t)+g(x,t),
 \]
 where $\psi_{f,reg}$ and $g$ are  solutions to  free wave equation with  initial data 
 $(\psi_{0,reg},\pi_{0,reg})$ and $(\zeta_0 G,\eta_0 G)$, respectively.
 First, $\psi_{f,reg}\in C([0,\infty),\Ho^2)$ by the energy conservation.  
 Hence,   $\lim\limits_{x\to 0}\psi_{f,reg}(x,t)$ exists for any $t\ge 0$ since $\Ho^2(\R^3)\subset C(\R^3)$. 
 
 Let us obtain an explicit formula for $g$. Note, that the function
 $h(x,t)=g(x,t)-(\zeta_0+\eta_0 t)G(x)$ satisfies
 \begin{equation}\label{Smir}
 \ddot h(x,t)=\Delta h(x,t)-(\zeta_0+\eta_0 t)\delta(x),\quad h(x,0)=0, ~~\dot h(x,0)=0.
 \end{equation} 
 The unique solution to (\ref{Smir}) is   spherical wave : 
\begin{equation}\la{Smir1}
h(x,t)=-\frac{\theta(t-|x|)}{4\pi|x|}(\zeta_0+\eta_0(t-|x|)),\quad t\ge 0.
\end{equation}
Here $\theta$  is the Heaviside function.  Hence,
$$
g(x,t)=h(x,t)+(\zeta_0+\eta_0 t)G(x)=-\frac{\theta(t-|x|)(\zeta_0+\eta_0(t-|x|))}{4\pi|x|}+\frac{\zeta_0+\eta_0 t}{4\pi|x|}\in C([0,\infty),L^2_{loc}(\R^3)),
$$
and then
$$
\lim\limits_{x\to 0}g(x,t)=\fr{\eta_0}{4\pi},\quad t>0.
$$
Finally, $\dot\psi_{f,reg}(0,t)\in L^2_{loc}([0,\infty))$ by \cite[Lemma 3.4]{NP}. Hence, (\ref{dot_lam}) follows.
\end{proof}
{\it Step ii)} Now we prove  local well-posedness. We modify the nonlinearity $F$ so that it becomes Lipschitz-continuous.
Define
$$
\Lambda(\Psi_0)=\sup\{|\zeta|: \zeta\in\R,\, U(\zeta)\le {\cal H}(\Psi_0)\}.
$$
We may pick a modified potential function $\tilde U(\zeta)\in C^2(\R)$, so that
\begin{equation}\label{Lambda1}
\left\{\begin{array}{ll}
\tilde U(\zeta)= U(\zeta),\quad |\zeta|\le\Lambda(\Psi_0),\\\\
\tilde U(\zeta)>{\cal H}(\Psi_0),\quad |\zeta|>\Lambda(\Psi_0),
\end{array}\right.
\end{equation}
and  the function  $\tilde F(\zeta)=\tilde U'(\zeta)$ is Lipschitz-continuous:
$$
|\tilde F(\zeta_1)-\tilde F(\zeta_2)|\le C|\zeta_1-\zeta_2|,\quad\zeta_1,\zeta_2\in\R.
$$
The following lemma  is trivial.
\begin{lemma}\label{LLWP}
For small $\tau>0$   the Cauchy problem
\begin{equation}\label{delay1}
\frac {1}{4\pi}\dot\zeta(t)+\tilde F(\zeta(t))=\lambda(t),\quad \zeta(0)=\zeta_{0}
\end{equation}
has a unique solution $\zeta\in C^1([0,\tau])$.
\end{lemma}
Denote 
$$
\psi_S(t,x):=\frac{\theta(t-|x|)}{4\pi|x|}\zeta(t-|x|), \quad t\in [0,\tau],
$$
with $\zeta$ from Lemma \ref{LLWP}.
\begin{lemma}\label{TLWP}
The function $\psi(x,t):= \psi_f(x,t)+\psi_S(x,t)$ is a unique  solution to the system
\begin{equation}\label{CP}
\left\{\begin{array}{c}
\ddot \psi(x,t)=\Delta\psi(x,t)+\zeta(t)\delta(x)\\\\
\lim\limits_{x\to 0}(\psi(x,t)-\zeta(t)G(x))=\tilde F(\zeta(t))\\\\
\psi(x,0)=\psi_0(x),~~\dot\psi(x,0)=\pi_0(x)
\end{array}\right|\quad x\in\R^3,\quad t\in [0,\tau],
\end{equation}
satisfying the condition
\begin{equation}\label{CDD}
(\psi(t),\dot\psi(t))\in {\cal D},\quad t\in [0,\tau].
\end{equation}
\end{lemma}
\begin{proof}
Initial conditions of  (\ref{CP}) follow from (\ref{CP1}).
Further, 
$$
\lim_{x \to 0}\, ( \psi(t,x)\!-\!\zeta(t) G(x))=\lambda(t)+
\lim_{x\to 0}\Big(\frac{\theta(t-|x|)\zeta(t-|x|)}{4\pi|x|}-\frac{\zeta(t)}{4\pi|x|}\Big)
=\lambda(t)-\frac{1}{4\pi}\dot\zeta(t)=\tilde F(\zeta(t)).
$$
Thus, the second equation of (\ref{CP}) is satisfied.  At last,
$$
\ddot\psi=\ddot\psi_f+\ddot\psi_S=\Delta\psi_f
+\Delta\psi_S+\zeta\delta=\Delta\psi+\zeta\delta
$$
and $\psi$ solves the first equation of (\ref{CP}) then.  
 
It remains to check (\ref{CDD}).
Note, that  the function  
$\varphi_{reg}(x,t)=\psi(x,t)-\zeta(t) G_1(x)=\psi_{reg}(x,t)+\zeta(t)(G(x)-G_1(x))$, where $G_1(x)=G(x)e^{-|x|}$,
satisfies
\[
\ddot\varphi_{reg}(x,t)=\Delta\varphi_{reg}(x,t)+(\zeta(t)-\ddot\zeta(t)) G_1(x)
\]
with  initial data from $H^2\oplus H^1$. Moreover, (\ref{dot_lam}) and  (\ref{delay1}) imply that
$\ddot\zeta\in L^2([0,\tau])$. Hence, 
\[
(\varphi_{reg}(x,t),\dot\varphi_{reg}(x,t))\in  H^2\oplus H^1,\quad t\in [0,\tau] 
\] 
by \cite[Lemma 3.2]{NP}. Therefore,
\[
\psi_{reg}(x,t)=\psi(x,t)-\zeta(t) G(x)=\varphi_{reg}(x,t)+\zeta(t)(G_1(x)-G(x))
\]
satisfies
$(\psi_{reg}(t),\dot\psi_{reg}(t))\in \Ho^2\oplus\Ho^1$, $t\in [0,\tau]$,
and (\ref{CDD}) holds then.

It remains to prove the uniqueness. 
Suppose now that there exists another solution
$\tilde\psi=\tilde\psi_{reg}+\tilde\zeta G$ to the system (\ref{CP}), 
with $(\tilde\psi,\dot {\tilde\psi})\in {\cal D}$. 
Then, by reversing the above argument, the second equation of (\ref{CP})
 implies that $\tilde\zeta$ solves the Cauchy problem (\ref{delay1}). 
 The uniqueness of the solution of (\ref{delay1}) implies that $\tilde\zeta=\zeta$. Then, defining
$$
\psi_S(t,x):=\frac{\theta(t-|x|)}{4\pi|x|}\zeta(t-|x|), \quad t\in [0,\tau],
$$
for $\tilde\psi_f=\tilde\psi-\psi_S$ one obtains
$$
\ddot{\tilde\psi}_f=\ddot{\tilde\psi}-\ddot\psi_S=\Delta\tilde\psi_{reg}-(\Delta\psi_S+\zeta\delta)=
\Delta(\tilde\psi_{reg}-(\psi_S-\zeta G))=\Delta\tilde\psi_f\,,
$$
i.e. $\tilde\psi_f$ solves the Cauchy problem (\ref{CP1}). 
Hence, $\tilde\psi_f=\psi_f$ by the uniqueness of the solution to (\ref{CP1}), and hence, $\tilde\psi=\psi$.
\end{proof}
According to \cite[Lemma 3.7]{NP}
\begin{equation}\label{cHFT}
{\cal H}_{\tilde F}(\Psi(t))=\Vert\dot\psi(t)\Vert^2+\Vert\nabla\psi_{reg}(t)\Vert^2
+\tilde U(\zeta(t))=const,\quad t\in [0,\tau].
\end{equation}
{\it Step iii)}  Now we are able to prove  the globall well-posedness.
First, note that
\begin{equation}\label{UtU}
\tilde U(\zeta(t))=U(\zeta(t)), \quad t\in [0,\tau].
\end{equation}
Indeed, 
${\cal H}_{F}(\Psi_0)\ge  U(\zeta_{0})$ by the definition of energy in (\ref{enerc}).
Therefore, $|\zeta_0|\le\Lambda(\Psi_0)$, and then  $\tilde U(\zeta_0)=U(\zeta_0)$, 
${\cal H}_{\tilde F}(\Psi_0)={\cal H}_{F}(\Psi_0)$.
Further, 
\[
{\cal H}_{F}(\Psi_0)={\cal H}_{\tilde F}(\Psi(t))\ge \tilde U(\zeta(t)),\quad t\in [0,\tau],
\]
and (\ref{Lambda1}) implies that
\begin{equation}\label{zeta_bound}
|\zeta(t)|\le\Lambda(\Psi_0),\quad t\in [0,\tau].
\end{equation}

Now we can replace $\tilde F$ by $F$ in  Lemma \ref{TLWP} and in (\ref{cHFT}).
The solution $\Psi(t)=(\psi(t),\dot\psi(t))\in {\cal D}$ constructed in Lemma  \ref{TLWP}
exists for $0\le t\le\tau$, where the time span $\tau$ in Lemma \ref{LLWP} depends only on $\Lambda(\Psi_0)$.
Hence, the bound (\ref{zeta_bound}) at $t=\tau$ allows us to extend the solution $\Psi$ to the time
interval $[\tau, 2\tau]$. We proceed by induction to obtain the solution for all $t\ge 0$.
Theorem \ref {theorem-well-posedness} is proved.
\end{proof}
The main result of \ci{NP} is as follows.
\begin{theorem}\label{main-theorem}

Let $\Psi(x,t)=(\psi(x,t),\dot\psi(x,t))$ be a solution to  (\ref{W})   with initial data  from ${\cal D}$.
Then 
\[
\Psi(x,t)\to (\psi_{q_{\pm}},\,0),\quad t\to\pm\infty,
\]
where $q_{\pm}\in Q$ and the convergence holds 
in $L^2_{loc}(\R^3)\oplus L^2_{loc}(\R^3)$.
\end{theorem}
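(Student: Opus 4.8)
The plan is to reduce the whole problem to a scalar damped ordinary differential equation for the oscillator variable $\zeta(t)$ and then to extract the relaxation $\dot\zeta(t)\to 0$, $F(\zeta(t))\to 0$ from the finiteness of the energy radiated to infinity, in the spirit of the radiation-damping analysis of Sections \ref{sWP}--\ref{sML}. By Theorem \ref{theorem-well-posedness} we already have a global solution $\Psi(t)=(\psi(t),\dot\psi(t))\in C(\R,{\cal D})$, the conserved energy (\ref{enerc}), and the a priori bound $|\zeta(t)|\le C(\Psi_0)$; it suffices to treat $t\to+\infty$. First I would use the splitting $\psi=\psi_f+\psi_S$ from Lemma \ref{TLWP}, with $\psi_f$ the free evolution of the initial data and $\psi_S(x,t)=\theta(t-|x|)\zeta(t-|x|)/(4\pi|x|)$ the retarded spherical wave emitted by the point source. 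Matching the regular parts at the origin reproduces the reduced equation
\be
\frac{1}{4\pi}\dot\zeta(t)+F(\zeta(t))=\lambda(t),\qquad \lambda(t):=\lim_{x\to 0}\psi_f(x,t),
\ee
which is the central object of the argument.

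The next, physically decisive, step is the dissipation estimate. The outgoing wave $\psi_S$ carries energy through a sphere $\partial B_R$ at the Larmor-type rate $|\dot\zeta(t-R)|^2/(4\pi)$, so integrating the local energy balance and using that the total energy is conserved and bounded below gives $\frac{1}{4\pi}\int_0^\infty|\dot\zeta(t)|^2\,dt\le {\cal H}(\Psi_0)-\inf U<\infty$, that is $\dot\zeta\in L^2(0,\infty)$. This is the analogue of the finiteness of the radiation integral (\ref{WP3.1}). Note that for a point coupling the Wiener condition (\ref{W1}) holds automatically, since $\hat\delta\equiv 1$, so, as in Remark \ref{rVW}(ii), no Tauberian argument is needed and the relaxation is direct: $\zeta$ is bounded and $F\in C^1$, so the reduced equation shows $\dot\zeta$ and $\ddot\zeta$ are bounded, hence $|\dot\zeta|^2$ is uniformly continuous, and Barbalat's lemma yields $\dot\zeta(t)\to 0$. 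In parallel I would establish $\lambda(t)\to 0$ from local energy decay (the strong Huygens principle) for the finite-energy free field $\psi_f$. The delicate point here is that the constant $\eta_0/(4\pi)$ produced near the origin by the singular part of the initial momentum is exactly cancelled by the $1/|x|$ tail that $\pi_{0,reg}$ must carry in order for $\pi_0$ to lie in $L^2$, so that the boundary value $\lambda(t)$ genuinely relaxes to zero. Feeding both limits into the reduced equation gives $F(\zeta(t))=\lambda(t)-\frac{1}{4\pi}\dot\zeta(t)\to 0$.

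At this stage $F(\zeta(t))\to 0$ does not yet force $\zeta(t)$ to converge, and this is where the hypothesis (\ref{ab}) enters and constitutes the main conceptual obstacle. Here I would argue with the $\omega$-limit set $\Omega\subset\R$ of the bounded trajectory $\zeta(\cdot)$. Since $\zeta$ is continuous and $\dot\zeta(t)\to 0$, the set $\Omega$ is a nonempty, compact, connected subset of $\R$, hence a (possibly degenerate) interval; by continuity of $F$ together with $F(\zeta(t))\to 0$ we get $\Omega\subset Q$. As (\ref{ab}) forbids $Q$ from containing any nondegenerate interval, $\Omega$ must reduce to a single point $q_+\in Q$, and therefore $\zeta(t)\to q_+$. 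The case $t\to-\infty$ produces a point $q_-\in Q$ in exactly the same way.

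Finally I would reassemble the field convergence in $L^2_{loc}$. For fixed $x\ne 0$ and $t\to+\infty$ one has $\psi_S(x,t)=\zeta(t-|x|)/(4\pi|x|)\to q_+G(x)=\psi_{q_+}(x)$, and this holds in $L^2_{loc}$ by dominated convergence since $\zeta$ is bounded and $G\in L^2_{loc}$. The free part satisfies $\psi_f(t)\to 0$ in $L^2_{loc}$: local energy decay forces $\nabla\psi_f(t)\to 0$ in $L^2_{loc}$, so $\psi_f(t)$ tends locally to a constant, and the same tail cancellation that killed $\lambda(t)$ shows this constant is $0$. Likewise $\dot\psi(t)=\dot\psi_f(t)+\dot\psi_S(t)\to 0$ in $L^2_{loc}$, using $\dot\zeta(t)\to 0$ for the $\psi_S$ contribution. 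This gives $\Psi(t)\to(\psi_{q_\pm},0)$ in $L^2_{loc}\oplus L^2_{loc}$, as claimed. I expect the hardest technical points to be the rigorous justification of $\lambda(t)\to 0$ (the local-energy-decay/tail-cancellation step) and the clean passage from $F(\zeta(t))\to 0$ to genuine convergence of $\zeta$ via (\ref{ab}).
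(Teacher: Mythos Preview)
Your approach is essentially correct, but it takes a more circuitous route than the paper and introduces a step that turns out to be unnecessary. Both arguments start from the same splitting $\psi=\psi_f+\psi_S$ and the same reduced first-order ODE $\tfrac{1}{4\pi}\dot\zeta+F(\zeta)=\lambda$, and both need $\lambda(t)\to 0$. The paper obtains $\lambda(t)\to 0$ by a clean cutoff trick: writing $(\psi_0,\pi_0)=(\varphi_0,p_0)+(\zeta_0\chi G,\eta_0\chi G)$ with $\chi\in C_0^\infty$, the compactly supported singular part is annihilated by the strong Huygens principle for $t>|x|+2$, and the remainder lies in $H^2\oplus H^1$ and decays locally by standard energy arguments. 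This sidesteps entirely the ``tail-cancellation'' issue you flag as delicate.

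The real divergence is in how one extracts $\zeta(t)\to q_+$. You go through a Larmor-type dissipation integral $\dot\zeta\in L^2(0,\infty)$, then Barbalat to get $\dot\zeta(t)\to 0$, then $F(\zeta(t))\to 0$, then an $\omega$-limit-set argument invoking~(\ref{ab}). The paper instead exploits directly that the reduced equation is \emph{first order}: once $\lambda(t)\to 0$, at any point $\zeta_0$ with $F(\zeta_0)>0$ one has $\dot\zeta=4\pi(\lambda-F(\zeta_0))<0$ for all large $t$, so the trajectory cannot cross $\zeta_0$ from left to right; this one-sided barrier, together with~(\ref{ab}), forces $\liminf\zeta=\limsup\zeta=q_+\in Q$. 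Only \emph{afterwards} does the paper read off $\dot\zeta(t)\to 0$ from the ODE. Your radiation-damping route is natural given Sections~\ref{sWP}--\ref{sML} and would work, but the dissipation bound $\int|\dot\zeta|^2<\infty$ needs either a careful flux computation through $\partial B_R$ with control of the $\psi_f$--$\psi_S$ cross terms, or a trace estimate of the type $\lambda\in L^2(0,\infty)$; the paper's barrier argument avoids this entirely, trading the Wiener/Tauberian machinery for an elementary phase-line observation that is available precisely because the point coupling collapses the dynamics to a scalar first-order ODE.
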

\begin{proof}
It suffices to  prove this theorem for $t\to+\infty$ only.
 By Lemma \ref{TLWP},  the solution $\psi(x,t)$ to (\ref{W}) with initial data $(\psi_0,\pi_0)\in {\cal D}$, 
can be represented as the sum
\begin{equation}\label{sol_sum}
\psi(x,t):= \psi_f(x,t)+\psi_S(x,t), \quad t\ge 0,
\end{equation}
where  {\it dispersive component} $\psi_f(x,t)$ is a unique solution to  (\ref{CP1}), and {\it singular component}
 $\psi_S(x,t)$ is a unique solution to  the following Cauchy problem 
\begin{equation}\label{CP2}
\ddot\psi_S(x,t)= \Delta\psi_S(x,t) +\zeta(t)\delta(x),
\quad \psi_S(x,0) = 0,\quad\dot\psi_S(x,0)=0.
\end{equation}
Here $\zeta(t)\in C^1_b([0,\infty))$  is a unique solution to 
\begin{equation}\label{delay}
\frac {1}{4\pi}\dot\zeta(t)+ F(\zeta(t))=\lambda(t),
\quad \zeta(0)=\zeta_0.
\end{equation}
Now we can prove local decay of $\psi_f(x,t)$.
\begin{lemma}\label{kk1}
For any $R>0$, the following convergence holds
\begin{equation}\label{psif-dec}
\Norm{(\psi_f(t),\dot\psi_f(t))}_{H^2(B_R)\oplus H^1(B_R)}\to 0,\quad t\to\infty. 
\end{equation}
Here $B_R$ is the ball of radius $R$.
\end{lemma}
\begin{proof}
We represent the initial data  $(\psi_0,\pi_0)=(\psi_{0,reg},~\pi_{0,reg})+(\zeta_0G, ~\eta_0 G)\in {\cal D}$ as
\[
(\psi_0,~\pi_0)=(\varphi_{0},~p_{0})+(\zeta_0\chi G, ~\eta_0 \chi G),
\]
where a cut-of function $\chi \in C_0^\infty(\R^3)$ satisfies
\begin{equation}\label{G1}
\chi (x)=\left\{\begin{array}{ll} 1,\quad |x|\le1,\\
0,\quad |x|\ge 2.
\end{array}\right.
\end{equation}
Let us show that
$$
(\varphi_{0},~p_{0})\in H^2\oplus H^1. 
$$
Indeed,
\[
(\varphi_{0},~p_{0})=(\psi_0-\zeta_0\chi G,~\pi_0-\eta_0\chi G)\in L^2\oplus L^2.
\]
On the other hand,
\[
(\varphi_{0},~p_{0})=(\psi_{0,reg}+\zeta_0(1-\chi)G,~\pi_{0,reg}+\eta_0(1-\chi)G)
\in \Ho^2\oplus\Ho^1.
\]
Now we split the dispersion component $\psi_f(x,t)$  as
$$
\psi_f(x,t)=\varphi(x,t)+\varphi_{G}(x,t), \quad t\ge 0,
$$
where $\varphi$ and  $\varphi_{G}$  are defined
as solutions to the free wave equation with initial data $(\varphi_{0},~p_{0})$ and  $(\zeta_0\chi G, \eta_0\chi G)$, respectively,
and study the decay properties of  $\varphi_{G}$ and $\varphi$.

First, by the strong Huygens principle
$$
\varphi_{G}(x,t)=0 ~~{\rm for}~~t\ge |x|+2.
$$
Indeed, 
$\varphi_G(x,t)=\zeta_0\dot\psi_G(x,t)+\eta_0\psi_G(x,t)$, where
$\psi_G(x,t)$ is the solution to the free wave equation with initial data 
$(0, \chi G)\in H^1\oplus L^2$, and   $\psi_G(x,t)$ satisfies the strong Huygens principle  by \cite[Theorem XI.87]{RS3}.
\medskip

It remains to check that
\begin{equation}\label{en-dec}
\Norm{(\varphi(t),\dot \varphi(t))}_{H^2(B_R)\oplus H^1(B_R)}\to 0,\quad t\to\infty,\quad\forall R>0,
\end{equation} 
For  $r\ge 1$ denote  $\chi_r=\chi(x/r)$, where $\chi(x)$ is  a cut-off function (\ref{G1}).
Denote $\phi_0=(\varphi_0,\pi_0)$. Let $u_r(t)$ and $v_r(t)$  be  solutions to  free wave
equations with the initial data $\chi_r \phi_0$ and $(1-\chi_r) \phi_0$, respectively, so that
$\varphi(t)=u_r(t)+v_r(t)$. By  the strong Huygens principle 
\[
u_r(x,t)=0 ~~{\rm for}~~t\ge |x|+2r.
\]
To conclude (\ref{en-dec}), it remains to note that
\begin{eqnarray}\nonumber
\Vert(v_r(t),\dot v_r(t))\Vert_{H^2(B_R)\oplus H^1(B_R)}&\le& C(R)
\Vert(v_r(t),\dot v_r(t))\Vert_{\Ho^2\oplus H^1}
= C(R)\Vert (1-\chi_r) \phi_0\Vert_{\Ho^2\oplus H^1}\\
\label{en-dec1}
&\le& C(R)\Vert (1-\chi_r) \phi_0\Vert_{H^2\oplus H^1}
\end{eqnarray}
by the energy conservation for the free wave equation. We also use the embedding
$\Ho^1(\R^3)\subset L^6(\R^3)$. The right-hand side of (\ref{en-dec1})
could be made arbitrarily small if $r\ge 1$ is  sufficiently large.
\end{proof}
Due to (\ref{sol_sum}) and  (\ref{psif-dec}), for the proof of Theorem~\ref{main-theorem}  
it suffices to verify
 the convergence of  $\psi_S(x,t)$ to stationary states:
\begin{lemma}\label{propfin}
Let $\psi_S(x,t)$ and $\zeta(t)$ be  solutions to  (\ref{CP2}) and (\ref{delay}), respectively. Then
\[
(\psi_S(t),\dot\psi_S(t))\to (\psi_{q_{\pm}},~0),\quad t\to\infty,
\]
 where $q_\pm\in Q$ and the convergence holds in $L^2_{loc}(\R^3)\oplus L^2_{loc}(\R^3)$.
\end{lemma}
\begin{proof}
The unique solution to (\ref{CP2}) is the  spherical wave 
\begin{equation}\label{psiSf}
\psi_S(x,t)=\frac{\theta(t-|x|)}{4\pi|x|}\zeta(t-|x|),\quad t\ge 0,
\end{equation}
cf. (\ref{Smir})--(\ref{Smir1}).
Then a priori bound (\ref{zeta-bound}) and equation (\ref{delay}) imply that 
\[
(\psi_S(t),\dot\psi_S(t))\in   L^2(B_R)\oplus L^2(B_R),\quad 0\le  R<t.
\]
First, we prove  the convergence of $\zeta(t)$.
From (\ref{zeta-bound}) it follows that $\zeta(t)$ has the upper and lower limits:
\begin{equation}\label{low-upper}
\underline{\lim}_{t\to\infty}\zeta(t)=a,\quad
\overline{\lim}_{t\to\infty}\zeta(t)=b.
\end{equation}
Suppose  that $a<b$.
Then the  trajectory $\zeta(t)$ oscillates between $a$ and $b$.
Assumption (\ref{ab}) implies that   $F(\zeta_0)\not =0$ for some $\zeta_0\in (a,b)$. 
For the concreteness, let us assume that $F(\zeta_{0})>0$.
The convergence (\ref{psif-dec}) implies that
\begin{equation}\label{lam-dec}
\lambda(t)=\psi_f(0,t)\to 0,\qquad t\to\infty.
\end{equation}
Hence, for  sufficiently large $T$ we have
\[
-F(\zeta_{0})+\lambda(t)<0, \quad t\ge T.
\]
Then for $t\ge T$ the transition of the trajectory from left to right through the point $\zeta_0$ is impossible by (\ref{delay}).
Therefore, $a=b=q_+$, where $q_+\in Q$ since  $F(q_+)=0$ by (\ref{delay}). Hence (\ref{low-upper}) implies
 \begin{equation}\label{zetalim}
\zeta(t)\to q_+,\quad t\to\infty,
\end{equation}
Further, 
\begin{equation}\label{tetalim}
\theta(t-|x|)\to 1, \quad t\to\infty
\end{equation}
uniformly in  $|x|\le R$.
Then (\ref{psiSf}) and (\ref{zetalim})  imply that
\[
\psi_S(t)\to q_+G, \quad t\to\infty,
\]
where the convergence holds in $L^2_{loc}(\R^3)$.
It remains to verify the convergence of $\dot\psi_S(t)$.
We have
\[
\dot\psi_S(x,t)=\frac{\theta(t-|x|)}{4\pi|x|}\dot\zeta(t-|x|), \quad |x|<t.
\]
From  (\ref{delay}), (\ref{lam-dec}) and (\ref{zetalim})  it follows  that
$\dot\zeta(t)\to 0$ as $ t\to\infty$.
Then
$$
\dot\psi_S(t)\to 0,\qquad t\to\infty
$$
in $L^2_{loc}(\R^3)$ by (\ref{tetalim}).
\end{proof}

This completes the proof of Theorem~\ref{main-theorem}.
\end{proof}
\subsection {Remarks}
All above results on global attraction to stationary states
refer to “generic” systems with a trivial symmetry group, which are characterized by
a suitable
discreteness of attractors, by Wiener condition, etc.
\smallskip

Global attraction to stationary states (\ref {ate}) resembles similar asymptotics (\ref {at11}) for dissipative systems.
 However, there are a number of fundamental differences:
 \smallskip \\
I. In dissipative systems attractor always consists of {\it stationary states},  the attraction (\ref {at11})  {\it holds
only as $ t \to + \infty $}, and this attraction is due to the absorption of energy and can  be in global norms.
Such attraction also {\it holds for all finite-dimensional dissipative systems}.
\medskip \\
II. On the other hand, in Hamiltonian systems {\it attractor may differ from the set of stationary states},
as will be seen later. In addition,  energy absorption in these systems is absent, and the attraction \eqref {ate} to
stationary states  is due to the {\it radiation of energy to infinity}, which plays the role of energy absorption.
This attraction takes place both {\it as $ t \to \infty $, and as $ t \to - \infty $}, and it holds {\it only in local seminorms}.
 Finally, it {\it cannot hold for any finite-dimensional Hamiltonian systems} 
(except for the case when the Hamiltonian is an identical constant).
\newpage
\setcounter{equation}{0}
\section {Global attraction to solitons} \la {s2}

As already mentioned in the introduction,  the soliton asymptotics  \eqref {attN} with several solitons
were discovered for the first time numerically in 1965 for KdV  by Kruskal and   Zabusky. 
Later on such asymptotics were proved by the method of {\it   inverse scattering problem} 
for nonlinear {\it integrable} Hamiltonian translation-invariant equations 
by Ablowitz, Segur, Eckhaus, Van Harten and others (see \cite {EvH}).

Here we present the   results on global attraction to one soliton \eqref {att} for nonlinear translation-invariant  {\it non-integrable} 
Hamiltonian equations. Such attraction  was  proved first in \ci {KS1998} 
and in \ci {IKM2004}
for  charged relativistic particle coupled  to  the scalar wave field
and to the  Maxwell field respectively.
\subsection {Translation-invariant wave-particle system}\la{sTIWP}
In \cite {KS1998}
the system \eqref {w3}--\eqref {q3} was considered  in the case of zero potential $ V(x)\equiv 0 $:
\begin {equation} \label {wq3}
\ddot \psi (x, t) = \Delta \psi (x, t) - \rho (x-q), \, \, \, x \in \mathbb R ^ 3
; \quad
\dot q = \displaystyle \frac {p} {\sqrt {1 + p ^ 2}}, \, \, \dot p = - \displaystyle \int \nabla \psi (x, t) \rho (x-q) \, dx,
\end {equation}
which can be written in the Hamilton form (\ref{HAM}).
The Hamiltonian of this system is given by  (\ref {Ham}) with $ V = 0 $, and it is conserved along trajectories.
By Lemma \ref {lex} with $ V (x) \equiv 0 $, global solutions exist for all initial data
$ Y (0) \in \cE $, and a priori estimates (\ref {apr}) hold.

This system is translation-invariant,  so the corresponding full momentum
\begin {equation} \label {P}
P = p- \int \pi (x) \nabla \psi (x) \, dx
\end {equation}
is also conserved.
Respectively, the system   \eqref {wq3} admits traveling-wave type solutions (solitons)
\begin {equation} \label {solit}
\psi_ {v} (x-a-v t), \, \, \,
q (t) = a+vt, \, \, \, p_v = v / \sqrt {1-v ^ 2},
\end {equation}
where $ v, a \in \R ^ 3 $, and $ | v | <1 $.
These functions  are easily determined: for $ | v | <1 $ there is a unique function $ \psi_v $ which makes (\ref {solit})
a solution to (\ref {wq3}),  
\be \la {3 '}
\psi_v (x) = - \int d ^ 3y (4 \pi | (y-x) _ \| + \lambda (y-x) _ \bot |) ^ {- 1} \rho (y), 
\ee
where we set $\lambda = \sqrt {1-v ^ 2}$ and  $x=x _ \|+x_\bot $, where $x _ \|  \| v$ and $x_\bot \bot v$ for $x\in \R^3$.
Indeed, 
substituting (\ref{solit}) into the wave equation of (\ref{wq3}), we get the stationary equation 
\be\la{3.9'} 
(v\cdot\nabla)^2\psi_v(x)=\Delta\psi_v(x)-\rho(x).
\ee 
Through the Fourier transform  
\be\la{A1} 
\hat\psi_v(k)=-\hat\rho(k)/(k^2-(v\cdot k)^2),
\ee 
which implies \eqref{3 '}.
The set of all solitons forms  $ 6 $ -dimensional \textit {soliton submanifold} in the Hilbert phase space  $ \cE $:
\begin {equation} \label {cS}
\cS = \{S_ {v, a} = (\psi_ {v} (x-a),\, \pi_ {v} (x-a),\, a, \,p_v): \quad v, \,a \in \mathbb R ^ 3, \quad | v | <1 \},
\end {equation}
where $ \pi_v: = - v \na \psi_v $.
Recall that the spaces $\cE$ and $\cE_ \si $ are defined in Definition \ref{dcE}.
The following theorem is the main result of \cite {KS1998}.
\begin {theorem} \label {t6}
Let the Wiener condition \eqref {W1} hold  and $ \si> 3/2 $. Then for any initial state $ Y (0) \in \cE_ \si $,
the correspoding solution $ Y(t) = (\psi (t), \pi (t), q (t), p (t)) $ of
the system \eqref {wq3} converges to the soliton manifold $ \cS $ in the following sense:
\begin {equation} \label {dq}
\ddot q(t)\to 0,\quad\dot q (t) \to v_ \pm, \qquad t \to \pm \infty,
\end {equation}
\begin {equation} \label {ssol}
(\psi (x, t), \dot \psi (x, t))=(\psi_ {v_ \pm} (x-q (t)), \pi_ {v_ \pm} (x-q (t))) +(r_ \pm (x, t), s_ \pm (x, t)),
\end {equation}
where  the remainder  decreases locally in the {\bf comoving frame}: for each $ R> 0 $
\begin {equation} \label {ssolh}
\Vert \nabla r_ \pm (q (t) + x, t) \Vert_R + \Vert r_ \pm (q (t) + x, t) \Vert_R + \Vert s_ \pm (q (t) + x, t ) \Vert_R \to 0, \qquad t \to \pm \infty.
\end {equation}
\end {theorem}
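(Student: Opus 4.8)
The plan is to carry over the argument of Section~\ref{sWP} as far as it goes and then add the two ingredients special to the translation-invariant case: a definite limiting velocity $v_\pm$, and the identification of the near-field with the soliton field in the comoving frame. I treat only $t\to+\infty$. Since $V\equiv 0$ satisfies \eqref{V0}, Lemma~\ref{lex} gives, for $Y(0)\in\cE_\si\subset\cE$, a unique global solution $Y(t)\in C(\R,\cE)$ with conserved energy \eqref{Ham} (now with $V=0$) and the a~priori bounds \eqref{apr}: $\sup_t[\Vert\na\psi(t)\Vert+\Vert\pi(t)\Vert]<\infty$ and $\sup_t|\dot q(t)|=\ov v<1$. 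The essential difference from Theorem~\ref{t4} is that $q(t)$ need not remain bounded, so one cannot deduce $\dot q\to 0$ from $\ddot q\to 0$; instead the particle may escape to infinity with a nonzero limiting velocity.

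The relaxation $\ddot q(t)\to 0$ I would obtain essentially as in Section~\ref{sWP}. Splitting $\psi=\psi_r+\psi_K$, the Li\'enard--Wiechert asymptotics of Lemma~\ref{WPKR} and the free-field bound of Lemma~\ref{WPI00}, together with the a~priori boundedness of the energy flux through large spheres, give the finite radiation integral \eqref{WP3.1}. Rewriting it as the convolution \eqref{WP4.6} and applying Pitt's extension of the Wiener Tauberian theorem under the Wiener condition \eqref{W1} yields $\ddot q(t)\to 0$, exactly as in \eqref{WP4.7'}; as observed in Remark~\ref{rVW}(i) this uses only \eqref{V0} and so is insensitive to $V\equiv 0$. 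The same radiation integral also provides the quantitative bound $\int_0^\infty|\ddot q(t)|^2\,dt<\infty$ through the Larmor representation of the radiated power.

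The crux, and the step I expect to be hardest, is to upgrade $\ddot q\to 0$ to a genuine limit $\dot q(t)\to v_\pm$: the soliton manifold \eqref{cS} is a continuum, so relaxation of the acceleration alone cannot pin the velocity, in contrast to the discrete attractors of Theorems~\ref{t1} and~\ref{t4}. I would reduce this to convergence of the momentum $p(t)$ and study $\dot p=-\int\na\psi\,\rho(x-q)\,dx=F_r(t)+F_K(t)$ through the splitting $\psi=\psi_r+\psi_K$. The free-field force $F_K$ is estimated by evaluating the pointwise bounds that follow from the Kirchhoff formula \eqref{kf} (as in the proof of Lemma~\ref{WPI00}) on the moving ball $|x-q(t)|\le R_\rho$, which stays strictly inside the light cone because $\ov v<1$; for $\si>3/2$ this gives $|F_K(t)|\le C t^{-\si}$, hence $F_K\in L^1(0,\infty)$. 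For the retarded self-force one first notes that the instantaneous soliton self-force $-\int\na\psi_v(x)\,\rho(x)\,dx$ vanishes, its Fourier integrand $ik\,|\hat\rho(k)|^2/(k^2-(v\cdot k)^2)$ being odd, so that $F_r$ reduces to a radiation-reaction functional of the trajectory. The remaining and genuinely technical point is to show that $\int_0^\infty F_r(t)\,dt$ converges: the radiation-reaction contribution is, up to integrable remainders, a total time derivative of a bounded quantity that tends to zero with $\ddot q$. Granting this, $p(t)\to p_\pm$ and therefore $\dot q(t)\to v_\pm=p_\pm/\sqrt{1+p_\pm^2}$ with $|v_\pm|\le\ov v<1$, which is \eqref{dq}. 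Making the convergence of $\int_0^\infty F_r\,dt$ rigorous for the extended-charge self-force is the real obstacle.

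Finally, with $\dot q\to v_\pm$ and $\ddot q\to 0$ available, the comoving-frame asymptotics \eqref{ssol}--\eqref{ssolh} follow from the retarded representation \eqref{WPretp}. Over the retardation window the velocity is close to $v_\pm$ and the acceleration is small, so estimating \eqref{WPretp} and $\pi_r=\dot\psi_r$ near the particle gives $\psi_r(q(t)+x,t)\to\psi_{v_\pm}(x)$ and $\pi_r(q(t)+x,t)\to\pi_{v_\pm}(x)=-v_\pm\cdot\na\psi_{v_\pm}(x)$ in every seminorm $\Vert\cdot\Vert_R$, while $\psi_K(q(t)+x,t)\to 0$ locally by the timelike-ball decay already used for $F_K$. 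Setting $r_\pm=\psi-\psi_{v_\pm}(\cdot-q(t))$ and $s_\pm=\dot\psi-\pi_{v_\pm}(\cdot-q(t))$ then gives the splitting \eqref{ssol} with the remainder \eqref{ssolh} tending to zero, as claimed.
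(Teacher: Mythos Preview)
Your treatment of $\ddot q(t)\to 0$ is correct and matches the paper: the radiation integral, the convolution representation, and the Wiener Tauberian argument go through verbatim for $V\equiv 0$, exactly as you say.

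Where you diverge from the paper is in the passage from $\ddot q\to 0$ to $\dot q\to v_\pm$, and here your proposal has a real gap that you yourself flag. You attempt to show $p(t)$ converges by proving $\int_0^\infty F_r(t)\,dt$ converges, appealing to a vague claim that the retarded self-force is ``up to integrable remainders a total time derivative of a bounded quantity.'' For the extended-charge Abraham model this is not obvious: the self-force has no clean Abraham--Lorentz expansion without a slow-motion assumption, and $\ddot q\to 0$ with $\int|\ddot q|^2<\infty$ does not by itself force the self-force to be integrable. The paper deliberately avoids this route.

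Instead, the paper's argument runs through \emph{orbital stability} and the \emph{strong Huygens principle}. First, a canonical change of variables $(\psi,\pi,q,p)\mapsto(\Psi,\Pi,Q,P)$ to the comoving frame (Definition~\ref{TD}) makes $Q$ cyclic; the soliton $(\psi_v,\pi_v)$ is then shown to be the unique global minimum of the reduced Hamiltonian $\cH_{P(v)}$ (Lemma~\ref{SCP}), which yields orbital stability (Lemma~\ref{OSS}). Second, given any $\de>0$, one uses $\ddot q\to 0$ to replace the trajectory on $(-\infty,t_\ve]$ by a straight line and defines a modified solution $Y_*$ via the retarded potential of this modified trajectory (Proposition~\ref{5.1}). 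By Huygens, $Y_*$ coincides with $Y$ in the forward cone from $t_*$, while outside that cone $Y_*(t_*)$ is exactly a soliton, and inside a bounded ball it is $\de$-close to one. Orbital stability then traps $Y_*$ (hence $Y$, in the comoving seminorms) within $\ve$ of a soliton for all $t>t_*$; since $\ve$ is arbitrary, the oscillation of $\dot q(t)$ at infinity is zero, giving $\dot q\to v_+$ and \eqref{ssolh} simultaneously.

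So the convergence $\dot q\to v_\pm$ is obtained \emph{indirectly}, as a consequence of the field asymptotics and orbital stability, rather than by integrating the equation of motion. Your direct force-integration approach might be salvageable, but it would require a substantial new estimate on the extended-charge self-force that the paper neither needs nor provides.
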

The theorem means that, in particular,
\be \la {tsol}
\psi (x, t) \sim \psi_v (x-v_ \pm t + \vp_ \pm (t)), \quad {\rm where}\quad \dot \vp_ \pm (t) \to 0, \quad t \to \pm \infty.
\ee
The proof  \cite {KS1998} relies on a) relaxation of acceleration  (\ref {rel}) in the case $ V = 0 $ (see Remark \ref {rVW} i)), 
and b) on the \textit {canonical change of variables} to the comoving frame.
The key role is played by the fact that the  soliton $ S_ {v, a} $ minimizes the Hamiltonian (\ref {Ham}) (in the case $ V = 0 $)
with a fixed total momentum \eqref {P}, which implies \textit {orbital stability of solitons} \cite {GSS87, GSS90}.
In addition, the proof essentially relies on the \textit {strong Huygens principle} for the three-dimensional wave equation.
\smallskip

Before entering into more precise and technical discussion,  it may be useful to give  general idea of our strategy. 
As was mentioned above, the total  momentum (\ref{P}) is conserved
because of translation invariance. 

We transform the system (\ref{wq3})
to new variables $(\Psi(x),\Pi(x),Q,P)=(\psi(q+x),\pi(q+x),q,P(\psi,q,\pi,p))$.
The key role in our strategy is played by
the fact that this transformation is canonical, which is  proved in
Section \ref{secan}.
Through this canonical transformation 
one obtains the new Hamiltonian 
\begin{eqnarray*} 
&&\cH_P(\Psi,\Pi)=\cH(\psi,\pi,q,p) 
\\
&&=\int\,d^3x\,\Big(\fr 12 |\Pi(x)|^2+\fr 12|\nabla\Psi(x)|^2+ \Psi(x)\rho(x)\Big) 
+\Big[1+\Big( P+\int\,d^3x\,\Pi(x)\,\nabla\Psi(x)\Big)^2\,\Big]^{1/2}\!\!\!\!\!. 
\nonumber 
\end{eqnarray*} 
Since $Q$ is the cyclic coordinate (i.e., the Hamiltonian $\cH_P$ does not depend on $Q$), we may regard $P$ as a fixed parameter and consider 
the reduced system for $(\Psi,\Pi)$ only. Let us define 
\be\la{6'} 
\pi_v(x)=-v\cdot\nabla\psi_v(x),\quad P(v)=p_v+\int\,d^3x\,v\cdot\nabla\psi_v(x)\,\nabla\psi_v(x)\,, 
\,\,p_v=v/(1-v^2)^{1/2}\,\,. 
\ee 
We will prove that  $(\psi_v,\pi_v)$ is the unique  critical point and global minimum of $H_{P(v)}$\,. Thus,  if initial data 
is close to $(\psi_v,\pi_v)$, then corresponding solution must remain close forever by conservation of energy, which translates into the orbital stability of the solitons. 
Here we follow the ideas of the Bambusi and  Galgani paper \ci{BG},
were the orbital stability of  solitons for the Maxwell--Lorentz
equations was proved for the first time.
For a general class of nonlinear wave equations with symmetries such 
approach to
orbital stability of the 
solitons was developed in the well known work \ci{GSS}.

However, the orbital stability by itself is not enough. It only ensures that initial states, close to a soliton, remain so, but does not 
yield the convergence of $\dot q(t)$ in (\ref{dq}), and even less the  asymptotics (\ref{ssol}), (\ref{ssolh}). Thus we need an additional, 
not quite obvious argument which combines the relaxation (\ref{rel}) with the orbital stability in order 
to establish the soliton-like  asymptotics (\ref{dq}), (\ref{ssol}), (\ref{ssolh}).
As one essential input we will use the strong Huygens principle for  wave equation. 

\subsubsection{Canonical transformation and reduced system} 
Since the total momentum is conserved, it is natural to use  $P$ as a new coordinate. To maintain the symplectic structure we 
have to complete this coordinate  to
 a canonical transformation of the Hilbert phase space $\cE$. 
\begin{definition}\la{TD} 
Let the transform $T:{\cal E}\to{\cal E}$ be defined by 
\be\la{3.1} 
T:Y=(\psi,\pi,q,p)\mapsto Y^T=( \Psi(x),\Pi(x),Q,P) =(\psi(q+x),\pi(q+x),q,P(\psi,q,\pi,p))\,\,, 
\ee 
where $P(\psi,q,\pi,p)$ is the total momentum (\ref{P}). 
\end{definition} 
\begin{remarks} 
i) $T$ is continuous on ${\cal E}$ and Fr\'echet differentiable at points $Y=(\psi,q,\pi,p)$ with sufficiently 
smooth $\psi(x),\pi(x)$, but it is not everywhere differentiable. 
 \\
ii) 
In  the $T$-coordinates the solitons $Y_{v,a}(t)=(\psi_v(x-a-vt), \pi_v(x-a-vt), q=a+vt, p_v)$  are stationary except for the coordinate $Q$, 
\be\la{3.2} 
TY_{v,a}(t)=(\psi_v(x),\pi_v(x), a+vt, P(v)) 
\ee 
with  the total momentum $P(v)$ of the soliton  defined in (\ref{6'}). 
\end{remarks} 
Denote  $\cH^T(Y)=\cH(T^{-1}Y)$ for $Y=(\Psi,\Pi,Q,P)\in{\cal E}$\,. Then 
\begin{eqnarray*} 
&&\cH^T(\Psi,\Pi,Q, P)=\cH_P(\Psi,\Pi)=\cH(\Psi(x-Q),\Pi(x-Q),Q, P+\int\,d^3x\,\Pi(x)\,\nabla\Psi(x)) 
\\
&&=\int\,d^3x\,\big[\fr 12 |\Pi(x)|^2+\fr 12|\nabla\Psi(x)|^2+ \Psi(x)\rho(x)\big] +\Big(1+\big[P+\int\,d^3x\,\Pi(x)\,\nabla\Psi(x)\big]^2\Big)^{1/2}\!\!. 
\end{eqnarray*} 
The functionals $\cH^T$ and $\cH$ are Fr\'echet-differentiable on the phase space ${\cal E}$. 
\bp\la{CTS} 
Let $Y(t)\in C(\R,{\cal E})$  be a solution to the system (\ref{wq3}). Then 
$$Y^T(t):=TY(t)=(\Psi(t),\Pi(t), q(t), p(t))\in C(\R,{\cal E})$$
 is a solution to the Hamiltonian system 
\be\la{3.4} 
\left\{\ba{rcl}
\dot\Psi=D_\Pi \cH^T, & \dot\Pi=-D_\Psi \cH^T\\
\dot Q=D_P \cH^T, & \dot P=-D_Q \cH^T
\ea\right|.
\ee
\ep
\begin{proof} 
The equations for $\dot \Psi$, $\dot \Pi$ and $\dot Q$ can be checked by direct computation, while the one for $\dot P$ 
follows from conservation of the total momentum  (\ref{P}) since the Hamiltonian $\cH^T$ does not depend on $Q$.
\end{proof}

\br
Formally, Proposition \ref{CTS} follows from the fact that  $T$ is a canonical transform, see Section \ref{secan}.
\er
Recall that $Q$ is a cyclic coordinate. Hence, the system (\ref{3.4}) is equivalent to a reduced Hamiltonian system for 
$\Psi$ and $\Pi$ only, which can be written as 
\be\la{3.5} 
\dot\Psi=D_\Pi \cH_P, \qquad \dot\Pi=-D_\Psi \cH_P.
\ee
Due to (\ref{3.2}), the soliton $(\psi_v,\pi_v)$ is a stationary solution to  (\ref{3.5})
with $P=P(v)$. Moreover,
for every $P\in\R^3$, the functional $\cH_P$ is Fr\'echet differentiable on the Hilbert space 
${\cal F}=\Ho^1\oplus L^2$\,. 
Hence, (\ref{3.5})  implies that the soliton is a critical point of  $\cH_{P(v)}$ on ${\cal F}$\,. 
 The next lemma demonstrates that $(\psi_v,\pi_v)$ is a global minimum
of $\cH_{P(v)}$ on ${\cal F}$.
\begin{lemma}\la{SCP} 
i) For every $v\in \R^3$ with $|v|<1$ the functional $\cH_{P(v)}$ has the lower bound 
\be\la{3.6} 
\cH_{P(v)}(\Psi,\Pi)-\cH_{P(v)}(\psi_v,\pi_v)\geq \fr{1-|v|}2\Big(\Vert\Psi-\psi_v\Vert^2+\Vert\Pi-\pi_v\Vert^2\Big),
\quad  (\Psi,\Pi)\in {\cal F}.
\ee 
ii) $\cH_{P(v)}$ has no other critical points on ${\cal F}$  except  point $(\psi_v,\pi_v)$.
\end{lemma}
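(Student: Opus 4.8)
The plan is to expand $\cH_{P}$ with $P:=P(v)$ around the soliton, writing $\Psi=\psi_v+\phi$, $\Pi=\pi_v+\eta$ with $(\phi,\eta)\in\cF$, and to split $\cH_P(\Psi,\Pi)=V(\Psi,\Pi)+f(J(\Psi,\Pi))$, where $V(\Psi,\Pi)=\int(\tfrac12|\Pi|^2+\tfrac12|\nabla\Psi|^2+\Psi\rho)\,dx$ is the field energy, $J(\Psi,\Pi)=P+\int\Pi\,\nabla\Psi\,dx\in\R^3$, and $f(J)=\sqrt{1+|J|^2}$. Since $V$ is exactly quadratic, its increment $V(\Psi,\Pi)-V(\psi_v,\pi_v)$ is a linear form in $(\phi,\eta)$ plus $\tfrac12(\Vert\nabla\phi\Vert^2+\Vert\eta\Vert^2)$. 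For the relativistic term I will use that $f$ is convex on $\R^3$ (its Hessian is positive semidefinite), so the tangent-plane bound $f(J_0+\delta J)\ge f(J_0)+\nabla f(J_0)\cdot\delta J$ holds, where $J_0:=J(\psi_v,\pi_v)$ and $\delta J:=J(\Psi,\Pi)-J_0$. Using the definition \eqref{6'} of $P(v)$ together with $\pi_v=-v\cdot\nabla\psi_v$ gives $J_0=p_v$ and hence $\nabla f(J_0)=p_v/\sqrt{1+|p_v|^2}=v$, so the relativistic increment is bounded below by $v\cdot\delta J$ in one stroke.

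Next I would collect the linear contributions and show they cancel. Splitting $\delta J=L+Q$ with linear part $L=\int(\pi_v\nabla\phi+\eta\nabla\psi_v)\,dx$ and bilinear part $Q=\int\eta\,\nabla\phi\,dx$, the full linear form is $\int\eta(\pi_v+v\cdot\nabla\psi_v)\,dx+\int[\nabla\psi_v\cdot\nabla\phi+\rho\phi+\pi_v(v\cdot\nabla\phi)]\,dx$. The $\eta$-terms vanish because $\pi_v+v\cdot\nabla\psi_v=0$; integrating the $\phi$-terms by parts combines them into $\int[-\Delta\psi_v+\rho+(v\cdot\nabla)^2\psi_v]\phi\,dx$, which vanishes by the soliton equation \eqref{3.9'}. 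What remains is
\[
\cH_{P}(\Psi,\Pi)-\cH_{P}(\psi_v,\pi_v)\ \ge\ \tfrac12\Vert\eta\Vert^2+\tfrac12\Vert\nabla\phi\Vert^2+\int\eta\,(v\cdot\nabla\phi)\,dx,
\]
and since $\bigl|\int\eta\,(v\cdot\nabla\phi)\,dx\bigr|\le|v|\,\Vert\eta\Vert\,\Vert\nabla\phi\Vert\le\tfrac{|v|}2(\Vert\eta\Vert^2+\Vert\nabla\phi\Vert^2)$, the right-hand side is at least $\tfrac{1-|v|}2(\Vert\nabla\phi\Vert^2+\Vert\eta\Vert^2)$, which is \eqref{3.6}. (Here the first norm is read as the $\Ho^1$-norm $\Vert\nabla(\Psi-\psi_v)\Vert$; the integrations by parts are legitimate because $\rho\in C_0^\infty$ makes $\psi_v$ smooth with $\nabla\psi_v,\nabla^2\psi_v\in L^2$.)

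For part (ii) I would write the Euler--Lagrange equations of $\cH_P$. Setting $u:=J(\Psi,\Pi)/\sqrt{1+|J(\Psi,\Pi)|^2}$, so $|u|<1$, the conditions $D_\Pi\cH_P=0$ and $D_\Psi\cH_P=0$ read $\Pi=-u\cdot\nabla\Psi$ and $-\Delta\Psi+\rho-u\cdot\nabla\Pi=0$; eliminating $\Pi$ yields $(u\cdot\nabla)^2\Psi=\Delta\Psi-\rho$, whose unique solution is $\hat\Psi=-\hat\rho/(k^2-(u\cdot k)^2)=\hat\psi_u$ by \eqref{A1}. Thus every critical point equals the soliton $(\psi_u,\pi_u)$. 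Computing $J(\psi_u,\pi_u)=P-\int(u\cdot\nabla\psi_u)\nabla\psi_u\,dx=P-P(u)+p_u$ and noting that by construction $J(\psi_u,\pi_u)=p_u$, one gets $P(u)=P=P(v)$. Hence $\cH_{P(u)}=\cH_{P(v)}$, and part (i), applied once with velocity $v$ and once with velocity $u$, exhibits $(\psi_v,\pi_v)$ and $(\psi_u,\pi_u)$ as the \emph{unique} strict global minimizer of the \emph{same} functional; therefore $(\psi_u,\pi_u)=(\psi_v,\pi_v)$, so $u=v$ and there is no critical point other than $(\psi_v,\pi_v)$.

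The decisive idea, and the step I expect to be most delicate to get right, is the handling of the non-quadratic relativistic term: instead of a second-order expansion, which would entangle the indefinite bilinear form $\int\Pi\,\nabla\Psi$, convexity of $f$ collapses the entire relativistic contribution to its linearization $v\cdot\delta J$, leaving only the elementary quadratic form above to estimate. For part (ii) the point one must organize carefully is the self-consistency of the effective velocity $u$; once that is in place, the injectivity of $v\mapsto P(v)$ that one might otherwise expect to need is bypassed entirely, being supplied for free by the uniqueness of the minimizer in part (i).
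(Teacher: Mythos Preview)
Your proof of part (i) is correct and essentially the same as the paper's: both expand around the soliton, both use convexity of $J\mapsto\sqrt{1+|J|^2}$ to reduce the relativistic term to its linearization $v\cdot\delta J$ (the paper does this implicitly by noting that $(1+(p_v+m)^2)^{1/2}-(1+p_v^2)^{1/2}-v\cdot m\ge 0$), both observe that the linear terms cancel by the soliton equation, and both finish with the Cauchy--Schwarz bound on $\int\eta\,(v\cdot\nabla\phi)\,dx$.

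For part (ii) your route is genuinely different. The paper derives the same Euler--Lagrange system and identifies the critical point as $(\psi_{\tilde v},\pi_{\tilde v})$ with $P(\tilde v)=P(v)$, but then concludes $\tilde v=v$ by proving directly that $v\mapsto P(v)$ is injective: using the Fourier formula \eqref{A1} and Parseval it computes $P(v)=\kappa(|v|)\,v$ with $\kappa\ge 0$ and shows $|P(v)|$ is strictly increasing in $|v|$. Your argument instead feeds $P(u)=P(v)$ back into part (i): since $\cH_{P(u)}=\cH_{P(v)}$, part (i) applied at velocity $u$ and at velocity $v$ exhibits $(\psi_u,\pi_u)$ and $(\psi_v,\pi_v)$ as the unique strict global minimizer of the \emph{same} functional, forcing them to coincide. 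This is shorter and avoids the explicit Fourier computation; what you lose is the additional structural information that $P$ is a diffeomorphism, which the paper's argument delivers as a by-product (and which is used elsewhere, e.g.\ in the effective dynamics of Section~\ref{s3}).
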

\begin{proof}
{\it Step i)} Denoting  $\Psi-\psi_v=\psi$ and $\Pi-\pi_v=\pi$, we have 
\beqn\la{3.9} 
\cH_{P(v)}(\psi_v+\psi,\pi_v+\pi)-\cH_{P(v)}(\psi_v,\pi_v)= 
\int d^3x(\pi_v(x)\pi(x)+\nabla\psi_v(x)\cdot\nabla\psi(x)+\rho(x)\psi(x)) 
\nonumber\\
+\fr 12 \int\,d^3x\,(|\pi(x)|^2+|\nabla\psi(x)|^2 ) + (1+(p_v+m)^2)^{1/2}-(1+p_v^2)^{1/2}\,\,,\,\,\,\,\,\,\,\,\,\,\,\,\, 
\eeqn 
where $p_v=P(v)+\int\,d^3x\,\pi_v(x)\,\nabla\psi_v(x)$, and 
$$ 
m =\int\,d^3x\,(\pi(x)\,\nabla\psi_v(x)+\pi_v(x)\,\nabla\psi(x)+ \pi(x)\,\nabla\psi(x)). 
$$ 
Taking into account that  $v=(1+ p_v^2)^{-1/2} p_v$, we obtain 
\begin{eqnarray*} 
&&\cH_{P(v)}(\psi_v+\psi,\pi_v+\pi)-\cH_{P(v)}(\psi_v,\pi_v)\\
&&= \fr 12 \int\,d^3x\,(|\pi(x)|^2+|\nabla\psi(x)|^2 ) +(1+ p_v^2)^{-1/2} \int\,d^3x\,\pi(x)\,p_v\cdot\nabla\psi(x)\\
&&-(1+ p_v^2)^{-1/2} p_v\cdot m+(1+(p_v+m)^2)^{1/2}-(1+p_v^2)^{1/2}\,. 
\end{eqnarray*} 
It is easy to check that the expression in the third line  is nonnegative. Then the lower bound (\ref{3.6}) follows by using 
 $|(1+ p_v^2)^{-1/2} p_v|=|v|$\,. 
 \smallskip\\
 {\it Step ii)}
If $(\Psi,\Pi)\in{\cal F}$ is a critical point for $\cH_{P(v)}$, then it satisfies 
$$ 
0=\Pi(x)+(1+\ti p^2)^{-1/2}\ti p \cdot\nabla\Psi(x)\,, 
\,\,\,0=-\Delta\Psi(x)+\rho(x)-(1+\ti p^2)^{-1/2}\ti p\cdot\nabla\Pi(x)\,, 
$$ 
where $\ti p=P(v)+\ds \int\,d^3x\,\Pi(x)\,\nabla\Phi(x)$\,. This system is equivalent to equation
 (\ref{3.9'}) for solitons in the case of  the  velocity 
 $\ti v=(1+\ti p^2)^{-1/2}\ti p$\,. 
Hence, $\Psi=\psi_{\ti v}$\,, $\Pi=\pi_{\ti v}$ and $P(\ti v)=P(v)$\,. 

It remains to check that  $\ti v=v$.
Indeed,  for the total momentum $P(v)$ of the soliton-like solution (\ref{solit}), 
the Parseval identity and (\ref{A1}) imply 
$$ 
P(v)=p_v+\int\,d^3x\,\,v\cdot\nabla\psi_v(x)\,\nabla\psi_v(x)
=\fr v{\sqrt{1-v^2}}+(2\pi)^{-3}\int\,d^3k \fr{(v\cdot k)\hat\rho(k) \ov{k\hat\rho(k)}} 
{(k^2-(v\cdot k)^2)^2}\,\,.
$$
Hence, $P(v)=\kappa(|v|) v$ with $\kappa(|v|)\geq 0$, and for $v\not= 0$  one has
$$ 
|P(v)| 
=\fr {|v|}{\sqrt{1-v^2}}+\fr 1{(2\pi)^3|v|} 
\int\,d^3k 
\fr{|(v\cdot k) \hat\rho(k)|^2} 
{(k^2-(v\cdot k)^2)^2}\,\,. 
$$ 
Since $|P(v)|= \kappa(|v|)|v|$ 
is a monotone increasing function of $|v|\in[0,1[$, we conclude that $v=\ti v$. 
\end{proof}
\br 
{\rm
Proposition \ref{CTS} is not really needed for the proof of Theorem \ref{t6}. 
However, the Proposition together with (\ref{3.2}) and (\ref{3.5}) show  that $(\psi_v,\pi_v)$ is a critical point 
and suggest  an investigation of the stability through a lower bound as in (\ref{3.6}). 
In  Section \ref{secan} we sketch the derivation of Proposition \ref{CTS} for sufficiently smooth solutions 
based only on the invariance of  symplectic structure. 
We expect that a similar proposition  holds for other translation invariant systems similar to (\ref{wq3}). 
}
\er
\subsubsection{Orbital stability of solitons} 

We follow \ci{BG} deducing orbital stability from the conservation of the Hamiltonian 
$\cH_P$ together with its lower bound (\ref{3.6}).  For   $|v|<1$ denote 
\be\la{4.2} 
\de=\de(v)=\Vert\psi^0(x)-\psi_v(x-q^0)\Vert+ \Vert\pi^0(x)-\pi_v(x-q^0)\Vert+|p^0-p_v|\,. 
\ee 

\begin{lemma}\la{OSS}  
Let  $Y(t)=(\psi(t),\pi(t),q(t), p(t))\in C(\R,{\cal E})$ be a solution to  (\ref{wq3}) with an initial 
state 
$Y(0)=Y^0=(\psi^0,\pi^0,q^0,p^0)\in{\cal E}$.Then for every $\ve>0$ there exists a $\de_\ve>0$ such that 
\be\la{4.3} 
\Vert\psi(q(t)+x,t)-\psi_v(x)\Vert+ 
\Vert\pi(q(t)+x,t)-\pi_v(x)\Vert+|p(t)-p_v|\leq\ve,\qquad t\in\R 
\ee 
provided $\de\leq\de_\ve$. 
\end{lemma}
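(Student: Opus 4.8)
The plan is to run the Lyapunov argument of Bambusi--Galgani in the comoving variables, using the coercive lower bound of Lemma~\ref{SCP} as the stabilizing mechanism. I pass to the transformed trajectory $(\Psi(t),\Pi(t))=(\psi(q(t)+\cdot,t),\pi(q(t)+\cdot,t))$. Since the relevant norms and the total momentum $P$ are translation invariant, the hypothesis $\de\le\de_\ve$ says precisely that $(\Psi(0),\Pi(0))=(\Psi^0,\Pi^0)$ is $\de$-close to $(\psi_v,\pi_v)$ and that $|p^0-p_v|\le\de$. Two quantities are conserved along the flow: the total momentum, whose value I denote $P_0:=P(Y^0)$, and the energy, which in the comoving frame reads $\cH_{P_0}(\Psi(t),\Pi(t))=\cH(Y(t))\equiv\cH(Y^0)$ because $\cH^T$ does not depend on $Q$.

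The Lyapunov functional is $\cH_{P(v)}$, which by Lemma~\ref{SCP} is coercive about its unique minimizer $(\psi_v,\pi_v)$. The difficulty is that the conserved energy is $\cH_{P_0}$, not $\cH_{P(v)}$, since in general $P_0\neq P(v)$. I would bridge this gap by the elementary observation that $\cH_P$ depends on $P$ only through the relativistic term $\sqrt{1+(P+\int\Pi\nabla\Psi)^2}$, and $p\mapsto\sqrt{1+p^2}$ is $1$-Lipschitz; hence
\[
\bigl|\cH_{P_0}(\Psi,\Pi)-\cH_{P(v)}(\Psi,\Pi)\bigr|\le|P_0-P(v)|
\]
uniformly in $(\Psi,\Pi)\in\cF$. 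Combining this (applied at time $t$ and at time $0$) with conservation of $\cH_{P_0}$ gives, for every $t$,
\[
\cH_{P(v)}(\Psi(t),\Pi(t))-\cH_{P(v)}(\psi_v,\pi_v)\le\bigl[\cH_{P(v)}(\Psi^0,\Pi^0)-\cH_{P(v)}(\psi_v,\pi_v)\bigr]+2|P_0-P(v)|=:\eta .
\]
Feeding this into \eqref{3.6} yields $\frac{1-|v|}2\bigl(\Vert\Psi(t)-\psi_v\Vert^2+\Vert\Pi(t)-\pi_v\Vert^2\bigr)\le\eta$ for all $t$, so once $\eta$ is small the first two terms of \eqref{4.3} are small uniformly in time.

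It remains to show that $\eta\to0$ as $\de\to0$ and to recover the momentum term $|p(t)-p_v|$. For the former I would bound the two pieces of $\eta$ separately: the initial energy excess through the explicit quadratic expansion \eqref{3.9} (which exhibits it as $O(\Vert\nabla(\Psi^0-\psi_v)\Vert^2+\Vert\Pi^0-\pi_v\Vert^2)$ plus the nonnegative relativistic remainder), and the momentum mismatch $|P_0-P(v)|$ through continuity of the functional $P=p-\int\Pi\nabla\Psi$ about the soliton. For the momentum term I would use $p(t)=P_0+\int\Pi(t)\nabla\Psi(t)$ together with $p_v=P(v)+\int\pi_v\nabla\psi_v$, so that
\[
p(t)-p_v=\bigl(P_0-P(v)\bigr)+\Bigl[\int\Pi(t)\nabla\Psi(t)-\int\pi_v\nabla\psi_v\Bigr],
\]
and the bracket is controlled by the closeness of $(\Psi(t),\Pi(t))$ to $(\psi_v,\pi_v)$ already obtained.

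The main obstacle is exactly this control of $\eta$ and of the bracket above. The field part of the energy only bounds the $\Ho^1$-seminorm $\Vert\nabla(\Psi-\psi_v)\Vert$, whereas $\de$ and the target \eqref{4.3} are written in the $L^2$-norm, and the momentum functional, being the bilinear expression $\int\Pi\nabla\Psi$, likewise needs gradient control. Thus the honest hypothesis is closeness in the energy norm $\cF=\Ho^1\oplus L^2$, and the crux is to check that the quadratic upper bound coming from \eqref{3.9} genuinely matches the coercive lower bound \eqref{3.6}, so that smallness of the initial data in $\cF$ forces smallness of $\eta$ and hence the claimed uniform-in-time closeness to the soliton; the constant $\de_\ve$ produced this way depends on $v$ through the coercivity factor $1-|v|$.
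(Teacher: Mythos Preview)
Your argument is correct and suffices for the lemma; the worry you raise at the end about mismatched norms is only a notational artifact of the paper. In \eqref{3.6}, \eqref{4.2}, \eqref{4.3} the symbol $\Vert\cdot\Vert$ on the field component is the $\Ho^1$-seminorm $\Vert\nabla\cdot\Vert_{L^2}$, consistent with the phase space $\cE=\Ho^1\oplus L^2\oplus\R^3\oplus\R^3$; this is what the proof of \eqref{3.6} actually yields. So the coercive lower bound, the hypothesis $\de$, and the target \eqref{4.3} are all in the same $\cF$-norm, and your ``crux'' dissolves.

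The route, however, is genuinely different from the paper's. The paper does not bridge $\cH_{P_0}$ to $\cH_{P(v)}$ via the Lipschitz estimate $|\cH_{P_0}-\cH_{P(v)}|\le|P_0-P(v)|$; instead it invokes the surjectivity of $v\mapsto P(v)$ (established in Lemma~\ref{SCP}(ii)) to pick the soliton velocity $\tilde v$ with $P(\tilde v)=P_0$ exactly. Then $\cH_{P(\tilde v)}$ \emph{is} the conserved energy, the coercivity \eqref{3.6} is applied at the minimizer $(\psi_{\tilde v},\pi_{\tilde v})$, and one converts back to $v$ at the end using $|\tilde v-v|=O(\de)$. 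The payoff is a sharper rate: since the energy excess at $t=0$ is $O(\de^2)$ (critical point, quadratic remainder in \eqref{3.9}) and there is no additive $|P_0-P(v)|$ correction, the paper obtains $\Vert\Psi(t)-\psi_{\tilde v}\Vert+\Vert\Pi(t)-\pi_{\tilde v}\Vert=O(\de)$, whereas your $\eta$ contains the first-order term $2|P_0-P(v)|=O(\de)$ and therefore gives only $O(\sqrt\de)$. Both suffice for the $\varepsilon$--$\de_\varepsilon$ statement of the lemma. Your version is more elementary in that it avoids inverting $v\mapsto P(v)$; the paper's version is quantitatively stronger and more in the spirit of \cite{GSS}, where one always recenters on the orbit with the correct conserved quantities.
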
 
\begin{proof} 
Denote by $P^0$ the total momentum of the considered solution $Y(t)$. 
There exists a soliton-like solution (\ref{solit}) corresponding to some velocity 
$\ti v$ with the same total momentum $P(\ti v)=P^0$\,. Then (\ref{4.2}) implies that
$|P^0-P(v)|=|P(\ti v)-P(v)| ={\cal O}(\de)$.
Hence also $|\ti v-v|={\cal O}(\de)$ and 
$$ 
\Vert\psi^0(x)-\psi_{\ti v}(x-q^0)\Vert+ \Vert\pi^0(x)-\pi_{\ti v}(x-q^0)\Vert+|p^0-p_{\ti v}|={\cal O}(\de)\,. 
$$ 
 Therefore, denoting $(\Psi^0,Q^0,\Pi^0,P^0)=TY^0$, we have 
\be\la{4.5} 
\cH_{P(\ti v)}(\Psi^0\,,\Pi^0)-\cH_{P(\ti v)}(\psi_{\ti v}\,,\,p_{\ti v})= {\cal O}(\de^2)\,. 
\ee 
Total momentum and energy conservation  imply that for $(\Psi(t),Q(t),\Pi(t),P^0)=TY(t)$ 
$$ 
\cH_{P(\ti v)}(\Psi(t),\,\Pi(t))=\cH(TY(t))= \cH_{P(\ti v)}(\Psi^0\,,\Pi^0){\rm \,\,\,for\,\,}t\in\R\,. 
$$ 
Hence (\ref{4.5}) and (\ref{3.6}) with $\ti v$ instead of $v$ imply 
\be\la{4.6} 
\Vert \Psi(t)-\psi_{\ti v}\Vert+\Vert \Pi(t)-\pi_{\ti v}\Vert={\cal O}(\de) 
\ee 
uniformly in $t\in\R$\,. On the other hand, total momentum conservation 
implies 
$$ 
p(t)=P(\ti v)+\langle\Pi(t),\nabla\Psi(t)\rangle{\rm \,\,\,for\,\,}t\in\R\,. 
$$ 
Therefore (\ref{4.6}) leads to  
\be\la{4.7} 
|p(t)-p_{\ti v}|={\cal O}(\de) 
\ee 
uniformly in $t\in\R$\,. 
Finally (\ref{4.6}), (\ref{4.7}) together imply (\ref{4.3}) because 
$|\ti v-v|={\cal O}(\de)$\,. 
\end{proof}
\subsubsection{Strong Huygens principle and soliton asymptotics} 
We combine the relaxation of the acceleration and orbital stability  
with the Strong Huygens principle
to prove Theorem \ref{t6}. 
\bp\la{5.1} 
Let the assumptions of Theorem \ref{t6} be fulfilled. 
Then for every $\de>0$ there exist a $t_*=t_*(\de)$ and a solution 
$Y_*(t)=(\psi_*(x,t),\pi_*(x,t),q_*(t),p_*(t)) 
\in C([t_*,\infty),{\cal E})$ to the system (\ref{wq3}) such that 
\smallskip\\
i) $Y_*(t)$ coincides with $Y(t)$ in  the future cone, 
\beqn 
q_*(t)=&q(t)\,\,&for\,\,\,t\geq t_*\,,\la{5.2'}\\
\psi_*(x,t)=&\psi(x,t)\,\,&for\,\,\,|x-q(t_*)|<t-t_*\,.\la{5.2} 
\eeqn 
ii) $Y_*(t_*)$ is close to  a soliton $Y_{v,a}$ with some $v$ and $a$, 
 \be\la{5.3} 
 \Vert Y_*(t_*)-Y_{v,a}\Vert_{\cal E}\leq\de\,. 
 \ee 
 \ep
\begin{proof}
The Kirchhoff formula gives
$$ 
\psi(x,t)=\psi_r(x,t)+\psi_0(x,t),\qquad x\in\R^3,\,\, t> 0,
$$ 
where 
\beqn 
\psi_r(x,t)&=&-\int\,\fr{d^3y}{4\pi|x-y|}\rho(y-q(t-|x-y|))\,,\la{5.6}\\
\psi_0(x,t)&=&\fr 1{4\pi t}\int_{S_t(x)}\,d^2y\,\pi(y,0)+ 
\fr\pa{\pa t} 
\left( \fr 1{4\pi t}\int_{S_t(x)}\,d^2y\,\psi(y,0) \right).
\la{5.6'} 
\eeqn 
Here $S_t(x)$ denotes the sphere $|y-x|=t$.
Let us assume for simplicity that initial fields vanish.
General case can be easily reduced to this situation using  the strong Huygens principle.
We will comment on this reduction at the end of the proof.

In the case  of zero initial data the solution reduces to the retarded potential:
$$
\psi(x,t)=\psi_r(x,t),\quad x\in\R^3,\quad t>0.
$$
We construct the solution $Y_*(t)$ as a modification of $Y(t)$. First, we modify the trajectory $q(t)$. 
The relaxation of acceleration (\ref{dq}) means that  for any $\ve>0$ there exist  $t_\ve >0$ such that
$$
|\ddot q(t)|\le \ve,\qquad t\ge t_\ve.
$$
Hence, the trajectory for large times  locally tends to a straight line, i.e., for any  fixed $T>0$
$$
q(t)= q(t_\ve)+(t-t_\ve)\dot q(t_\ve)+r(t_\ve,t),\quad \mbox {where}
\quad\max_{t\in [t_\ve,t_\ve+T]} |r(t_\ve,t)|\to 0,\quad t_\ve\to\infty.
$$
Denote $\lam_\ve(t):=q(t_\ve)+\dot q(t_\ve)(t-t_\ve)$ and define  modified trajectory as 
\be\la{qT2}
q_*(t)=\left\{ \ba{l} \lam_\ve(t),\!\qquad t\le t_\ve
\medskip\\
q(t),~\qquad t\ge t_\ve \ea\right|,
\ee
Then
$$
\ddot q_*(t)=\left\{ \ba{l} 0,~~\qquad t< t_\ve
\medskip\\
\ddot q(t),\!\qquad t> t_\ve\ea\right|.
$$
The next step we define the modified field as  retarded potential of type
(\ref{5.6})
\be\la{5.62}
\psi_*(x,t)=-\int\,\fr{d^3y}{4\pi|x-y|}\rho(y-q_*(t-|x-y|)),\quad x\in\R^3, \quad t\in\R.   
\ee
\bl\la{lTT}
The right hand side of (\ref{5.62}) depends on the trajectory $q_*(\tau)$
only from a bounded interval of time $\tau \in[t-T(x,t),t]$, where
\be\la{tau}
T(x,t):=\fr{R_\rho+|x-q(t)|}{1-\ov v}.
\ee
Here  $\ov v=\sup\limits_{t\in\R}|\dot q(t)|<1$ by (\ref{apr}).
\el
\begin{proof}
This lemma is obvious geometrically, and its formal proof also is easy.
The inegrand of  (\ref{5.62}) vanishes for $|y-q_*(t-|x-y|)|\ge R_\rho$ by (\ref{C}).
Therefore, the integral is spreaded over the region $|y-q_*(t-|x-y|)|\le R_\rho$, which implies
$|y-q_*(t)+q_*(t)-q_*(t-|x-y|)|\le R_\rho$. Hence, 
$$
|y-q_*(t)|\le R_\rho+\ov v |x-y|.
$$
On the other hand, $ |x-y|\le |x-q_*(t)|+|y-q_*(t)|$, and hence,
$$
|y-q_*(t)|\ge -|x-q_*(t)|+|x-y|.
$$
Therefore,
$$
 -|x-q_*(t)|+|x-y| \le R_\rho+\ov v |x-y|,
$$
which implies 
$$
 |x-y| \le \fr{R_\rho+|x-q_*(t)|}{1-\ov v}.
$$
Now the lemma is proved.
\end{proof}
The potential (\ref{5.62}) satisfies the wave equation
$$
\ddot \psi_*(x,t)=\De\psi_*(x,t)-\rho(x-q_*(t)),\quad x\in\R^3,\quad t\in\R.
$$
We should still   prove  equations for the trajectory $q_*(t)$:
\begin {equation} \label {wq32}
\dot q_*(t) = \displaystyle \frac {p_*(t)} {\sqrt {1 + p_* ^ 2(t)}}, 
\qquad\dot p_*(t) = - \displaystyle \int \nabla \psi_* (x, t) \rho (x-q_*(t)) \, dx,\qquad t>t_*
\end {equation}
with sufficiently large $t_*\ge t_\ve$.
Let us note that the integral here is spreaded over the ball $|x-q_*(t)|\le R_\rho$.
Now Lemma \ref{lTT} implies that $\psi_* (x, t)$ depends on the trajectory $q_*(\tau)$ 
only from a bounded interval $\tau \in[t-\ov T,t]$, where
$$
\ov T:=\fr{2R_\rho}{1-\ov v}.
$$
Let us define $t_*:=t_\ve+\ov T$. Then by Lemma \ref{lTT}
$$
\psi_*(x,t)=\psi(x,t),\quad t>t_*,\quad |x-q_*(t)|\le R_\rho
$$
since $q_*(t)\equiv q(t)$ for $t>t_*-\ov T=t_\ve$ by (\ref{qT2}). 
Hence, equations (\ref{wq32}) hold for $q_*(t)$ as well as for $q(t)$.

It remains to prove  (\ref{5.3}). 
The key observation is that outside the cone $K_\ve:=\{(x,t)\in\R^4:|x-q(t_\ve)|<t-t_\ve\}$
 the retarded potential (\ref{5.62}) coincides with the  soliton $\psi_{v,a}(x,t)$, 
 where $v=\dot q(t_\ve)$ and $a= q(t_\ve)$ by our definition (\ref{qT2}). In particular,
$$
\psi(x,t_*)=\psi_{v,a}(x-a-vt_*),\qquad |x-q(t_\ve)|>t_*-t_\ve=\ov T.
$$
In the  ball $|x-q(t_*)|<\ov T$ the coincidence generally does not hold, but
the difference of the left hand side with the right hand side
converges to zero as $\ve\to 0$ uniformly for $|x-q(t_*)|<\ov T$,
and such uniform convergence holds for the gradient of the difference.
This follows from the integral representation (\ref{5.62}) by Lemma \ref{lTT} since 
 $$
 \max_{t\in(t_*-T(x,t_*),t_*)}[|q_*(t)-\lam_\ve(t)|+|\dot q_*(t)-\dot \lam_\ve(t)|]\to 0,
 \qquad \ve\to 0
 $$ 
 by the relaxation of acceleration (\ref{dq}).  
 It is important that $T(x,t_*)$ is bounded for $|x-q(t_*)|<\ov T$ by  (\ref{tau}).
 This proves Proposition \ref{5.1} in the case of zero initial data.
 
 The next step is the proof for initial data with bounded support:
 $$
 \psi(x,0)=\pi(x,0)=0,\qquad |x|> R_0.
 $$
 Now we apply the strong Huygens principle: in this case the potential (\ref{5.6'}) vanishes in a future cone,
 $$
 \psi_0(x,t)=0,\qquad |x|<t-R_0.
 $$
However, the estimate $|\dot q(t)|\le\ov v<1$ implies that the trajectory $(q(t),t)$ lies in this cone for all $t>t_0$.
Hence, the solution for $t>t_0$ again reduces to the retarded potential and the needed conclusion follows. 

Finally, arbitrary finite energy initial data admits a splitting in two summands:
the first vanishing for $|x|>R_0$ and the second vanishing for $|x|<R_0-1$.
The energy of the second summand is arbitrarily small for large $R_0$,
and the energy of the corresponding  potential (\ref{5.6'}) is conserved in time
since it is a solution to  free wave equation. Hence, its role  is negligible for sufficiently large $R_0$. 
\end{proof}
Now we can prove our main result. 
\\
{\bf Proof of Theorem \ref{t6}} 
For every $\ve >0$ there exists  $\de>0$ such that  (\ref{5.3}) implies by Lemma \ref{OSS}, 
$$ 
\Vert\psi_*(q_*(t)+x,t)-\psi_v(x)\Vert+ \Vert\pi_*(q_*(t)+x,t)-\pi_v(x)\Vert+|\dot q_*(t)-v| \leq\ve{\rm \,\,for\,\,}t>t_*\,. 
$$ 
Therefore, (\ref{5.2'}) and (\ref{5.2}) imply that for every $R>0$ and   $t>t_*+\ds\fr{R}{1-\ov v}$ 
\begin{eqnarray*} 
&&\Vert\psi (q(t)+x,\,t)-\psi_v(x)\Vert_R\,+\, \Vert\pi (q(t)+x,\,t)-\pi_v(x)\Vert_R\,+\,|\dot q(t)-v|\,\\
&&=\Vert\psi_*(q_*(t)+x,t)-\psi_v(x)\Vert_R+\, \Vert\pi_*(q_*(t)+x,t)-\pi_v(x)\Vert_R+\,|\dot q_*(t)-v|\le\ve. 
\end{eqnarray*}  
Since $\ve >0$ is arbitrary, we conclude (\ref{ssolh}). Theorem \ref{t6} is proved.

\subsection{Invariance of symplectic structure} \la{secan}
The canonical equivalence of the Hamiltonian systems (\ref{wq3}) and (\ref{3.4}) can be seen from the Lagrangian viewpoint. 
We remain at the formal level. For a complete mathematical justification we would have to develop some theory 
of infinite dimensional Hamiltonian systems which is beyond the scope of this paper. 
 
By definition we have $\cH^T(\Psi,\Pi,Q,P)=\cH(\psi,\pi,q,p)$ with the arguments related through the transformation $T$. 
To each Hamiltonian we associate a Lagrangian through the Legendre transformation 
\begin{eqnarray*} 
L(\psi,\dot\psi,q,\dot q)=& 
\langle\pi,\dot\psi\rangle+p\cdot\dot q-\cH(\psi,\pi,q,p)\,\,,\,\,\,\,\,\, 
&\,\,\,\dot\psi =D_\pi \cH\,\,,\,\,\,\,\dot q=D_p \cH\,\,,\\
L^T(\Psi,\dot\Phi,Q,\dot Q)=& 
\langle \Pi,\dot\Psi\rangle+P\cdot\dot Q-\cH^T(\Psi,\Pi,Q,P)\,\,, 
&\,\,\,\dot\Psi =D_\Pi \cH^T\,\,,\,\,\,\dot Q=D_P \cH^T\,\,. 
\end{eqnarray*} 
These Legendre transforms are well defined because the Hamiltonian functionals are convex in the momenta. 
\bl\la{lCT}
The following indentity holds,
$$
L^T(\Psi,\dot\Psi,Q,\dot Q)=L(\psi,\dot\psi,q, \dot q).
$$
\el
\begin{proof}
Clearly we have to check the invariance of the canonical 1-form, 
\be\la{AA2} 
\langle \Pi,\dot\Psi\rangle+P\cdot\dot Q=\langle \pi,\dot\psi\rangle +p\cdot\dot q\,. 
\ee 
For this purpose we substitute 
$$ 
\left\{
\ba{rclrcl}
\Pi(x)&=&\pi(q+x),       \qquad &\dot\Psi(x)                    &=&\dot\psi(q+x)+\dot q\cdot\nabla\psi(q+x)\\
       P&=&p-\ds\int \dot\psi\cdot\nabla\psi \,dx,&\dot Q&=&\dot q 
\ea\right|
$$ 
Then the left hand side of (\ref{AA2}) becomes 
$$ 
\langle \pi(q+x),\dot\psi(q+x)+\dot q\cdot\nabla\psi(q+x)\rangle+ 
(p - \langle \pi(x),\nabla\psi(x)\rangle)\cdot \dot q=\langle \pi,\dot\psi\rangle+p\cdot \dot q\,. 
$$ 
The lemma is proved.
\end{proof}
This lemma implies that he corresponding action functionals are identical when transformed by $T$. 
Hence, finally, 
the two Hamiltonian systems (\ref{wq3}) and (\ref{3.4}) are equivalent
since dynamical trajectories are stationary points of the respective action functionals. 

\subsection {Translation-invariant Maxwell-Lorentz system}
In \cite {IKM2004} asymptotics of  type (\ref {dq})--(\ref {ssolh}) were extended to the Maxwell-Lorentz translation-invariant 
system \eqref {ML} without external fields. In this case, the Hamiltonian coincides with \eqref {HAMml} where $ V(x)\equiv 0 $.
The extension of  methods \cite {KS1998} to this case required a new detailed analysis of the corresponding Hamiltonian structure 
which is necessary for the canonical transformation.
Now the key role in applying Huygens' strong principle is played by new estimates 
of long-time decay for oscillations of energy and total momentum solutions for perturbed Maxwell-Lorentz system
(estimates (4.24)--(4.25) in \cite {IKM2004}).
\subsection {The case of weak interaction} \la {19.3}

Soliton asymptotic of the type (\ref {dq})--(\ref {ssolh}) for the system \eqref {w3}--\eqref {q3}  was proved in a stronger 
form for the case of a weak coupling
\begin {equation} \label {rosm}
\Vert \rho \Vert_ {L ^ 2 (\mathbb R ^ 3)} \ll 1.
\end {equation}
Namely, in \cite {IKS2004a} initial fields are considered with decay $ | x | ^ {- 5 / 2- \ve} $,  
 $ \forall\ve> 0 $
(condition (2.2) in \cite {IKS2004a}) provided that $ \nabla V (q) = 0 $ for $ | q |> \const $.
Under these assumptions, more strong decay holds,
\begin {equation} \label {dqsm}
| \ddot q (t) | \le C (1+ | t |) ^ {- 1- \ve}, \qquad t \in \R,\qquad\forall\ve>0
\end {equation}
for “outgoing” solutions that satisfy the condition
\begin {equation} \label {sfin}
| q (t) | \to \infty, \qquad t \to \pm \infty.
\end {equation}
With these assumptions asymptotics (\ref {dq})--(\ref {ssolh}) can be significantly strengthen: now
$$
\dot q (t) \to v_ \pm, \qquad (\psi (x, t), \pi (x, t))
\! = \!
(\psi_ {v_ \pm} (x \! - \! q (t)), \pi_ {v_ \pm} (x \! - \! q (t))) + W (t) \Phi_ \pm + (r_ \pm (x, t), s_ \pm (x, t)),
$$
where ``dispersion waves'' $ ~ W (t) \Phi_ \pm $ are solutions of a free wave equation, and the remainder
converges to zero in  {\it global energy norm}:
$$
\Vert \nabla r_ \pm (q (t), t) \Vert + \Vert r_ \pm (q (t), t) \Vert + \Vert s_ \pm (q (t), t) \Vert \to 0 , \qquad t \to \pm \infty.
$$
This progress compared with local decay \eqref {ssolh} is due to the fact that we  identified a dispersion wave 
$ W (t) \Phi_ \pm $ under the condition of smallness \eqref {rosm}. This identification is possible due to the rapid decay
\eqref {dqsm}, in difference with \eqref {rel}.

All solitons propagate with velocities  $v <1 $, and therefore they are spatially separated  for large time
from the dispersion waves $ W (t) \Phi_ \pm $, which propagate with unit velocity (Fig. \,\ref {fig-32}).

The proofs rely on  integral Duhamel  representation and on rapid dispersion decay of  solutions to  free
wave equation. Similar result was obtained in \cite {IKM2003} for a system of  type \eqref {w3}--\eqref {q3} 
with the Klein--Gordon equation and in \cite {IKS2002} for the Maxwell-Lorentz system \eqref {ML} 
with the same smallness condition (\ref {sfin})  under  assumption that 
$ E ^ {\rm ext} (x) = B ^ {\rm ext} (x) = 0 $ for $ | x |> \const$.
In \cite {IKS2004b}, this result was extended to the Maxwell-Lorentz system of type \eqref {ML} with a rotating charge.
\begin {remark} \label {rGC}
{\rm
The results of \cite {IKS2004a, IKS2004b} imply Soffer's ``Grand Conjecture'' \cite [p. 460] {soffer2006}
in a moving frame for translation-invariant  systems under the condition of smallness \eqref {rosm}.
}
\end {remark}

\bigskip
\begin{figure}[htbp]
\begin{center}
\includegraphics[width=1.0\columnwidth]{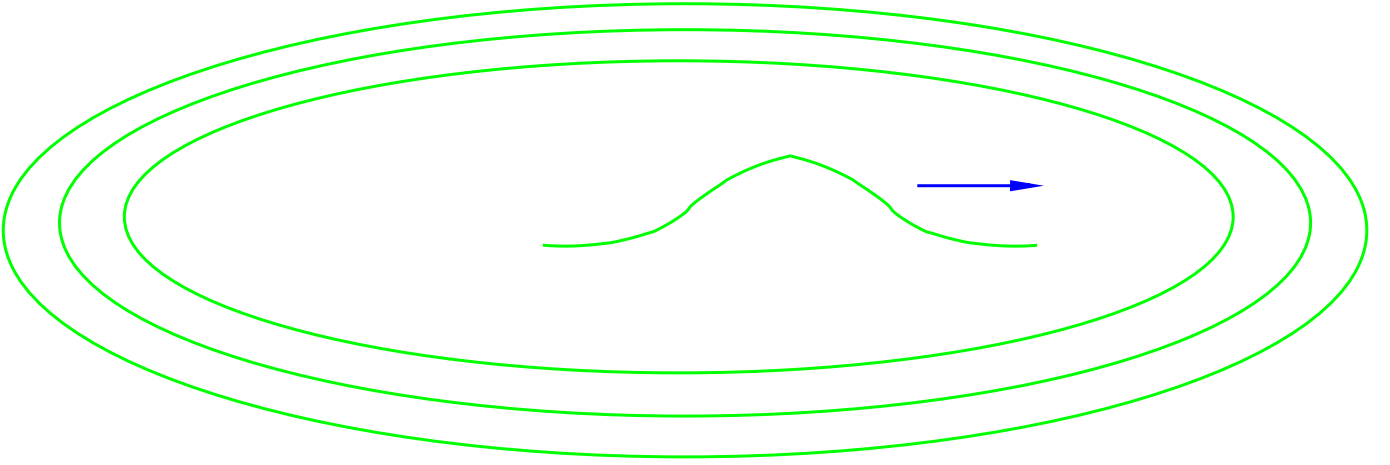}
\caption{Soliton and dispersion waves}
\label{fig-32}
\end{center}
\end{figure}

\section {Adiabatic effective dynamics of solitons}
The existence of solitons and the global attraction to solitons (\ref {att}) are typical features of translation-invariant systems.
However, if the deviation of a system from translational invariance is in some sense small, the system can admit solutions
which  are close forever to solitons with time-dependent parameters (velocity, etc.). Moreover, in some cases
it is possible to identify an ``effective dynamics'' which describes  the evolution of these parameters.

\subsection {Wave-particle system with a slowly varying external potential} \la {s3}

The solitons (\ref {solit}) are solutions to  the system \eqref {wq3}--\eqref {q3}  with zero external potential $V(x)\equiv 0$.
However, even for the  system \eqref {w3}--\eqref {q3} 
with nonzero external potential {\it soliton-like} solutions of the form
\begin {equation} \label {asol}
\psi (x, t) \approx \psi_ {v (t)} (x-q (t))
\end {equation}
may exist if the potential is slowly changing:
\begin {equation} \label {asolV}
| \nabla V (q) | \le \varepsilon \ll 1.
\end {equation}
In this case, the total momentum (\ref {P}) is generally not conserved, but its slow evolution and  slow evolution 
of solutions (\ref {asol}) can be described in terms of some finite-dimensional Hamiltonian dynamics.

Namely, let $ P = P_v $ be  total momentum of the soliton $ S_ {v, Q} $ in the notation (\ref {cS}).
It is important that the map $ \cP: v \mapsto P_v $ is an isomorphism of the ball $ | v |< 1 $ on $ R ^ 3 $.
Therefore, we can consider $ Q, P $ as global coordinates on the soliton manifold $ \cS $. 
We define effective Hamilton functional
\begin {equation} \label {Heff}
\cH _ {\rm eff} (Q, P_v) \equiv \cH_0 (S_ {v, Q}), \qquad (Q, P_v) \in \cS,
\end {equation}
where $ \cH_0 $ is {\it unperturbed} Hamiltonian (\ref {Ham}) with $ V = 0 $.
This functional allows the splitting
$ \cH _ {\rm eff} (Q, \Pi) = E (\Pi) + V (Q) $
since the 
first  integral in (\ref {Ham}) does not depend on $Q$ while the 
last  integral  vanishes on the solitons.
 Hence,  the corresponding Hamilton equations read
\begin {equation} \label {dyn}
\dot Q (t) = \nabla E (\Pi (t)), \qquad \dot \Pi (t) = - \nabla V (Q (t)).
\end {equation}
The main result of \cite {KKS1999} is the following theorem.

\begin {theorem} \label {t7}
Let condition (\ref {asolV}) hold, and the initial state $ S_0=(\psi_0, \pi_0, q_0, p_0) \in \cS$
is a soliton with a full momentum $ P_0 $. Then the corresponding solution
$ \psi (x, t),  \pi (x, t), q (t), p (t) $ to 
 the system \eqref {w3}--\eqref {q3}  admits the following “adiabatic asymptotics”
\begin {eqnarray}
&& | q (t) -Q (t) | \le C_0, \quad | P (t) - \Pi (t) | \le C_1 \varepsilon \quad \mbox {\rm for} \quad | t | \le C \varepsilon ^ {- 1}, \label {effd1}
\\
\nonumber \\
&& \sup_ {t \in \mathbb R} \Big [\Vert \nabla [\psi (q (t) + x, t) - \psi_ {v (t)} (x)] \Vert_R 
+ \Vert \pi (q (t) + x, t) - \pi_ {v (t)} (x) \Vert_R \Big] \le C \varepsilon,
\label {effd2}
\end {eqnarray}
where
$ P (t) $ denotes  total momentum \eqref {P},  $ v (t) = \cP ^ {- 1} (\Pi (t)) $, and
$( Q (t),  \Pi (t)) $ is the solution to the effective Hamilton  equations \eqref {dyn} with initial conditions
$$
Q (0) = q (0), \qquad \Pi (0) = P (0).
$$
\end {theorem}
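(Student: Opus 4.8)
The plan is to track the solution by a finite-dimensional \emph{modulation} along the soliton manifold $\cS$ and to show, by a bootstrap argument, that the modulation parameters obey the effective equations \eqref{dyn} up to the stated errors. First I would fix the decomposition
\[
(\psi(q(t)+x,t),\,\pi(q(t)+x,t)) = (\psi_{v(t)}(x),\,\pi_{v(t)}(x)) + z(x,t),
\]
where the parameters $(q(t),v(t))$ are determined by requiring that $z(t)$ be symplectically orthogonal to the tangent space of $\cS$ at $S_{v(t),q(t)}$, and by selecting $v(t)=\cP^{-1}(P(t))$ so that the modulating soliton carries the exact total momentum $P(t)$ of the solution. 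Since the initial datum is exactly the soliton $S_0$, we have $z(0)=0$; the implicit function theorem guarantees that this decomposition persists, with $(q,v)$ continuously differentiable, as long as $\|z\|$ stays below a fixed threshold.

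The core estimate controls $z$ by orbital stability. The external force enters only through the total momentum, which evolves as $\dot P(t)=-\nabla V(q(t))$, so $|\dot P|\le\varepsilon$; meanwhile the full energy $\cH$ is conserved, whence the \emph{unperturbed} energy satisfies $\frac{d}{dt}\cH_0(Y(t))=-\dot q(t)\cdot\nabla V(q(t))=O(\varepsilon)$. I would then introduce the Lyapunov functional $\cL(t):=\cH_0(Y(t))-E(P(t))$, where $E$ is the kinetic part in the splitting $\cH_{\mathrm{eff}}(Q,\Pi)=E(\Pi)+V(Q)$, so that $E(P)$ is the value of $\cH_0$ on the soliton of momentum $P$. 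The coercive lower bound \eqref{3.6} of Lemma \ref{SCP} (applied with $v=v(t)$) gives $\cL(t)\ge c\,\|z(t)\|^2$, while differentiating yields $\dot\cL=(\nabla E(P)-\dot q)\cdot\nabla V(q)=O(\varepsilon\|z\|)$, since $\nabla E(P)$ is the soliton velocity and $\dot q-\nabla E(P)$ is linear in $z$ to leading order. With $\cL(0)=0$, a Gronwall argument keeps $\|z(t)\|$ under the threshold on the adiabatic interval $|t|\le C\varepsilon^{-1}$; this is the rigorous form of orbital stability under an adiabatic perturbation.

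Next I would compare $(q,P)$ with the effective trajectory $(Q,\Pi)$ solving \eqref{dyn}. From the modulation one extracts $\dot q=\nabla E(P)+O(\|z\|)$ and $\dot P=-\nabla V(q)$, to be matched against $\dot Q=\nabla E(\Pi)$, $\dot\Pi=-\nabla V(Q)$. Using the slow variation of $V$ (its second derivatives being of higher order in $\varepsilon$ under the natural scaling $V(q)=W(\varepsilon q)$), so that $\nabla V(q)-\nabla V(Q)=O(\varepsilon^{2}|q-Q|)$, a Gronwall estimate propagates the data $\Pi(0)=P(0)$, $Q(0)=q(0)$ into $|P(t)-\Pi(t)|\le C_1\varepsilon$ and $|q(t)-Q(t)|\le C_0$ on $|t|\le C\varepsilon^{-1}$, which is \eqref{effd1}. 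The field asymptotics \eqref{effd2} then follow from the bound on the local norm of $z$ together with the identity $P_{v(t)}=P(t)$ and the isomorphism $\cP$, giving $|v(t)-\cP^{-1}(\Pi(t))|=O(\varepsilon)$.

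The main obstacle is the long-time control of the transversal radiation $z$. The crude Lyapunov bound only yields $\|z(t)\|=O(\varepsilon t)$, which is merely $O(1)$ at $t\sim\varepsilon^{-1}$ and, worse, would feed an $O(\varepsilon t)$ error into the velocity relation and so, upon integration over the adiabatic interval, destroy the $O(1)$ position estimate. The resolution is that the excess field is genuine radiation travelling at unit speed: by the strong Huygens principle for the $3$D wave equation (already exploited in Theorem \ref{t6}), together with dispersive decay, the radiated part leaves every fixed ball about $q(t)$, so that the \emph{local} transversal norm $\|z(t)\|_R$ remains $O(\varepsilon)$ even while the global norm accumulates. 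Establishing this separation of the dispersing radiation from the comoving soliton core, and feeding the resulting local $O(\varepsilon)$ bound back into the modulation equations to close the bootstrap, is the delicate technical heart of the argument.
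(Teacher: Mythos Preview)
The paper does not prove Theorem~\ref{t7}. This is a survey, and the theorem is stated as ``the main result of \cite{KKS1999}'' with no argument beyond the two structural remarks following it: that the effective Hamiltonian \eqref{Heff} is the restriction of $\cH_0$ to the soliton manifold $\cS$, and that the canonical form on $\cS$ is likewise the restriction of the full canonical form. These remarks explain \emph{why} the effective dynamics \eqref{dyn} is the right candidate but do not constitute a proof; for that the paper simply refers to \cite{KKS1999}. So there is no proof in the paper to compare your proposal against.

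That said, your outline is a plausible reconstruction of the strategy. The Lyapunov functional $\cL(t)=\cH_0(Y(t))-E(P(t))$, combined with the coercivity of Lemma~\ref{SCP} and the exact relation $\dot P=-\nabla V(q)$, is the natural route to orbital stability under the slow forcing, and your computation $\dot\cL=(\nabla E(P)-\dot q)\cdot\nabla V(q)=O(\varepsilon\|z\|)$ is correct. You have also correctly located the genuine difficulty: the crude bound $\|z(t)\|=O(\varepsilon t)$ is merely $O(1)$ at $t\sim\varepsilon^{-1}$ and cannot deliver \eqref{effd2}, which demands an $O(\varepsilon)$ \emph{local} bound uniformly in $t\in\R$. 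Your appeal to the strong Huygens principle is in the right spirit but is not quite the mechanism: the point is not that an initial burst escapes, but that the forcing which generates $z$ (the $O(\varepsilon)$ mismatch coming from $\dot v=O(\varepsilon)$) is itself $O(\varepsilon)$ and supported in a fixed ball around $q(t)$, so the retarded-potential representation gives $\|z(t)\|_R=O(\varepsilon)$ directly, without needing the global norm to stay small. Making this precise, and feeding it back into the modulation equations, is exactly the work done in \cite{KKS1999}.
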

Note that such relevance of effective dynamics (\ref {dyn}) is due to the consistency of Hamiltonian structures:
\medskip \\
1) The effective Hamiltonian (\ref {Heff}) is a restriction of the Hamiltonian functional (\ref {Ham}) with $ V = 0 $
onto the soliton manifold  $ \cS $.
\medskip \\
2) As shown in \cite {KKS1999}, the canonical form of the Hamiltonian system
(\ref {dyn}) is also a restriction onto $ \cS $ of  canonical form of the   system \eqref {w3}--\eqref {q3}: formally
$$
P \, dQ = \Bigl [p \, dq + \int \psi (x) \, d \pi (x) \5 dx \Bigr] \Big | _ \cS.
$$
Therefore, the total momentum $ P $ is canonically conjugate to the variable $ Q $ on the soliton manifold $ \cS $.
This fact justifies definition (\ref {Heff}) of the effective Hamiltonian as a function of the total momentum  $ P_v $,
and not of the particle momentum $ p_v $.
\medskip

One of the important results of \cite {KKS1999} is  the following ``effective dispersion relation'':
\begin {equation} \label {EP}
E (\Pi) \sim \frac {\Pi ^ 2} {2 (1 + m_e)} + {\rm const}, \qquad | \Pi | \ll 1.
\end {equation}
It means that non-relativistic mass of  a slow soliton increases due to an interaction with the field by the amount
\begin {equation} \label {me}
m_e = - \frac 13 \langle \rho, \Delta ^ {- 1} \rho \rangle.
\end {equation}
This increment is proportional to the field energy of a soliton in rest
$$
\cH(\Delta ^ {- 1} \rho,0,0,0)=-\frac 12 \langle \rho, \Delta ^ {- 1} \rho 
\rangle,
$$
 which
agrees with the Einstein mass-energy equivalence principle (see below).

\begin {remark} \label {rad}
{\rm
The relation (\ref {EP}) gives only a hint that $ m_e $ is an increment of the effective mass.
The true {\it dynamical
justification} for such an interpretation is given by  the asymptotics (\ref {effd1})--(\ref {effd2})
which demonstrate the relevance of the effective dynamics (\ref{dyn}).
}
\end {remark}
\hspace{-6 mm}{\bf Generalizations.}
After the paper
  \cite {KKS1999} adiabatic effective asymptotics of  type (\ref {effd1}), (\ref {effd2})
were obtained in \cite {FTY2002, FGJS2004} for nonlinear Hartree 
and Schr\"odinger equations with slowly varying external potentials, and in \cite {LS2009,S2010} - for nonlinear equations of
Einstein's--Dirac, Chern--Simon--Schr\"odinger and Klein--Gordon--Maxwell
with small external fields.

Recently, similar adiabatic effective dynamics were established in \cite {BCFJS} for an electron in  second-quantized 
Maxwell field in the presence of a slowly changing external potential.

\subsection {Mass--Energy equivalence} \la {s4}
In  \cite {KS2000ad}, asymptotics (\ref {effd1}), (\ref {effd2}) were extended to solitons of the Maxwell--Lorentz 
equations \eqref {ML} with small external fields. In this case the increment of  nonrelativistic mass 
also turns out to be proportional to the energy of the static soliton's own field.

Such equivalence of the self-energy of a  particle  with its mass was first discovered in 1902 by Abraham: 
he obtained by direct calculation that electromagnetic self-energy $ E _ {\rm own} $
of an electron at rest adds $ m_e = \displaystyle \frac43 E _ {\rm own} / c ^ 2 $ to its non-relativistic mass
(see \cite {A1902, A1905}, and also \cite [p. 216--217] {K2013}).
It is easy to see that this self-energy is infinite for a point electron 
at the origin
with a charge density $ \delta (x) $, 
because in this case, the Coulomb electrostatic field $ | E (x) | = C / | x | ^ 2 $ 
so the integral in (\ref {HAMml}) diverges around $x=0$.
This means that the field mass for a point electron is infinite, which contradicts experiment. 
That's why Abraham introduced the model  of electrodynamics  with  ``extended electron'' \eqref {ML}, 
whose self-energy is finite.

At the same time, Abraham conjectured that the {\it entire mass} of an electron is due to its own electromagnetic energy; 
that is, $ m = m_e $: ``... \textit {matter disappeared,
only energy remains} ... '', as philosophical-minded contemporaries  wrote \cite [p. ~ 63, 87, 88] {H1908}  (smile :) \,)

This conjecture was  justified in 1905 by Einstein, who discovered the famous universal relation $ E = m_0 c ^ 2 $, 
suggested by Special Theory of Relativity \cite {Einstein-1905}.
Additional factor $ \frac43 $ in the Abraham formula is due to nonrelativistic character of the system  \eqref {ML}.
According to modern view, about 80\% of the electron mass is of electromagnetic origin ~ \cite {F1966}.
 
\setcounter{equation}{0}
\section {Global attraction to stationary orbits} \la {s5}
Global attraction to stationary orbits (\ref {atU}) was first established in
\cite {K2003, KK2006, KK2007} for the Klein--Gordon equation coupled to  nonlinear oscillator
\begin {equation} \label {KG1i}
\ddot \psi (x, t) = \psi '' (x, t) -m ^ 2 \psi (x, t) + \delta (x) F (\psi (0, t)), \qquad x \in \mathbb R.
\end {equation}
We consider complex solutions, identifying complex values
$ \psi \in \Co $ with real vectors $ (\psi_1, \psi_2) \in \mathbb R ^ 2 $,
where $ \psi_1 = \rRe \psi $ and $ \psi_2 = \rIm \psi $.
Suppose that $ F \in C ^ 1 (\mathbb R ^ 2, \mathbb R ^ 2) $ and
\begin {equation} \label {C1}
F (\psi) = - \nabla _ {\overline \psi} U (\psi), \qquad \psi \in \Co,
\end {equation}
where $ U $ is  real function and $ \nabla _ {\overline \psi}: = (\partial_1, \partial_2) $.
In this case, the equation \eqref {KG1} is formally equivalent to Hamiltonian system 
(\ref{hasys})
in the Hilbert
phase space $ \cE: = H ^ 1 (\mathbb R) \oplus L ^ 2 (\mathbb R) $, where 
 $H ^ 1  := H ^ 1 (\mathbb R) $ and $L ^ 2  := L ^ 2 (\mathbb R) $.
The  Hamilton functional reads
\begin {equation} \label {U}
\cH (\psi, \pi) = \frac12 \int \Big [
| \pi (x) | ^ 2 + | \psi '(x) | ^ 2
+ m ^ 2 | \psi (x) | ^ 2 \Big] \, dx + U (\psi (0)), \qquad (\psi, \pi) \in \cE.
\end {equation}
Let us write \eqref {KG1i} in the vector form as 
\be\la{KG1}
\dot Y(t)=\cF(Y(t)),\qquad t\in\R,
\ee
where $ Y (t) = (\psi (t), \dot \psi (t))$.
We assume that
\begin {equation} \label {C2}
\inf _ {\psi \in \Co} U (\psi)> - \infty.
\end {equation}
In this case,  finite energy solution  $ Y (t)  \in C (\mathbb R, \cE) $
exists and is unique for any initial state $ Y (0) \in \cE $.
A priori bound
\begin {equation} \label {apri}
\sup_ {t \in \mathbb R} [\Vert \dot \psi (t) \Vert_ {L ^ 2 (\mathbb R)} + \Vert \psi (t) \Vert_ {H ^ 1 (\mathbb R )}] <\infty
\end {equation}
holds due to  conservation of energy (\ref {U}). Note that the condition \eqref {conf} is no longer necessary,
since conservation of energy (\ref {U})  with $ m> 0 $ provides the boundedness of   solutions.

Further, we assume $ U (1) $ -invariance of the potential:
\begin {equation} \label {C3}
U (\psi) = u (| \psi |), \qquad \psi \in \Co.
\end {equation}
Then differentiation (\ref {C1}) gives
\begin {equation} \label {Fap}
F (\psi) = a (| \psi |) \psi, \qquad \psi \in \Co,
\end {equation}
and therefore
\be\la{U1}
F (e ^ {i \theta} \psi) = e ^ {i \theta} F (\psi), \qquad \theta \in \mathbb R.
\ee
By ``stationary orbits''  we mean solutions of the form
\be\la{storb}
\psi (x, t) = \psi_\om (x) e^{- i \omega t}
\ee
 with $\omega \in \R$ and $\psi_\omega \in H^1(\R)$. 
Each stationary orbit corresponds to some solution 
to the equation
$$
- \omega ^ 2 \psi_ \omega (x) = \psi '' _ \omega (x) -m ^ 2 \psi_ \omega (x) + \delta (x)
F (\psi_ \omega (0)), \qquad x \in \mathbb R,
$$
which is the  \textit {nonlinear eigenvalue problem}.
Solutions $ \psi_ \omega \in H ^ 1 (\R) $ of this equation have the form
$ \psi_ \omega (x) \! = \! Ce ^ {- \kappa | x |} $, where $ \kappa \!: = \! \sqrt {m ^ 2 \! - \! \omega ^ 2 } \!> \! 0 $, 
and the constant $ C $ satisfies the nonlinear algebraic equation
$
2 \kappa C = F (C).
$
The solutions $\psi_\omega$ exist for $\omega$ from some set $\Omega \subset \R$, lying in the \textit {spectral gap} $[- m, m]$. 
Denote the corresponding \textit {solitary manifold}
\begin {equation} \label {Sol}
{\mathcal S} = \{(e ^ {i \theta} \psi_ \omega, -i \omega e ^ {i \theta} \psi_ \omega) \in \cE: \omega \in \Omega, ~ \theta \in [0,2 \pi] \}.
\end {equation}
Finally, suppose the equation \eqref {KG1} be \textit {strictly nonlinear:}
\begin {equation} \label {C4}
U (\psi) = u (| \psi | ^ 2) = \sum_0 ^ {N} u_ {j} | \psi | ^ {2j}, \quad u_N> 0, \quad N \ge 2.
\end {equation}
For example, well known  \textit {Ginzburg--Landau} potential 
$ U (\psi) = | \psi | ^ 4 / 4- | \psi | ^ 2/2 $ satisfies all the conditions \eqref {C2}, \eqref {C3} and \eqref {C4}.

\bd i) ${\cal E}_F\subset H^1_{loc}(\R^3)\oplus L^2_{loc}(\R^3)$ is the space $\cE$
endowed with the seminorms
\be\la{semin}
\Vert Y\Vert_{\cE,R}:=\Vert Y\Vert_{H^1(-R,R)}+\Vert Y\Vert_{L^2(-R,R)},
\qquad R=1,2,\dots
\ee
\smallskip\\
ii) The convergence in ${\cal E}_F$ is equivalent to the convergence in every
seminorm  (\ref{semin}).
\ed
The convergence in  ${\cal E}_F$ is equivalent to the convergence in the metric of type (\ref{metr}),
\begin {equation} \label {metrU}
{\rm dist} [Y_1, Y_2] =
\sum_1 ^ \infty 2 ^ {- R} \displaystyle \frac {\Vert Y_1-Y_2 \Vert _ {\cE,R}}
{1+ \Vert Y_1-Y_2 \Vert _ {\cE,R}}, \qquad Y_1, Y_2 \in \cE.
\end {equation}

\begin {theorem} \label {t5}
Let the conditions \eqref {C1}, \eqref {C2}, \eqref {C3} and \eqref {C4}  hold. 
Then any finite energy solution  $ Y (t) = (\psi (t), \dot \psi (t))\in C(\R,\cE) $ to equation \eqref {KG1} 
attracts to the solitary manifold {\rm (see Fig. \ref {fig-3})}:
\begin {equation} \label {ga}
Y (t) \toEF {\mathcal S}, \qquad t \to \pm \infty,
\end {equation}
where the attraction holds in the sense  \eqref {conv}.

\end {theorem}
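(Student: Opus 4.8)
It suffices to treat $t\to+\infty$. The strategy follows the method of \emph{omega-limit trajectories} already used for Theorem \ref{t1}: I would extract from the time-shifts of $Y(\cdot)$ limiting global solutions and then show, by a spectral argument, that each such limit is a single stationary orbit. First, the a priori bound \eqref{apri} together with the embedding $H^1(\R)\subset C(\R)$ shows that the oscillator variable $y(t):=\psi(0,t)$ is bounded and that the family $\{Y(\cdot+s)\}_{s>0}$ is bounded in $C(\R,\cE)$. Using \eqref{KG1} to control the time derivatives locally gives equicontinuity in the metric \eqref{metrU}, so from any $s_k\to\infty$ one extracts, by Ascoli--Arzel\`a, a subsequence with $Y(\cdot+s_{k'})\to Y_\infty$ in $C(\R,\cE_F)$, where $Y_\infty=(\psi_\infty,\dot\psi_\infty)$ is again a finite-energy solution of \eqref{KG1}.

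The heart of the matter is to prove that $Y_\infty$ lies on $\cS$. Away from $x=0$ the field solves the free Klein--Gordon equation, whose dispersion relation $\om^2=k^2+m^2$ makes frequencies $|\om|>m$ genuinely radiating. Arguing as in Section \ref{sWP}, the total energy radiated through $x=\pm a$ is finite, so the continuous-spectrum (dispersive) part decays in every local seminorm; hence the limit $Y_\infty$ carries no radiation, and the time-Fourier transform $\hat\psi_\infty(x,\om)$ is a quasimeasure whose $\om$-support lies in the gap $[-m,m]$. For $\om$ in the open gap the only bounded solution of the stationary equation is $\hat\psi_\infty(x,\om)=\hat y_\infty(\om)\,\E^{-\kappa(\om)|x|}$ with $\kappa(\om)=\sqrt{m^2-\om^2}>0$, and matching the jump of $\partial_x\hat\psi_\infty$ at $x=0$ to the source produces the nonlinear eigenvalue relation
\begin{equation*}
\widehat{F(y_\infty)}(\om)=2\kappa(\om)\,\hat y_\infty(\om),\qquad |\om|<m.
\end{equation*}

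Before using this I must exclude the endpoints $\om=\pm m$, where $\kappa$ vanishes and $\E^{-\kappa|x|}$ no longer decays; this is the nonlinear analogue of Kato's theorem on the absence of eigenvalues embedded at the edge of the continuous spectrum (cf.\ Section \ref{sKT}). Granting it, $\Lam:=\supp\hat y_\infty$ is a compact subset of $(-m,m)$, and since $\kappa>0$ there the relation above gives $\supp\widehat{F(y_\infty)}=\Lam$, in particular $\sup\supp\widehat{F(y_\infty)}=\om_+:=\sup\Lam$. On the other hand \eqref{C4} yields $F(z)=\sum_{j=1}^{N}c_j\,z^{j}\ov{z}^{\,j-1}$ with $c_N\neq0$ and $N\ge2$; writing $\om_-:=\inf\Lam$ and noting that $\widehat{\ov y_\infty}$ is supported in $-\Lam$, the Titchmarsh convolution theorem for quasimeasures of compact support gives, if $\om_+>\om_-$,
\begin{equation*}
\sup\supp\widehat{F(y_\infty)}=N\om_+-(N-1)\om_-,
\end{equation*}
the top frequency being attained only at $j=N$ and uncancelled since $c_N\neq0$. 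Comparing the two values of $\sup\supp\widehat{F(y_\infty)}$ forces $(N-1)(\om_+-\om_-)=0$, hence $\om_+=\om_-$ and $\Lam=\{\om_*\}$. Then $y_\infty(t)=C\E^{-\I\om_* t}$, so $\psi_\infty(x,t)=\psi_{\om_*}(x)\E^{-\I\om_* t}$ is a stationary orbit and $Y_\infty(0)\in\cS$; as $s_k\to\infty$ was arbitrary, $\mathrm{dist}[Y(t),\cS]\to0$, which is \eqref{ga} in the sense \eqref{conv}.

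The routine part is the final Titchmarsh computation. The genuine difficulty lies in the two preceding ingredients: proving that the omega-limit carries no radiation --- so that its spectrum is confined to the gap --- requires the quantitative analysis of the radiated energy (finiteness of the dispersion integral together with a Tauberian argument and the theory of quasimeasures), and this confinement must then be sharpened to the \emph{open} gap via the nonlinear Kato theorem. The exclusion of embedded frequencies at $\om=\pm m$, where $\kappa=0$ destroys the exponential bound state, is the crucial and most delicate point, for without it the support identity $\supp\widehat{F(y_\infty)}=\Lam$ --- and hence the whole Titchmarsh conclusion --- would break down.
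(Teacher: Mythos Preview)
Your overall architecture --- omega-limit trajectories, spectral confinement to the gap, Titchmarsh convolution --- is the paper's, and your Titchmarsh endgame via $\sup\supp\widehat{F(y_\infty)}=N\om_+-(N-1)\om_-$ is a legitimate variant of the paper's factoring $F(\gamma)=a(|\gamma|)\gamma=A(t)\gamma(t)$ followed by the inclusion $\supp(\tilde A*\tilde\gamma)\subset\supp\tilde\gamma$, which yields $[\supp\tilde A]=\{0\}$ and hence $|\gamma(t)|=\const$.

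However, you have misidentified where the hard step lies, and this distorts the whole middle of the argument. The ``nonlinear Kato theorem'' (Theorem~\ref{lKato}) is \emph{not} about excluding the endpoints $\pm m$; it asserts that the spectral density $\tilde\alpha(\om)$ of the \emph{original} solution is absolutely continuous on the continuous spectrum $\Sigma=\{|\om|>m\}$, with $\tilde\alpha\in L^1(\Sigma)$. It is proved by a Paley--Wiener type estimate --- one computes $\varepsilon\Vert e^{ik(\om+i\varepsilon)|x|}\Vert_{H^1}^2\to\om\,k(\om)>0$ for $|\om|>m$ and combines this with the a priori bound $\int\Vert\tilde\psi_+(\cdot,\om+i\varepsilon)\Vert_{H^1}^2\,d\om\le C/\varepsilon$ --- and \emph{not} by a dissipation integral plus a Tauberian argument as in Section~\ref{sWP}. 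This absolute continuity, via Riemann--Lebesgue, is exactly what makes the dispersive component $\psi_d$ decay and hence confines the omega-limit spectrum to the gap; it is also what feeds the splitting $\psi_+=\psi_d+\psi_b$ and the smoothness bounds on $\psi_b$ (Lemma~\ref{lbound}) that give the Ascoli--Arzel\`a compactness you invoke only loosely.

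Conversely, the endpoint exclusion you call ``the crucial and most delicate point'' is not needed at all. The paper works with the \emph{closed} gap $\supp\tilde\gamma\subset[-m,m]$; Titchmarsh requires only bounded supports. Even in your own formulation only the inclusion $\supp\widehat{F(y_\infty)}\subset\Lambda$ is used, and that follows from the equation regardless of whether $\kappa$ vanishes at the edge --- your contradiction $(N-1)(\om_+-\om_-)\le 0$ needs $\le$, not $=$. So the step you flag as essential does not exist, while the genuine analytic core (absolute continuity on $\Sigma$, the quasimeasure multiplier identity \eqref{KG6}, and the bounds of Lemma~\ref{lbound}) is what you pass over.
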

\begin{figure}[htbp]
\begin{center}
\includegraphics[width=0.8\columnwidth]{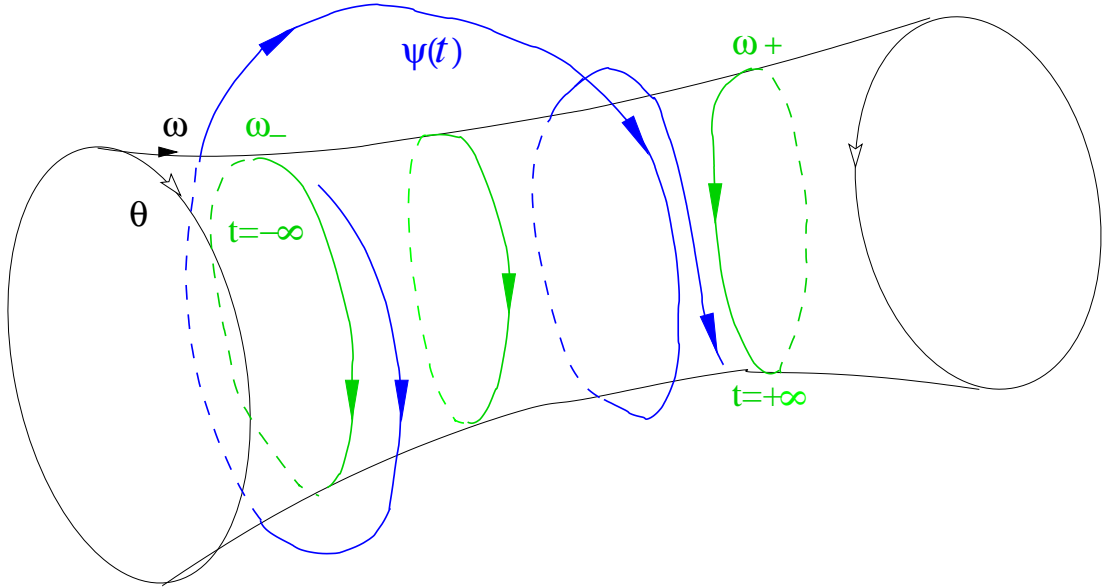}
\caption{Convergence to stationary orbits}
\label{fig-3}
\end{center}
\end{figure}

\textbf {Generalizations:}
The attraction (\ref {ga}) was extended in \cite {KK2010a} to   1D Klein--Gordon  equation with $ N $ nonlinear oscillators
\be\la{KGN}
\ddot \psi (x, t) = \psi '' (x, t) -m ^ 2 \psi +
\sum \limits_ {k = 1} ^ N \delta (x-x_k) F_k (\psi (x_k, t)), ~ x \in \mathbb R,
\ee
and in \cite {C2012, KK2009, KK2010b} - to the Klein--Gordon and Dirac equations
in $ \R ^ n $
with $ n \ge 3 $ and ~ non-local interaction
\begin {eqnarray}
\ddot \psi (x, t) & = & \Delta \psi (x, t) -m ^ 2 \psi +
\sum_ {k = 1} ^ N \rho (x \! - \! x_k) F_k (\langle \psi (\cdot, t), \rho (\cdot -x_k) \rangle), \label {KGn} \\
\nonumber \\
i \dot \psi (x, t)
& = &
\big (-i \alpha \cdot \nabla + \beta m \big) \psi
+ \rho (x) F (\langle \psi (\cdot, t), \rho \rangle), \la{Dn}
\end {eqnarray}
under Wiener's condition \eqref {W1}, where $ \alpha = (\alpha_1, \dotsc, \alpha_n) $ and $ \beta = \alpha_0 $ are Dirac matrices.
\\
Recently, the attraction (\ref {ga}) was extended  in \cite{KK2019} to 1D Dirac equation coupled to  nonlinear oscillator,
and in \cite{K2017, K2018, KK2019b} to 3D wave and Klein-Gordon equations with concentrated nonlinearities.
\smallskip

In addition, attraction (\ref {ga}) was extended in \cite {C2013} to non-linear discrete in space and time
Hamiltonian
equations that are discrete approximations of equations of the type (\ref {KGn}), i.e. corresponding difference schemes.
The proof relies on a novel version of the Titchmarsh theorem for distributions on a circle, obtained in ~ \cite {KK2013t}.
\medskip

\textbf {Open questions:}
\medskip \\
I. Global attraction (\ref {atU}) to orbits with fixed frequencies $ \omega_ \pm $ is not proved yet.
\medskip \\
II. Global attraction to stationary orbits for nonlinear Schr\"odinger equations
also is not proved. In particular, such  attraction is not proved 
for the 1D Schr\"odinger equation associated with a nonlinear oscillator
\be\la{S1}
i \dot \psi (x, t) = - \psi '' (x, t) + \delta (x) F (\psi (0, t)), \qquad x \in \mathbb R.
\ee
The main difficulty is the infinite ``spectral gap'' $(-\infty,0)$ 
(see Remark \ref {rST}).
\medskip \\
III. Global attraction to solitons \eqref {att} for nonlinear
{\it relativistically invariant}
 Klein--Gordon equations is an open problem.
In particular, for one-dimensional equations
\be\la{NWEn}
\ddot \psi (x, \! t) = \psi '' (x, t) -m ^ 2 \psi (x, t) + F (\psi (x, t)).
\ee
The main difficulty is the presence of nonlinear interaction in every point $x\in\R$.
Asymptotic stability of solitons
(that is, {\it local attraction} to them) for such equations was first proved in \ci{KopK2011a, KopK2011b}, see Section 6.3 below.

\subsection{Method of omega-limit trajectories} 

The proof of Theorem \ref{t5} relies on a general strategy
 of {\it omega-limit trajectories}
 introduced first
in \ci{K2003}
and developed further in 
\cite {KK2006,KK2007,KK2010a,KK2008,KK2009,KK2010b,KK2013t,
K2017, KK2019b,KK2019,C2012,C2013}.

\bd\la{domt}
An omega-limit trajectory for a given 
$Y(t)\in C(\R,\cE) $ is any limit function $Z(t)$ such that
\be\la{omt}
Y(t+s_j)\toEF   Z(t)
,\qquad t\in\R,
\ee
where $s_j\to\infty$.

\ed


\bd
A function $Y (t)\in C(\R,\cE) $ is omega-compact if for any sequence $s_j\to\infty$
there exists such a subsequence $s_{j'}\to\infty$ that (\ref{omt}) holds.

\ed

These concepts are useful due to the following lemma
which lies in the basis our approach.

\bl\la{lomtr}
Let any solution $Y (t)\in C(\R,\cE) $ to (\ref{KG1}) be omega-compact,
and any omega-limit trajectory is a stationary orbit
\be\la{gab4}
Z(x,t)=(\psi_\om(x)e^{-i\om t},-i\om \psi_\om(x)e^{-i\om t}),
\ee
where $\om\in\R$. Then the attraction (\ref{ga}) holds for 
each solution $Y (t)\in C(\R,\cE) $ to (\ref{KG1}).
\el
\begin{proof}
We need to show that  
$$
\lim_{t\to\infty}{\rm dist}(Y(t),{\cal S})=0.
$$
Assume by contradiction that there exists a sequence $s_j\to \infty$ such that
\begin{equation}\label{cal-C}
{\rm dist}(Y(s_j),{\cal S})\ge\delta>0,\quad\forall j\in\N.
\end{equation}
  According to the omega-compactness of the solution $Y$,  the convergence
  (\ref{omt}) holds for some
   subsequence $s_{j'}\to\infty$, 
and some stationary orbit   (\ref{gab4}):
\be\la{Ysj}
Y(t+s_j)\toEF Z(t),\qquad t\in\R.
\ee
 But 
this convergence with $t=0$  contradicts \eqref{cal-C}
since $Z(0)\in\cS$ by definition (\ref{Sol}).
\end{proof}
Now for the proof of  Theorem \ref {t5} is suffices to check the conditions
of Lemma \ref{lomtr}:
\medskip 

I. Each solution $Y (t)\in C(\R,\cE) $ to (\ref{KG1}) is omega-compact.
\medskip 

II. Any omega-limit trajectory is a stationary orbit (\ref{gab4}).
\medskip 

We check these conditions analysing the Fourier transform in time of solutions.
The main steps of the proof are as follows:
\medskip \\
(1) Spectral representation  for solutions to  nonlinear equation (\ref {KG1}):
\be\la{FL}
\psi(t)=\fr 1{2\pi}\int e^{-i\om t}\ti\psi(\om)d\om.
\ee
We call \textit {spectrum} of a solution $\psi(t):=\psi(\cdot,t)$ the support of its spectral density $\ti\psi(\cdot)$ which is
a tempered distribution of $\om\in\R$ with the values in $H^1$.
\smallskip \\
(2) {\it Absolute continuity} of the spectral density $\ti\psi(\om)$
{\it on the continuous spectrum} $(-\infty,-m)\cup(m,\infty)$ of the free Klein--Gordon equation, which is an analogue of the Kato theorem on the absence of embedded eigenvalues.
\smallskip \\
(3) {\it Omega-limit compactness} of each solution.
\smallskip \\
(4) Reduction of spectrum {\it of each omega-limit trajectory} to a subset of 
the {\it spectral gap} $[-m,m]$.
\smallskip \\
(5) Reduction of this spectrum to a {\it single point} using the {\it Titchmarsh convolution theorem}.
\smallskip

Below we follow this program,
referring at some points to \cite {K2003, KK2007} for 
technically important 
properties of quasi-measures. 

\subsection {Spectral representation and limiting absorption principle}
It suffices to prove attraction (\ref {ga}) only for positive times.
For the simplicity of exposition
we consider 
the solution  $\psi(x,t)$ to equation (\ref{KG1i})
corresponding to zero initial data only:
\begin {equation} \label {ini0}
\psi (x, 0) = 0, \qquad \dot \psi (x, 0) = 0.
\end {equation}
 General case of nonzero initial data can be reduced
to this case by a
trivial subtraction of the 
   solution to the free Klein--Gordon equation  with these initial data
   which is a dispersion wave
    \cite {K2003,KK2007}.
We extend $ \psi (x, t) $ and $ f (t): = F (\psi (0, t)) $ by zero for $ t <0 $ and denote
\begin {equation} \label {gap2}
\psi _ + (x, t): =
\left \{
\begin {array} {ll}
\psi (x, t), & t> 0, \\
0, & t <0,
\end {array} \right.
\qquad f _ + (t): =
\left \{
\begin {array} {ll}
f (t), & t> 0, \\
0, & t <0.
\end {array} \right.
\end {equation}
From \eqref {KG1} and (\ref {ini0}) it follows that these functions satisfy the equation 
\begin {equation} \label {KG2}
\ddot \psi _ + (x, t) = \psi _ + '' (x, t) -m ^ 2 \psi _ + (x, t) + \delta (x) f _ + (t), \qquad (x, t) \in \mathbb R ^ 2
\end {equation}
in the sense of distributions.
\medskip \\
{\bf Fourier-Laplace transform in time.}
For tempered  distributions $ g (t) $
we denote by $ \tilde g (\omega) $ their Fourier transform,
which  is defined for $ g \in C_0 ^ \infty (\mathbb R) $ as
$$
\tilde g (\omega) = \int_ \mathbb R e ^ {i \omega t} g (t) \, dt, \qquad \omega \in \mathbb R.
$$
A priori estimates \eqref {apri} imply that
 $ \psi _ + (x, t) $ and $ f _ + (t) $ are bounded functions of $ t \in \mathbb R $
with values
in the Sobolev space $ H ^ 1 (\mathbb R) $ and in $ \Co $, respectively.
Therefore, their Fourier transforms are  (by definition) {\it quasi-measures}
with values
in $ H ^ 1 (\mathbb R) $ and in $ \Co $, respectively \ci {Gaudry1966}.
Moreover,
these Fourier transforms allow an extension from the real axis to analytic functions
in the upper complex half-plane $\Co ^ +: = \{\omega \in \Co: ~ \rIm \omega> 0 \} $ with values in $ H ^ 1 (\mathbb R) $ and in $ \Co $ respectively:
$$
\tilde \psi _ + (x, \omega) = \int_0 ^ \infty e ^ {i \omega t} \psi (x, t) \, dt, \qquad \tilde f _ + (\omega) = \int_0 ^ \infty e ^ {i \omega t} f (t) \, dt, \qquad \omega \in \Co ^ +.
$$
Further, 
we have the following convergence of tempered distributions 
with values in $H^1$ and $\Co$ respectively,
$$
e^{-\ve t}\psi _ + (x, t)\to \psi _ + (x, t),\qquad e^{-\ve t}f _ + (t)\to f_ + (t),
\qquad \ve\to 0+.
$$
Hence, also their Fourier transforms converge in the same sense, 
\be\la{bv}
\ti\psi _ + (x, \om+i\ve)\to \ti\psi _ + (x, \om),\qquad \ti f _ + (\om+i\ve)\to \ti f_ + (\om),
\qquad \ve\to 0+.
\ee
The analytic functions 
$\ti\psi_+(x,\om)$ and $\ti f(\om)$
grow (in the norm) not faster than $ | \rIm \omega | ^ {- 1} $ as
$ \rIm \omega \to 0 + $
in view of (\ref {apri}). Hence,
their boundary values at $\omega \in \mathbb R $ are tempered distributions of a small singularity:
they are second order derivatives of continuous functions, as in the case of $ \tilde f _ + (\omega) = i /( \omega- \omega_0) $ with $ \omega_0 \in \R $,
which corresponds to
$ f _ + (t) = \theta (t) e ^ {- i \omega_0 t} $.
\medskip \\
{\bf Limiting Absorption Principle.} By \eqref {ini0} the equation
(\ref {KG2}) in the Fourier transform becomes  stationary Helmholtz equation
\be\la{Helm}
- \omega ^ 2 \tilde \psi _ + (x, \omega) = \tilde \psi _ + '' (x, \omega) -m ^ 2 \tilde \psi _ + (x, \omega) + \delta (x ) \tilde f _ + (\omega), \qquad x \in \mathbb R.
\ee
This equation has two linearly independent solutions, but only one of these solutions is analytic and bounded in 
$ \rIm \omega> 0 $ with values in $ H ^ 1 (\mathbb R) $:
\begin {equation} \label {KG4}
\tilde \psi _ + (x, \omega) = - \tilde f _ + (\omega) \frac {e ^ {ik (\omega) | x |}} {2ik (\omega)}, \qquad \rIm \omega> 0.
\end {equation}
Here $ k (\omega): = \sqrt {\omega ^ 2-m ^ 2} $, where the branch has 
a positive imaginary part for $ \rIm \omega> 0 $.
For other branch, this function {\it grows exponentially} as $ | x | \to \infty $.
Such an argument in the selection of  solutions
to stationary Helmholtz equations
 is known as the ``limiting absorption principle''
in the diffraction theory \cite {KopK2012, KM2019}.
\medskip \\
{\bf Spectral representation.}
We rewrite (\ref {KG4}) in the form
\begin {equation} \label {KG5}
\tilde \psi _ + (x, \omega) = \tilde \alpha (\omega) e ^ {ik (\omega) | x |}, \qquad \rIm \omega> 0;
\qquad  \alpha (t): = \psi _ + (0, t).
\end {equation}
A nontrivial fact is that the identity of
analytic functions (\ref {KG5})
keeps its structure for their restrictions
onto the real axis:
\begin {equation} \label {KG6}
\tilde \psi _ + (x, \omega + i0) = \tilde \alpha (\omega + i0) e ^ {ik (\omega + i0) | x |}, \qquad \omega \in \mathbb R,
\end {equation}
where $ \tilde \psi _ + (\cdot, \omega + i0) $ and $ \tilde \alpha (\omega + i0) $ are the corresponding
quasi-measures with values in $ H ^ 1 (\mathbb R) $ and $ \Co $, respectively.
The problem is that the factor $ M_x (\omega): = e ^ {ik (\omega + i0) | x |} $ is not smooth
in $ \omega $
at the points $ \omega = \pm m $. Respectively, the identity (\ref {KG6}) requires a justification, 
based on the quasi-measure theory
\ci {KK2007}.
\medskip

Finally, the inversion of the Fourier transform can be written as
\begin {equation} \label {FLi}
\psi _ + (x, t) = \frac1 {2 \pi} \langle \tilde \psi _ + (x, \omega + i0), e ^ {- i \omega t} \rangle
= \frac1 {2 \pi} \langle
\tilde \alpha (\omega + i0) e ^ {ik (\omega + i0) | x |}, e ^ {- i \omega t} \rangle, \quad x, t \in \R,
\end {equation}
where$ \langle \, \cdot {,} \, \cdot \rangle $ is a bilinear duality between distributions and smooth bounded functions. 
The right hand side  exists by the Theorem \ref{Kato}, see below.

\subsection {Nonlinear analogue of Kato's theorem}\la{sKT}
It turns out that the properties of the quasimeasures $ \tilde \alpha (\omega + i0) $ with $ | \omega | <m $ and with $ | \omega |> m $
significantly differ. This is due to the fact that the set $ \{i \omega: | \omega | \ge m \} $ is the 
continuous spectrum of the generator
$$
A = \left (\begin {array} {cc} 0 & 1 \\
\frac {d ^ 2} {dx ^ 2} -m ^ 2 & 0
\end {array}\right),
$$
which is the generator of the linearisation  of  equation \eqref {KG1}. The following theorem plays a key role in the proof of Theorem \ref{t5}.
It is a nonlinear analogue of Kato's theorem on the absence of embedded eigenvalues in the continuous spectrum, see Remark \ref{rK} below.
Denote $ \Sigma: = \{\omega \in \mathbb R: | \omega |> m \} $, and we will write below $ \tilde \alpha (\omega) $
and $ k (\omega) $ instead of $ \tilde \alpha (\omega + i0) $ and $ k (\omega + i0) $ for $ \omega \in \R $.

\bt \label {lKato} {\rm (\cite [Proposition 3.2] {KK2007})}
Let conditions \eqref {C1}, \eqref {C2} and \eqref {C3} hold, and $\psi (t)\in C (\R, \cE)$ is any finite energy solution to the equation \eqref {KG1}. 
Then the corresponding tempered distribution $ \tilde \alpha (\omega) $ is absolutely continuous on $ \Sigma $. 
Moreover, $ \al \in L ^ 1 (\Si) $ and
\begin {equation} \label {Kato}
\int_ \Sigma | \tilde \alpha (\omega) | ^ 2 \5 | \omega \5 k (\omega) | \5 d \omega <\infty.
\end {equation}
\et
\begin{proof}
First,
let us first explain the main idea of the proof.
By (\ref{FLi}), the function $\psi_+(x,t)$ formally is a ``linear combination'' of the functions $e\sp{ik|x|}$ with the
{\it amplitudes} $\hat z(\omega)$:
$$
\psi_+(x,t)=\frac 1{2\pi}\int\sb\R \hat z(\omega)e\sp{ik(\omega)|x|}e\sp{-i\omega t}\,d\omega,\qquad x\in\R.
$$
For $\omega\in\Si $, the functions $e\sp{ik(\omega)|x|}$ are of infinite $L^2$-norm, while $\psi_+(\cdot,t)$ is of finite $L^2$-norm.
This is possible only if the amplitude is absolutely continuous in $\Si $.
This idea is  suggested by the Fourier integral $f(x)=\ds\int_\R e^{-ikx}g(k)dk$ which belongs to $L^2(\R)$ if and only if $g\in L^2(\R)$.
For example, if one took 
$\hat z(\omega)=\delta(\omega-\omega\sb 0)$ with $\omega\sb 0\in\Si $, then $\psi_+(\cdot,t)$ would be of infinite $L^2$-norm.
\smallskip

 The rigorous proof relies on estimates of the Paley-Wiener type. Namely, the Parseval identity and (\ref{apri}) imply that
\begin{equation}\label{PW}
\int\limits\sb\R\norm{\tilde\psi_+(\cdot,\omega+i\varepsilon)}\sb{H\sp 1}^ 2\,d\omega=2\pi\int\limits\sb 0\sp\infty e\sp{-2\varepsilon t}
\norm{\psi_+(\cdot,t)}\sb{H\sp 1}^ 2\,dt\le\frac{\const}{\varepsilon},\quad \varepsilon>0.
\end{equation}
On the other hand, we can estimate exactly the integral on the left-hand side of (\ref{PW}). Indeed, according to (\ref{FLi}),
\[
\tilde\psi_+(\cdot,\omega+i\varepsilon)=\ti\al (\omega+i\varepsilon)e^{ik(\omega+i\varepsilon)|x|}.
\]
Hence, (\ref{PW}) results in
\begin{equation}\label{PW-1}
\varepsilon\int\sb\R|\ti\al (\omega+i\varepsilon)|^2\norm{e^{ik(\omega+i\varepsilon)|x|}}\sb{H\sp 1}\sp 2\,d\omega\le\const,
\qquad \varepsilon>0.
\end{equation}

Here is a crucial observation about the  asymptotics of the norm of 
$e^{ik(\omega+i\varepsilon)|x|}$ as $\ve\to 0+$.
\begin{lemma}
\begin{enumerate}
\item
For $\omega\in\R$,
\begin{equation}\label{wei}
\lim\sb{\varepsilon\to 0+}\varepsilon \norm{e^{ik(\omega+i\varepsilon)|x|}}\sb{H\sp 1}\sp 2=n(\omega):=
\left\{\begin{array}{ll}\omega k(\omega),&|\omega|>m\\0,&|\omega|<m
\end{array}\right.,
\end{equation}
where the norm in $H\sp 1$ is chosen to be
$\norm{\psi}\sb{H\sp 1}=\left(\norm{\psi'}\sb{L\sp 2}^2+m^2\norm{\psi}\sb{L\sp 2}^2\right)\sp{1/2}.$
\item
For any $\delta>0$ there exists such $\varepsilon\sb\delta>0$  that for $|\omega|>m+\delta$ and $\varepsilon\in(0,\varepsilon\sb\delta)$,
\begin{equation}\label{n-half}
\varepsilon \norm{e^{ik(\omega+i\varepsilon)|x|}}\sb{H\sp 1}\sp 2\ge n(\omega)/2.
\end{equation}
\end{enumerate}
\end{lemma}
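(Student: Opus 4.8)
The plan is to reduce the $H^1$-norm to an elementary integral and then track the exact rate at which $\mathrm{Im}\,k(\omega+i\varepsilon)$ vanishes. Write $\kappa=k(\omega+i\varepsilon)=a+ib$ with $b=\mathrm{Im}\,\kappa>0$ (by the branch choice), and set $u(x)=e^{i\kappa|x|}$. Since $u$ is continuous and $u'(x)=i\kappa\,\mathrm{sgn}(x)\,u(x)$ away from the origin (no $\delta$-term arises, as $u$ is continuous at $0$), a direct computation of the two exponential integrals over $\R$ gives
\[
\|u\|_{L^2}^2=\frac1b,\qquad \|u'\|_{L^2}^2=|\kappa|^2\,\|u\|_{L^2}^2=\frac{|\kappa|^2}{b},
\]
so that, with the weighted norm of the statement,
\[
\|u\|_{H^1}^2=\frac{|\kappa|^2+m^2}{b}.
\]

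Next I would express $|\kappa|^2$ and $b$ through $\kappa^2=(\omega+i\varepsilon)^2-m^2=p+iq$, where $p=\omega^2-\varepsilon^2-m^2$ and $q=2\omega\varepsilon$. The identity $|\kappa|^2=|\kappa^2|=\sqrt{p^2+q^2}=:R$, together with the standard square-root formula $b=\mathrm{Im}\,\kappa=\sqrt{(R-p)/2}$, turns the above into $\|u\|_{H^1}^2=(R+m^2)\sqrt{2/(R-p)}$. The decisive step is the cancellation of $\varepsilon$: since $R-p=(R^2-p^2)/(R+p)=q^2/(R+p)=4\omega^2\varepsilon^2/(R+p)$, one obtains the exact formula
\[
\varepsilon\,\|e^{ik(\omega+i\varepsilon)|x|}\|_{H^1}^2=\frac{(R+m^2)\sqrt{R+p}}{\sqrt2\,|\omega|},\qquad \omega\neq0.
\]
This closed form removes the apparent singularity in $\varepsilon$ and makes both assertions transparent.

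For part (1) I would let $\varepsilon\to0+$ in the exact formula, so that $p\to\omega^2-m^2$ and $R\to|\omega^2-m^2|$. If $|\omega|>m$, then $R\to\omega^2-m^2$, $R+p\to2(\omega^2-m^2)$ and $R+m^2\to\omega^2$, whence the right-hand side tends to $|\omega|\sqrt{\omega^2-m^2}=\omega k(\omega)=n(\omega)$; the sign is correct because the branch gives $k(\omega)=\mathrm{sgn}(\omega)\sqrt{\omega^2-m^2}$ on $\{|\omega|>m\}$. If $|\omega|<m$, then $R+p\to0$ while $R+m^2$ stays bounded, so the limit is $0=n(\omega)$.

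Part (2) is the uniform lower bound, and this is where the spectral gap $\delta$ is genuinely used. For $|\omega|>m+\delta$ and $\varepsilon$ small one has $p>0$, hence $R\geq p$ yields $R+p\geq2p$ and $R+m^2\geq p+m^2=\omega^2-\varepsilon^2$. Dividing the exact formula by $n(\omega)=|\omega|\sqrt{\omega^2-m^2}$ gives
\[
\frac{\varepsilon\,\|e^{ik(\omega+i\varepsilon)|x|}\|_{H^1}^2}{n(\omega)}\ \geq\ \Big(1-\frac{\varepsilon^2}{\omega^2}\Big)\sqrt{1-\frac{\varepsilon^2}{\omega^2-m^2}}\ \geq\ \Big(1-\frac{\varepsilon^2}{(m+\delta)^2}\Big)\sqrt{1-\frac{\varepsilon^2}{2m\delta+\delta^2}},
\]
the final bound depending only on $\delta$. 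Choosing $\varepsilon_\delta$ so that this quantity exceeds $1/2$ for $\varepsilon\in(0,\varepsilon_\delta)$ establishes \eqref{n-half}. The only delicate point worth flagging is precisely this uniformity: as $|\omega|\downarrow m$ the factor $\omega^2-m^2$ degenerates, so a bound uniform down to $|\omega|=m$ is false and the restriction $|\omega|>m+\delta$ cannot be dropped; everything else is a routine limit computation built on the exact formula above.
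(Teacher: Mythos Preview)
Your proof is correct. Both you and the paper arrive at an exact closed form for $\varepsilon\,\|e^{ik(\omega+i\varepsilon)|x|}\|_{H^1}^2$, but by different routes. The paper works in Fourier space: it computes the Fourier transform $\widehat{u}(k)=2ik_\varepsilon/(k_\varepsilon^2-k^2)$, writes the $H^1$ norm as an integral over the Fourier side, and evaluates it via the Cauchy residue theorem, obtaining
\[
\varepsilon\,\|e^{ik(\omega+i\varepsilon)|x|}\|_{H^1}^2=\mathrm{Re}\!\left[\frac{(\omega+i\varepsilon)^2\,\overline{k(\omega+i\varepsilon)}}{\omega}\right].
\]
You work in physical space: a direct $L^2$ integration of $e^{-2b|x|}$ gives $\|u\|_{H^1}^2=(|\kappa|^2+m^2)/b$, and then algebra with $p,q,R$ produces your closed form. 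The two formulas are the same (writing $a=\mathrm{Re}\,\kappa=\mathrm{sgn}(\omega)\sqrt{(R+p)/2}$ reconciles them), so the limits in part~(1) coincide. Your derivation is more elementary here since it avoids the residue calculation; the paper's Fourier route is consistent with the spectral machinery used elsewhere in the section. For part~(2) the paper argues qualitatively from continuity and the asymptotics $n(\omega)\sim|\omega|^2$, whereas you give an explicit quantitative lower bound depending only on $\delta$; your version makes the necessity of the gap $|\omega|>m+\delta$ (through the factor $\omega^2-m^2\geq 2m\delta+\delta^2$) more transparent.
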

\begin{proof}
Let us compute the $H\sp 1$-norm using the Fourier space representation. Setting $k\sb\varepsilon=k(\omega+i\varepsilon)$,
so that $\rIm k\sb\varepsilon>0$, we get
$F\sb{x\to k}\left[e^{ik\sb \varepsilon|x|}\right]=2ik\sb\varepsilon/(k\sb \varepsilon^2-k^2)$ for $k\in\R$.
Hence, by the Cauchy theorem
$$
\norm{e^{ik\sb \varepsilon|x|}}\sb{H\sp 1}\sp 2=\frac{2|k\sb \varepsilon|^2}\pi\int\sb{\R}\frac{(k^2+m^2)dk}{|k\sb \varepsilon^2-k^2|^2}
=-4\,\rIm\!\left[\frac{(k\sb \varepsilon^2+m^2)\Bar{k\sb \varepsilon}}{k\sb \varepsilon^2-\Bar{k\sb \varepsilon}^2}\right].
$$
Substituting here  $k\sb\varepsilon^2=(\omega+i\varepsilon)^2-m^2$, we get
$$
\norm{e^{ik(\omega+i\varepsilon)|x|}}\sb{H\sp 1}\sp 2=\frac{1}{\varepsilon}
\rRe\left[\frac{(\omega+i\varepsilon)^2\overline{k(\omega+i\varepsilon)}}{\omega}\right],
\quad\varepsilon>0, \quad \omega\in\R,\quad \omega\ne 0.
$$
Now the limits (\ref{wei}) follow since the function $k(\omega)$ is real for $|\omega|>m$, but is purely imaginary for  $|\omega|<m$.
Hence, the second statement of the Lemma also follows since $n(\omega)>0$ for $|\omega|>m$, and
$n(\omega)\sim|\omega|^2$ for $|\omega|\to\infty$.
\end{proof}

\begin{remark} 
Obviously,
$n(\omega)\equiv 0$  for $|\omega|<m$ without any calculations, since in that case the function
$e^{ik(\omega)|x|}$ decays exponentially in $x$, and hence, the $H\sp 1$-norm of $e^{ik(\omega+i\varepsilon)|x|}$
remains finite when $\varepsilon\to 0+$.
\end{remark}

Substituting (\ref{n-half}) into (\ref{PW-1}), we get:
\begin{equation}\label{fin}
\int\sb{\Si_\de }|\ti\al (\omega+i\varepsilon)|\sp 2\omega k(\omega)\,d\omega\le 2C,
\qquad 0<\varepsilon<\varepsilon\sb\delta,
\end{equation}
with the same $C$ as in (\ref{PW-1}), and the region $\Si_\de:= \{\om\in\R:|\om|>m+\de\}$.
We conclude that for each $\delta>0$ the set of functions
\[
g_\ve(\omega)=\ti\al (\omega+i\varepsilon)|\omega k(\omega)|^{1/2},
\qquad\varepsilon\in(0,\varepsilon\sb\delta),
\]
is bounded in the Hilbert space $L\sp 2(\Si_\de )$, and, by the Banach Theorem, is weakly compact.
Hence,
the convergence of the distributions (\ref{bv}) implies the  weak convergence in the Hilbert space $L\sp 2(\Si_\de )$:
$$
g_\ve \rightharpoonup g,\qquad \varepsilon\to 0+,
$$
where the limit function $g(\omega)$ coincides with the distribution
$\hat z(\omega)|\omega k(\omega)|^{1/2}$ restricted onto ${\Si_\de }$.
It remains to note that the norms of  $g$ 
in $L\sp 2(\Si_\de )$ with all $\de>0$ are
 bounded  by (\ref{fin}),
which implies (\ref{Kato}). Finally, $\ti \al(\omega)\in L\sp 1(\Bar{\Si })$ by (\ref{Kato}) and the Cauchy-Schwarz inequality.
\end{proof}

\br\la{rK}
Theorem \ref{lKato} is a nonlinear analogue of Kato's theorem on the absence of embedded eigenvalues in the continuous spectrum. Indeed,  solutions of type $\psi_*(x)e^{-i\om_* t}$ become
$\psi_*(x)[\pi i\de(\om-\om_*)+v.p. \fr 1{i(\om-\om_*)}]$
in the Fourier--Laplace transform
that is forbidden for $|\om_*|>m$ by Theorem \ref{lKato}.
\er
\subsection {Splitting onto dispersion and bound components}
Theorem \ref {lKato} suggests a splitting of the  solutions (\ref {FLi}) onto a ``dispersion'' and a ``bound'' components
\begin {eqnarray}
\psi _ + (x, t) & = & \frac1 {2 \pi} \int_ \Sigma (1- \zeta (\omega)) \tilde \alpha (\omega) e ^ {ik (\omega) | x |} e ^ {- i \omega t} d \omega +
\frac1 {2 \pi} \langle \zeta (\omega) \tilde \alpha (\omega) e ^ {ik (\omega) | x |}, e ^ {- i \omega t} \rangle
\nonumber \\
\nonumber \\
& = & \psi_d (x, t) + \psi_b (x, t), \qquad t> 0, \qquad x \in \mathbb R, \la{pdb}
\end {eqnarray}
where
$$
\zeta (\omega) \in C_0 ^ \infty (\mathbb R), \qquad \zeta (\omega) = 1 \quad \mbox {\rm when} \quad \omega \in [-m-1, m +1].
$$
Note that $ \psi_d (x, t) $ is a dispersion wave, because
$$
\psi_d (x, t): = \frac1 {2 \pi} \int_ \Sigma (1- \zeta (\omega)) e ^ {- i \omega t} \tilde \alpha (\omega) e ^ { ik (\omega) | x |} d \omega \to 0, \qquad t \to \infty
$$
according to the Riemann--Lebesgue theorem, since $ \al \in L ^ 1 (\Si) $ by Theorem \ref {lKato}. Moreover, it is easy to prove that
\begin {equation} \label {psid}
(\psi_d (\cdot, t), \dot \psi_d (\cdot, t)) \to 0, \qquad t \to \infty
\end {equation}
in the seminorms \eqref {cER}. Therefore, it remains to prove the attraction
(\ref {ga}) for $ Y_b (t): = (\psi_b (\cdot, t), \dot \psi_b (\cdot, t)) $ instead of $ Y (t) $:
\begin {equation} \label {gapb}
Y_b (t) \tocEF {\mathcal S}, \qquad t \to \infty.
\end {equation}
\subsection {Omega-compactness}

Here we establish  the omega-compactness of the trajectrory $Y_b(t)$ that
is necessary for the application of Lemma \ref{lomtr}.
First, we note that the bound 
component $ \psi_b (x, t) $ is a smooth function for $ x \ne 0 $, and
\begin {equation} \label {gab2}
\partial_x ^ j \partial_t ^ l \psi_b (x, t) = \frac1 {2 \pi} \langle \zeta (\omega) (ik (\omega) \operatorname {sgn} \5 x) ^ j \tilde \alpha (\omega) e ^ {ik (\omega) | x |},
(-i \omega) ^ l e ^ {- i \omega t} \rangle,
\qquad t> 0, \quad x \ne 0
\end {equation}
for any $j,l=0,1,\dots$.
These formulas should be justified
since the function $ k (\omega) $ is not smooth at the points $ \om = \pm m $.
The needed justification is done in  \ci {K2003, KK2007} by a suitable development of 
the theory of  quasimeasures.
These formulas imply the boundedness of each derivative:
\begin {lemma} \label {lbound} {\rm (\cite [Proposition 4.1] {KK2007})}
For all $ j, l = 0,1,2, \dotsc $ and $ R> 0 $
\begin {equation} \label {gab22}
\sup_ {x \ne 0} \, \, \sup_ {t \in \mathbb R} | \partial_x ^ j \partial_t ^ l \psi_b (x, t) | <\infty.
\end {equation}
\end {lemma}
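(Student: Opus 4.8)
The plan is to start from the representation \eqref{gab2} and to read its right-hand side, for fixed $j,l$, as the pairing of the quasi-measure $\ti\al(\om)$ with the test symbol
\[
n_x(\om):=\zeta(\om)\,(ik(\om)\operatorname{sgn}\,x)^j(-i\om)^l\,e^{ik(\om)|x|},
\]
modulated by $e^{-i\om t}$. Since the $t$-dependence enters only through the unimodular factor $e^{-i\om t}$, which after inverse Fourier transform is merely a translation in the time variable, it is harmless for a uniform bound, and the whole problem reduces to controlling $n_x$ uniformly in $x\ne 0$. Recall that $\ti\al=\cF_t\al$ with $\al(t)=\psi_+(0,t)$ bounded in time by the a priori estimate \eqref{apri}, so that $\ti\al$ is indeed a quasi-measure. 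The cut-off $\zeta$ localizes $n_x$ to a fixed compact set $\supp\zeta\subset\{|\om|\le M\}$, and the two genuinely different regimes are separated by the boundary $|\om|=m$ of the continuous spectrum $\Si$.

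First I would split $\ti\al=\ti\al\,\chi_\Si+\ti\al_{\rm gap}$, where $\ti\al_{\rm gap}$ is supported in the gap $[-m,m]$. On $\Si$ the decisive input is Theorem \ref{lKato}: there $\ti\al$ is \emph{absolutely continuous} and $\ti\al\in L^1(\Si)$. Since on the compact band $\{m<|\om|\le M\}$ the factor $(ik\operatorname{sgn}\,x)^j(-i\om)^l$ is bounded by $M^{j+l}$ while $|e^{ik(\om)|x|}|=1$ (here $k$ is real), the $\Si$-contribution to \eqref{gab2} is dominated by $M^{j+l}\Vert\ti\al\Vert_{L^1(\Si)}$, uniformly in $x$ and $t$. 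This is the easy half, and it is exactly where the Kato-type Theorem \ref{lKato} enters: absolute continuity on the continuous spectrum turns the would-be oscillatory part of \eqref{gab2} into an absolutely convergent integral.

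The gap contribution is the delicate one. For $|\om|<m$ one has $k(\om)=i\kappa(\om)$ with $\kappa(\om)=\sqrt{m^2-\om^2}>0$, so the oscillatory factor becomes the decaying factor $e^{-\kappa(\om)|x|}\le 1$; in the interior of the gap this exponential decay absorbs any power of $|x|$ produced by differentiating the symbol, so that $\ti\al_{\rm gap}\cdot n_x$ is again a quasi-measure whose Wiener-algebra ($\cF L^1$) norm is bounded uniformly in $x$. The inversion pairing with $e^{-i\om t}$ then returns a function of $t$ bounded by $\Vert\al\Vert_{L^\infty}$ times this uniform norm — precisely the quasi-measure calculus of \cite{K2003,KK2007} that already justifies \eqref{gab2}.

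The main obstacle is the behaviour at the branch points $\om=\pm m$, which sit exactly on the boundary of $\supp\ti\al_{\rm gap}$ and where the two regimes meet. There $\kappa(\om)\sim\sqrt{2m}\,\sqrt{m-|\om|}$, the symbol is only H\"older-$1/2$, and the localizing decay $e^{-\kappa(\om)|x|}$ degenerates to $1$, so it no longer suppresses the $|x|$-growth coming from $\om$-derivatives of $e^{ik(\om)|x|}$. The remedy is the substitution $u^2=m-|\om|$, whose Jacobian $2u\,du$ cancels the square-root singularity and recasts the relevant integrals as oscillatory integrals with smooth amplitude, yielding the required $x$-uniform Wiener-algebra bound. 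Carrying out this branch-point bookkeeping rigorously is the technical heart of the argument and is done through the quasi-measure estimates of \cite{KK2007}; once it is in place, the $\Si$- and gap-contributions combine to give the uniform bound \eqref{gab22}.
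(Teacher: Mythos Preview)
Your argument is correct, but it takes a detour that the paper avoids. The paper's proof is shorter and more uniform: it does \emph{not} split $\tilde\alpha$ at all and does \emph{not} invoke Theorem~\ref{lKato}. Instead it observes that the entire multiplier
\[
n_x(\omega)=\zeta(\omega)(ik(\omega)\operatorname{sgn}x)^j(-i\omega)^le^{ik(\omega)|x|}
\]
lies in the Wiener algebra $\cF L^1$, i.e.\ $n_x=\tilde g_x$ with $\{g_x:x\ne 0\}$ bounded in $L^1(\R)$. Then the Parseval identity turns the right-hand side of \eqref{gab2} into the convolution $\langle\alpha(t-s),g_x(s)\rangle$, which is bounded by $\Vert\alpha\Vert_{L^\infty}\sup_x\Vert g_x\Vert_{L^1}$. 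The only input about $\tilde\alpha$ is that $\alpha=\psi_+(0,\cdot)$ is bounded, which comes from the a~priori estimate \eqref{apri}; the absolute continuity on $\Sigma$ is not used here.

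Your splitting $\tilde\alpha=\tilde\alpha\chi_\Sigma+\tilde\alpha_{\rm gap}$ works, but it front-loads Theorem~\ref{lKato} into a place where it is not needed, and it does not actually save effort: the branch-point analysis near $\omega=\pm m$ (your $u^2=m-|\omega|$ substitution) is the same technical core in both approaches, and your gap piece still requires showing that the multiplier belongs to $\cF L^1$ uniformly in $x$---which is exactly the paper's whole argument. Also, your phrase ``a quasi-measure whose Wiener-algebra norm is bounded'' conflates two objects: it is $n_x$ (the multiplier) that must lie in $\cF L^1$, while $\tilde\alpha_{\rm gap}$ is the quasi-measure (Fourier transform of a bounded function); the bound then follows from the duality $\langle\text{bounded},L^1\rangle$. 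In short: your route is valid but longer; the paper's single $\cF L^1$ estimate on $n_x$ subsumes both of your halves at once.
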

\begin {proof}
Note that the distribution $ \tilde \alpha (\omega) $ generally is not a finite measure, since we only know that $ \al (t): = \psi _ + (0, t) $   is a bounded
function by (\ref{KG5})
and (\ref {apri}). To prove the lemma, it suffices to check that
$$
\zeta (\omega) (ik (\omega) \operatorname {sgn} \5 x) ^ je ^ {ik (\omega) | x |} (- i \omega) ^ l = \tilde g_x (\omega),
$$
where the  function $g_x(\cdot)$ belongs to a bounded subset of $L^ 1(\mathbb R)$ for $ x \ne 0 $ and $ t \in \R $.
This implies the lemma, since the right-hand side of  (\ref {gab2}) equals, by the Parseval identity, to convolution
$$
\langle \al (t-s), g_x(s) \rangle,
$$
 where  $ \al (t) $ is a bounded function.
\end {proof}
\br
 All needed properties of quasimeasures  that we use, are justified in \ci {K2003, KK2007}
by 
similar arguments relying on  the Parseval identity.
\er

Now, by the Ascoli--Arzella theorem, for any sequence $ s_j \to \infty $ there is such a subsequence $ s_ {j '} \to \infty $, that
\begin {equation} \label {gab3}
\partial_x ^ j \partial_t ^ l \psi_b (x, s_ {j '} + t) \to \partial_x ^ j \partial_t ^ l \beta (x, t), \qquad x \ne 0, \, \, t \in \R
\end {equation}
for any $ j, l = 0, \dots, $ and this convergence is uniform on $|x|+|t|\le R$. 
Estimates (\ref {gab22}) imply that
\be\la{betab}
\sup _ {(x, t) \in \mathbb R ^ 2} | \partial_x ^ j \partial_t ^ l \beta (x, t) | <\infty.
\ee
\bc
Each solution $Y (t)\in C(\R,\cE) $ to (\ref{KG1}) is omega-compact. This follows from 
(\ref{pdb}), (\ref{psid}) and (\ref{gab3}).

\ec
\subsection {Reduction of spectrum  of omega-limit trajectories to spectral gap}
The convergence of functions (\ref {gab3}) implies the convergence of their Fourier transforms
\begin {equation} \label {gab5}
\tilde \psi_b (x, \omega) e ^ {- i \omega s_ {j '}} \to \tilde \beta (x, \omega), \qquad \forall x \in \mathbb R
\end {equation}
in the sense of  temperate  distributions of $ \om \in \R $.
\bl\la{lmms}
For any $x\in\R$
\be\la{mms}
\ti\beta(x,\om)=0,\qquad |\om|>m.
\ee
\el
\begin{proof}
Convergence (\ref{gab5}) and  representation (\ref {gab2}) with $j=l=0$ imply that
\begin {equation} \label {gab6}
\zeta (\omega) \tilde \alpha (\omega) e ^ {ik (\omega) | x |} e ^ {- i \omega s_ {j '}} \to \tilde \beta (x, \omega ), \qquad \forall x \in \mathbb R
\end {equation}
in the sense of  temperate distributions of $ \om \in \R $.
Moreover, this convergence takes place in a stronger {\it Ascoli--Arzella topology} in the space of quasimeasures \cite {KK2007}.
In addition, $ e ^ {- ik (\omega) | x |} $ is a multiplier in the space of quasimeasures with this topology by Lemma B.3 of \cite {KK2007}). 
Therefore, \eqref {gab6} implies that
\begin {equation} \label {gab7}
\zeta (\omega) \tilde \alpha (\omega) e ^ {- i \omega s_ {j '}} \to \tilde \gamma (\omega): = \tilde \beta (x, \omega) e ^ {- ik (\omega) | x |}, \qquad \forall x \in \mathbb R
\end {equation}
in the same topology of quasimeasures. Applying the same lemma again, we obtain
\begin {equation} \label {gab8}
\beta (x, t) = \frac1 {2 \pi} \langle \tilde \gamma (\omega) e ^ {ik (\omega) | x |}, e ^ {- i \omega t} \rangle, \qquad (x, t) \in \mathbb R ^ 2.
\end {equation}
 Note that
\begin {equation} \label {gab9}
\beta (0, t) = \gamma (t).
\end {equation}
Finally, the key observation is that \eqref {gab7} and the theorem \ref {lKato} imply
\begin {equation} \label {gab10}
\supp \tilde \gamma \subset [-m, m]
\end {equation}
by the Riemann – Lebesgue theorem.
\end{proof}

\subsection{Reduction of spectrum  of omega-limit trajectories to a single point}

\subsubsection {Equation for omega-limit trajectories and spectral inclusion}
Now the question arises about   available means for the  proof of representation \eqref {gab4}
for omega-limit trajectories.
We  have no formulas for solutions to  equation \eqref {KG1}, and so the only hope is to use the 
nonlinear equation itself.
 The  key observation, albeit simple, is that $\beta(x,t)$ is a solution to this 
nonlinear
equation {\it for all $ t \in \R $}, 
despite the fact that $ \psi _ + (x, t) $ is a solution to the equation \eqref {KG1} only for $ t> 0 $ due to (\ref {gap2}).
\begin {lemma} \label {leq}
The function $\beta(x,t)$ 
 satisfies the original equation
\eqref {KG1}:
\begin {equation} \label {KG11}
\ddot \beta (x, t) = \beta '(x, t) -m ^ 2 \beta (x, t) + \delta (x) F (\beta (0, t)), \qquad ( x, t) \in \mathbb R ^ 2.
\end {equation}
\end {lemma}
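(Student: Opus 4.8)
The plan is to realize $\beta$ as a distributional limit of time--shifts of the genuine solution and to exploit that these shifts solve the nonlinear equation on an ever--growing half--line. Concretely, I would set $\psi_{j'}(x,t):=\psi_+(x,s_{j'}+t)$. By the splitting \eqref{pdb} one has $\psi_{j'}=\psi_b(\cdot,s_{j'}+\cdot)+\psi_d(\cdot,s_{j'}+\cdot)$; the dispersive summand tends to $0$ by \eqref{psid}, while the bound summand converges to $\beta$ for $x\ne0$ by \eqref{gab3}. Since $\psi_b$ is uniformly bounded by Lemma \ref{lbound} and $x=0$ is a null set, dominated convergence upgrades this to convergence $\psi_{j'}\to\beta$ in $\mathcal D'(\R^2)$ (and, by \eqref{betab}, to locally uniform convergence together with derivatives off the line $x=0$).

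The key observation, as the authors stress, is that although $\psi_+$ solves the nonlinear equation only for $t>0$, the shifts $\psi_{j'}$ solve it on a region expanding to all of $\R^2$. Indeed, for a fixed $\varphi\in C_0^\infty(\R^2)$ and all large $j'$ the temporal support of $\varphi$ shifted by $s_{j'}$ lies in $\{t>0\}$, where $f_+=F(\psi(0,\cdot))$, so on that region $\psi_{j'}$ obeys \eqref{KG1} rather than merely \eqref{KG2}. Pairing the equation against $\varphi$ and transferring the derivatives onto the test function gives
$$
\langle \ddot\psi_{j'}-\psi_{j'}''+m^2\psi_{j'},\,\varphi\rangle
=\langle \psi_{j'},\,\ddot\varphi-\varphi''+m^2\varphi\rangle
\longrightarrow
\langle \beta,\,\ddot\varphi-\varphi''+m^2\varphi\rangle ,
$$
the convergence being exactly the distributional convergence established above; the right--hand side equals $\langle \ddot\beta-\beta''+m^2\beta,\varphi\rangle$. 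On the other hand the delta term pairs to $\int_\R F(\psi_{j'}(0,t))\,\varphi(0,t)\,dt$, so it remains to pass to the limit there.

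The main obstacle is precisely this last passage, since \eqref{gab3} furnishes convergence only for $x\ne0$, whereas the nonlinearity is evaluated at the singular point $x=0$. To close the gap I would use that $\psi_+(\cdot,t),\psi_d(\cdot,t)\in H^1(\R)\subset C(\R)$, whence $\psi_b(\cdot,t)\in H^1(\R)$ and $\psi_b(0,t)=\lim_{x\to0}\psi_b(x,t)$ is well defined, together with the uniform Lipschitz bound $\sup_{x\ne0}\sup_t|\partial_x\psi_b(x,t)|<\infty$ from Lemma \ref{lbound} and its counterpart \eqref{betab} for $\beta$. A standard three--term estimate at a fixed auxiliary point $x_0\ne0$ then upgrades convergence at $x_0$ to convergence at $x=0$, locally uniformly in $t$; combined with $\psi_d(0,s_{j'}+t)\to0$ this yields $\psi_{j'}(0,t)\to\beta(0,t)$, and continuity of $F$ with dominated convergence (using the a priori bound \eqref{apri}) gives $\int_\R F(\psi_{j'}(0,t))\varphi(0,t)\,dt\to\int_\R F(\beta(0,t))\varphi(0,t)\,dt=\langle\delta(x)F(\beta(0,t)),\varphi\rangle$.

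Collecting the two limits, the identity $\langle \ddot\beta-\beta''+m^2\beta,\varphi\rangle=\langle\delta(x)F(\beta(0,t)),\varphi\rangle$ holds for every $\varphi$, which is exactly \eqref{KG11}. As a structural check one sees that away from $x=0$ the representation \eqref{gab2} makes each $\psi_b(\cdot,t)$, and hence $\beta(\cdot,t)$, satisfy the free Klein--Gordon equation, because $\partial_x^2 e^{ik(\omega)|x|}=-k^2(\omega)e^{ik(\omega)|x|}=-(\omega^2-m^2)e^{ik(\omega)|x|}$ for $x\ne0$; thus the only distributional contribution of $\beta''$ across the line is the jump $[\beta'](t)\,\delta(x)$, and the computation above identifies this jump with $-F(\beta(0,t))$, in agreement with the nonlinear eigenvalue relation $2\kappa C=F(C)$ for the stationary orbits.
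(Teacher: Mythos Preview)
Your proof is correct and follows essentially the same approach as the paper: the paper's own argument is a one-line sketch that invokes \eqref{psid} and \eqref{gab3} to pass to the limit $s_{j'}\to\infty$ in equation \eqref{KG1} for the time-shifts $\psi_+(x,s_{j'}+t)=\psi_d(x,s_{j'}+t)+\psi_b(x,s_{j'}+t)$ with $s_{j'}+t>0$, and you have expanded this sketch faithfully. Your careful handling of the nonlinear term at $x=0$ via the uniform Lipschitz bound from Lemma~\ref{lbound} fills in a detail the paper leaves implicit, and the structural check in your last paragraph is a nice but inessential bonus.
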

\begin{proof}
This lemma follows 
by \eqref {psid} and \eqref {gab3}
in the limit  $ s_ {j '} \to \infty $ 
from the equation (\ref {KG1}) for
$ \psi _ + (x, s_ {j '} + t) = \psi_d (x, s_ {j'} + t) + \psi_b (x, s_ {j '} + t) $ with  $ s_ {j '} + t> 0 $.
\end{proof}

Now applying the Fourier transform to the equation (\ref {KG11}), we get the corresponding  ``nonlinear stationary  Helmholtz equation''
\begin {equation} \label {Fap1}
- \omega^2 \tilde\beta(x, \omega) = \tilde \beta '' (x, \omega) -m ^ 2 \tilde \beta (x, \omega) + \delta(x)\tilde f(\omega), \qquad (x,\omega) \in\mathbb R^2,
\end {equation}
where we denote  $ f (t): = F (\beta (0, t)) = F (\gamma (t)) $ in accordance with (\ref {gab9}).
From (\ref {Fap}), we get
$$
f (t) = a (| \gamma (t) |) \gamma (t) = A (t) \gamma (t), \qquad A (t): = a (| \gamma (t) |), \qquad t \in \mathbb R.
$$
Finally, in the Fourier transform we get the convolution $ \tilde f = \tilde A * \tilde \gamma $, which exists by (\ref {gab10}). Respectively,
(\ref {Fap1}) now reads
$$
- \omega ^ 2 \tilde \beta (x, \omega) = \tilde \beta '' (x, \omega) -m ^ 2 \tilde \beta (x, \omega) + \delta (x) [\tilde A * \tilde \gamma] (\omega), \qquad (x, \omega) \in \mathbb R ^ 2.
$$
This identity implies the key \textit {\bf spectral inclusion}
\begin {equation} \label {si}
\supp \tilde A * \tilde \gamma \subset \supp \tilde \gamma,
\end {equation}
because $ \supp \tilde \beta (x, \cdot) \subset \supp \tilde \gamma $ and $ \supp \tilde \beta '(x, \cdot) \subset \supp \tilde \gamma $ by the representation ( \ref {gab8}).
From this inclusion, we will derive below \eqref {gab4}, using the fundamental result of Harmonic Analysis - Titchmarsh convolution theorem.

\subsubsection {Titchmarsh convolution theorem}
In 1926, Titchmarsh proved a theorem on the distribution of zeros of entire functions  \cite [p.119] {Lev96}, \cite {Tit26}, which implies,
in particular, the following corollary \cite [Theorem 4.3.3] {Hor90}:
\smallskip

\noindent
\textbf {Theorem.} \textit {Let $ f (\omega) $ and $ g (\omega) $ be distributions of  $ \omega \in \mathbb R $ with bounded supports. Then
$$
[\supp \5 f \! * \! g] = [\supp f] + [\supp g],
$$
where $ [X] $ denotes {\bf convex hull}  of a set $ X \subset \mathbb R $.}
\smallskip

Note, that in our situation, $ \supp \tilde \gamma $ is bounded by (\ref {gab10}). Consequently, $ \supp \tilde A $ is also bounded, since
$ A (t): = a (| \gamma (t) |) $ is a polynomial in $ | \gamma (t) | ^ 2 $ according to \eqref {C4}.
Now the spectral inclusion (\ref {si}) and Titchmarsh theorem imply that
$$
[\supp \tilde A] + [\supp \tilde \gamma] \subset [\supp \tilde \gamma],
$$
whence it immediately follows that $ [\supp \tilde A] = \{0 \} $. Besides, $ A (t): = a (| \gamma (t) |) $ is a bounded function due to
(\ref {betab}), because $ \gamma (t) = \beta (0, t) $. Therefore, $ \tilde A (\omega) = C \delta (\omega) $. Hence,
$$
a (| \gamma (t) |) = C_1, \qquad t \in \mathbb R.
$$
Now,  strict nonlinearity condition \eqref {C4} implies that
$$
| \gamma (t) | = C_2, \qquad t \in \mathbb R.
$$
This implies immediately that $ \supp \tilde \gamma = \{\omega_ + \} $ by the same Titchmarsh theorem
for the convolution $ \ti \ga * \ov {\ti \ga} = C_3 \de (\om) $.
Therefore, $ \tilde \gamma (\omega) = C_4 \5 \delta (\omega- \omega _ +) $, and now \eqref {gab4} follows  from  \eqref {gab8}.

\begin {remark} \label {rST}
{\rm
In the case of the Schr\"odinger equation \eqref {S1}, the Titchmarsh theorem does not work.
The fact is that the continuous spectrum of the operator $ -d ^ 2 / dx ^ 2 $ is the half-line $ [0, \infty) $,
so now the role of the ``spectral gap'' plays unbounded interval $ (- \infty, 0) $. Respectively, in this case the spectral inclusion \eqref {I}
gives only that  $ \supp \tilde \beta (x, \cdot) \subset (- \infty, 0) $, while the Titchmarsh theorem applies only to
distribution with bounded supports.
}
\end {remark}

\subsection {Remarks on dispersion radiation and nonlinear energy transfer} \label {Sec-4.8}
Let us explain informal arguments for the attraction to stationary orbits behind formal proof of Theorem \ref {t5}.
Main part of the proof concerns the study of the spectrum of omega-limit trajectories
$$
\beta (x, t) = \lim_ {s_ {j '} \to \infty} \psi (x, s_ {j'} + t).
$$
Theorem \ref {lKato} implies the spectral inclusion (\ref {gab10}), which leads to
\begin {equation} \label {I}
\supp \tilde \beta (x, \cdot) \subset [-m, m], \qquad x \in \mathbb R.
\end {equation}
 Then the Titchmarsh theorem allows us to conclude that
\begin {equation} \label {II}
\supp \tilde \beta (x, \cdot) = \{\omega _ + \}.
\end {equation}
These two inclusions are suggested by the following two informal arguments:
\medskip \\
\textbf {A.} \textit {Dispersion radiation in the continuous spectrum.}
\medskip \\
\textbf {B.} \textit {Nonlinear spreading of the spectrum and the energy transfer  from lower 
to higher harmonics.}
\bigskip \\
\textbf {A. Dispersion radiation.} Inclusion (\ref {I}) is due to the dispersion mechanism, which can be illustrated by
energy radiation in a wave field with harmonic excitation with a frequency lying in the continuous spectrum.
Namely, let us consider  one-dimensional linear Klein--Gordon equation  with a \textit {harmonic source}
\be \la {welap}
\displaystyle \ddot \psi (x, t) = \psi '' (x, t) -m ^ 2 \psi (x, t) + b (x) e ^ {- i \omega_0 t}, \qquad x \in \mathbb R,
\ee
where $ b \in L ^ 2 (\mathbb R) $ and real frequency $ \om_0 \ne \pm m $. Then the \textit {limiting amplitude principle} holds
\cite{Lad1957, Mor1962,KopK2012}:
\begin {equation} \label {lap}
\psi (x, t) \sim a (x) e ^ {- i \omega_0 t}, \qquad t \to \infty.
\end {equation}
For the equation (\ref {welap}), this  follows directly from the Fourier--Laplace transform in time
\begin {equation} \label {Foul}
\ti\psi (\om, t)= \int_0^\infty e ^ {i \omega t}\psi (x, t)dt, \qquad
x \in \mathbb R,\quad \rIm \om> 0.
\end {equation}
Namely, applying this transform to equation (\ref{welap}), we obtain
$$
- \om ^ 2 \ti \psi (x, \om) = \ti\psi '' (x, \om) -m ^ 2 \ti \psi (x, \om) + \fr {b (x)} { i (\om- \om_0)}, \qquad
x \in \mathbb R,\quad \rIm \om> 0, 
$$
where we assume zero initial data for the simplicity of exposition. Hence,
\be \la {welap3}
\ti \psi (\cdot, \om) = \fr {R (\om) b} {i (\om- \om_0)} = \fr {R (\om_0 + i0) b} {i (\om - \om_0)}
+ \fr {R (\om) b-R (\om_0 + i0) b} {i (\om- \om_0)}, \qquad
\rIm \om> 0,
\ee
where  $ R (\om): = (H- \om ^ 2) ^ {- 1} $ is the resolvent of the Schr\"odinger operator 
$ H: = - d ^ 2 / dx ^ 2 + m ^ 2 $. 
This resolvent is an operator of convolution with  fundamental solution
$ -\fr {e ^ {ik (\om) | x |}} {2ik (\om)} $, where $ k (\om) = \sqrt {\om ^ 2-m ^ 2} \in \ov {\Co ^ +} $
for $ \om \in \ov {\Co ^ +} $, as in (\ref {KG4}). 
The last quotient of (\ref {welap3}) is regular at $ \om = \om_0 $, and therefore its contribution
is a dispersion wave which decays in local energy seminorms like (\ref{psid}). Hence, the long-time asymptotics of $\psi(x,t)$ is determined by the middle quotient of (\ref {welap3}). 
Therefore,
(\ref {lap}) holds  with the limiting amplitude $ a (x) = R (\om_0 + i0) b $.
The Fourier transform of this limiting  amplitude is equal to
$$
\hat a (k) = - \frac {\hat b (k)} {k ^ 2 + m ^ 2 - (\om_0 + i0) ^ 2}, \qquad k \in \mathbb R.
$$
This formula shows that the properties of the limiting amplitude differ significantly in the cases
$ | \om_0 | <m $ and $ | \om_0 | \ge m $:
 $a (x) \in H ^ 2 (\mathbb R) $
for $ | \om_0 | <m $, however,
\begin {equation} \label {lapa}
a (x) \not \in L ^ 2 (\mathbb R) \quad \mbox {for} \quad | \om_0 | \ge m,
\end {equation}
if $ | \hat b (k) | \ge \ve> 0 $ in the neighborhood of the ``sphere'' $ | k | ^ 2 + m ^ 2 = \om_0 ^ 2 $
(which consists of two points in 1D case).
{\it This means the following}:
\smallskip

{\bf I. In the case $ | \om_0 | \ge m $}
 the energy of the solution $ \psi (x, t) $ tends to infinity for large times
according  to (\ref {lap}) and (\ref {lapa}).
This means that energy is transmitted from the harmonic source  to the wave field!
\smallskip

{\bf II. Contrary, for $ | \om_0 | < m $}
 the energy of the solution remains bounded, so there is no radiation.
\smallskip\\
Exactly this radiation in the case of $ | \om_0 | \ge m $ prohibits the presence of harmonics with such frequencies in
omega-limit trajectories.
Namely, any omega-limit trajectory cannot radiate at all since 
total energy is finite and bounded from below,
and hence the radiation cannot last forever.
These physical arguments  make the inclusion (\ref {I}) plausible, although its rigorous proof, as was seen above, requires special arguments.

Recall that the set $ i\Sigma: = \{i\om_0 \in \mathbb R $, $ | \om_0 | \ge m \} $ coincides
with the continuous spectrum of the generator of the free Klein--Gordon equation. Radiation in the continuous spectrum  is well known
 in the theory of waveguides. Namely, waveguides can transmit only signals with a frequency 
$|\om_0|>\mu$ 
where $\mu$ is a
{\it threshold frequency}, 
which is an edge point of the continuous spectrum \cite {Lewin}. In our case, the waveguide 
occupies the “entire space” $ x \in \R $ and
is described by the nonlinear Klein--Gordon equation (\ref{KG1i}) with the threshold frequency $m$. 
\smallskip \\
\textbf {B. Nonlinear inflation of  spectrum and the energy transfer 
from lower
to higher harmonics.}
Let us  show that the single spectrum (\ref {II}) is due to the inflation of  spectrum by nonlinear functions.
For example, let us consider the potential
$ U (\psi) = | \psi | ^ 4 \! $. Respectively, $ F (\psi) = - \nabla_ {\ov \psi} U (\psi) = - 4 | \psi | ^ 2 \psi $.
Consider the sum of two harmonics $ \psi (t) = e ^ {i \omega_1t} + e ^ {i \omega_2t} $, which spectrum  is shown on Fig. \ref {fig-11}:

\begin {figure} [htbp]
\begin {center}
\includegraphics [width = 0.9 \columnwidth] {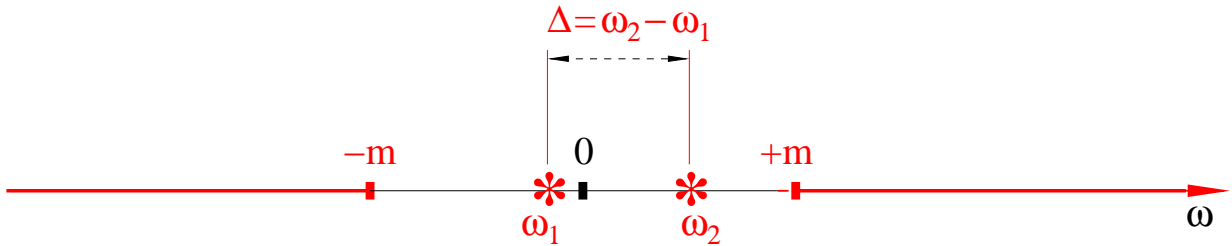}
\caption {Two-point spectrum}
\label {fig-11}
\end {center}
\end {figure}
We substitute this sum into the nonlinearity:
$$
F (\psi (t)) \sim \psi (t) \overline {\psi (t)} \psi (t) = e ^ {i \omega_2t} e ^ {- i \omega_1t} e ^ {i \omega_2t} + \dotsc
= e ^ {i (\omega_2 + \Delta) t} + \dotsc, \qquad \Delta: = \omega_2- \omega_1.
$$
The spectrum of this expression contains harmonics with new frequencies $ \omega_1- \Delta $ and $ \omega_2 + \Delta $.
As a result, all frequencies  $ \omega_1- \Delta $, $ \omega_1-2 \Delta, \dots $ and $ \omega_2 + \Delta $, $ \omega_2 +2 \Delta $, $ \dots $
also will appear in the nonlinear dynamics (\ref{KG1i})
(see Fig.  \ref {fig-2}).
Therefore, these frequencies will appear also in the nonlinear term with $\de$-function.
\smallskip

\begin {figure} [htbp]
\begin {center}
\includegraphics [width = 1.0 \columnwidth] {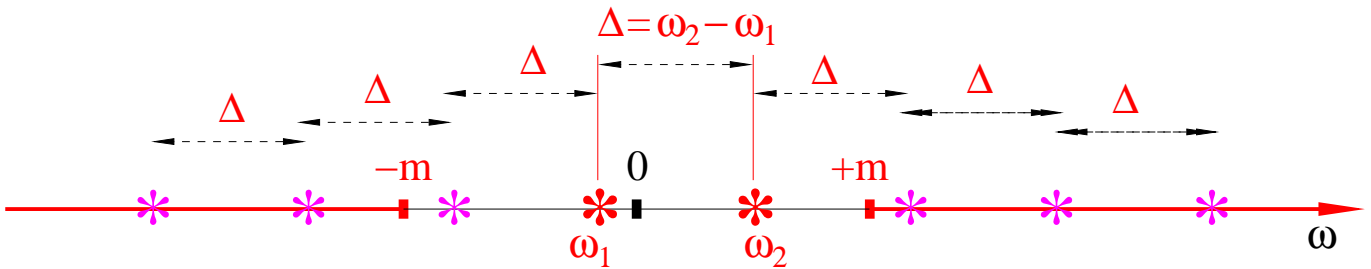}
\caption {Nonlinear inflation of spectrum}
\label {fig-2}
\end {center}
\end {figure}
~ \\
As we already know, these  
 frequencies lying in the continuous spectrum $ | \omega | > m $ will surely 
cause  energy radiation. This radiation will continue until the spectrum of the solution contains at least two different frequencies. 
Exactly this fact prohibits the presence of two different frequencies in omega-limit trajectories because total energy is finite, 
so the radiation cannot continue forever.

Let us emphasize that an exact meaning of the inflation of spectrum by nonlinearity  is established by the Titchmarsh convolution theorem.

\begin {remark} \label {phys}
{\rm
The above arguments physically mean the following two-step {\it nonlinear radiation mechanism}:
\smallskip\\
i) Nonlinearity inflates the spectrum, which means energy transfer from lower to higher harmonics;
\smallskip \\
ii) The dispersion radiation  transfers energy to infinity.
\smallskip \\
We have rigorously justified 
such  nonlinear radiation mechanism   for the first time for nonlinear $ U (1) $ - invariant
Klein--Gordon and Dirac  equations \eqref {KG1} and \eqref {KGN}--\eqref {Dn}.
Our numerical experiments demonstrate similar radiation mechanism for nonlinear {\it relativistic} wave equations, 
see Remark \ref {rrm}. However, a  rigorous proof is still missing.
}
\end {remark}
\begin {remark} \label {rg}
{\rm Let us comment on the term {\it generic equation} in our conjecture (\ref {at10}).
\medskip \\
i) Asymptotics (\ref {WP9}), (\ref {WP10}) hold under the 
Wiener condition (\ref {W1}), which defines some ``open dense set'' of functions $ \rho $.
This asymptotics may break down if the Wiener condition fails. For example, if $ \rho (x) \equiv 0 $, then the particle dynamics is independent from the fields, and hence, the attraction
to stationary states can fail. 
\medskip \\
ii)
Similarly, asymptotics (\ref {ga}) is valid for an open  set of $ U (1) $-invariant equations 
corresponding to polynomials (\ref {C4}) with $ N \ge 2 $. 
However, this asymptotics may break down  for ``exceptional'' $ U (1) $ - invariant equations. 
In particular, for linear equations, corresponding to polynomials (\ref {C4}) with $ N = 1 $.
The corresponding  examples are constructed in \cite {KK2007}.
\medskip \\
iii) General situation is the following. Let a Lie group $ g $ be a (proper) subgroup of some  larger Lie group $ G $. 
Then $ G $-invariant equations form an ``exceptional subset'' among all $ g $-invariant equations, and
the corresponding asymptotics (\ref {at10}) may be completely different.
For example, the trivial group $ \{e \} $ is a subgroup in $ U (1) $ and in $ \R ^ n $, and asymptotics
(\ref {att}) and (\ref {atU}) may differ significantly  from (\ref {ate}).
}
\end {remark}

\setcounter {equation} {0}
\section {Asymptotic stability of stationary orbits and solitons}
Asymptotic stability of solitary manifolds means a local attraction, i.e. for states sufficiently close to the manifold.
The main feature of this attraction is the instability of the dynamics of \textit {along the manifold}.
This follows directly from the fact that solitons move with different speeds and therefore run away for large times.

Analytically, this instability is caused by  the presence of the eigenvalue $ \lambda = 0 $ in  spectrum of  the generator of linearized dynamics. 
Namely, the tangent vectors to soliton manifolds are eigenvectors  and associated vectors of  the generator.  
They correspond to zero eigenvalue. Respectively, Lyapunov's theory is not applicable to this case.

In a series of articles  of Weinstein, Soffer and Buslaev, Perelman and Sulem 1985--2003 
an original  strategy was developed
for proving asymptotic stability of solitary manifolds.
This strategy relies on i)  special projection of a trajectory onto the solitary manifold,
ii) modulation equations for  parameters of the projection, and iii) time-decay of  transversal component.
This approach is a far-reaching development of the Lyapunov stability theory.

\subsection {Asymptotic stability of stationary orbits. Orthogonal projection} \la {s6}

This strategy arose in 1985--1992 in the pioneering work of Soffer and Weinstein
\cite {SW1990, SW1992,W1985}, see the review \cite {soffer2006}.
The results concern nonlinear $ U (1) $ - invariant Schr\"odinger equations with real potential $ V (x) $
\begin {equation} \label {Su}
i \dot \psi (x, t) = - \Delta \psi (x, t) + V (x) \psi (x, t) + \lambda | \psi (x, t) | ^ p \psi ( x, t), \qquad x \in \mathbb R ^ n,
\end {equation}
where $ \lambda \in \mathbb R $, $ p = 3 $ or $ 4 $, $ n = 2 $ or $ n = 3 $,
and $ \psi (x, t) \in \Co $.
The corresponding Hamilton functional reads
$$
\cH = \int \big[\frac12 | \nabla \psi | ^ 2 + \frac12V (x) | \psi (x) | ^ 2 + \frac \lambda p | \psi (x) | ^ p\big] \, dx.
$$
For $ \lambda = 0 $, the equation (\ref {Su}) is linear.
It is assumed that the discrete spectrum
of the short range Schr\"odinger operator $H:=-\De+V(x)$
 is a single point $ \omega_ * <0 $, and the point zero is neither an eigenvalue nor a resonance
 for $H$.
Let $ \phi _ * (x) $ denote the corresponding ground state:
\be\la{grst}
H\phi _ * (x)=\omega_ * \phi _ * (x).
\ee
Then $ C \phi _ * (x) e ^ {- i \omega_ * t} $
are periodic solutions for all complex constants $ C $. Corresponding phase curves are circles, filling the 
complex plane.

For nonlinear equations (\ref {Su}) with a small real $ \lambda \ne 0 $, it turns out that a wonderful {\it bifurcation} occurs:
small neighborhood of the zero of the complex plane
turns into an analytic invariant soliton manifold
$ \cS $
which is still filled with invariant circles which are trajectories 
of {\it stationary orbits}  of type (\ref{storb}),
\be\la{storb2}
 \psi(x,t)=\psi_ \omega (x) e ^ {- i \omega t} 
 \ee
whose frequencies $ \omega $ are close to $ \omega _ * $.
\br\la{rgs}
Now all these solutions $ \psi_\omega (x) e ^{- i \omega t} $ are called as {\it ground states}.
\er
The  main result of \cite {SW1990, SW1992} (see also \cite {PW1997}) is  long-time attraction to one of these ground states
 for any solution with sufficiently small initial data:
\begin {equation} \label {soli}
\psi (x, t) = \psi _ {\pm} (x) e ^ {- i \omega _ {\pm} t} + r_ \pm (x, t),
\end {equation}
where the remainder decay in weighted norms: for $ \sigma> 2 $
$$
\Vert \langle x \rangle ^ {- \sigma} r_ \pm (\cdot, t) \Vert_ {L ^ 2 (\mathbb R ^ n)} \to 0, \qquad t \to \pm \infty,
$$
where $ \langle x \rangle: = (1+ | x |) ^ {1/2} $.
The proof relies on  linearization of the dynamics and decomposition of solutions into two components
$$
\psi (t) = e ^ {- i \Theta (t)} (\psi _ {\omega (t)} + \phi (t)),
$$
with the  orthogonality condition \cite [(3.2) and (3.4)] {SW1990}:
\begin {equation} \label {or}
\langle \psi _ {\omega (t)}, \phi (t) \rangle = 0.
\end {equation}
This orthogonality and dynamics (\ref {Su}) imply the {\it modulation equations} for $ \omega (t) $ and $ \gamma (t) $, where
$ \gamma (t): = \Theta (t) - \displaystyle \int_0 ^ t \omega (s) ds $ (see (3.2) and (3.9a)--(3.9b) from \cite {SW1990}).
The orthogonality (\ref {or}) implies that the component $ \phi (t) $ lies in the continuous spectral space of the Schr\"odinger operator
$H (\omega_0): = - \Delta + V + \lambda |\psi _ {\omega_0}|^p$, which leads to time-decay of  $\phi (t)$ (see \cite [(4.2a) and (4.2b)] {SW1990}).
Finally, this decay implies the convergence $ \omega (t) \to \omega_ \pm $ and the asymptotics (\ref {soli}).
\smallskip

These results and methods were further developed in the numerous works  for nonlinear  Schr\"odinger, wave and Klein--Gordon  equations with potentials under various spectral assumptions on linearized dynamics, 
\cite {BKKS2008,B2006,KKopSt2011,SW1990,SW1992,PW1997,SW1999,SW2004,
soffer2006,W1985}.
\medskip

\subsection {Asymptotic stability of solitons. Symplectic projection} \la {s7}
Genuine breakthrough  in the theory of asymptotic stability was achieved in 1990-2003 
by Buslaev, Perelman and Sulem \cite {BP1993,BP1995,BS2003},
who first generalised  asymptotics of the type (\ref {soli}) for translation-invariant 1D Schr\"odinger equations 
\begin {equation} \label {BPS}
i \dot \psi (x, t) = - \psi '' (x, t) -F (\psi (x, t)), \qquad x \in \mathbb R,
\end {equation}
which are also assumed to be $ U (1) $-invariant. The latter means that the nonlinear function $ F (\psi) = - \na _ {\ov \psi} U (\psi) $ satisfies
identities (\ref {C3})--(\ref {U1}). Also the following condition is assumed
\be\la{F10}
U(\psi)=\cO(|\psi|^{10}),\qquad \psi\to 0,
\ee
which is caused probably by a failure of suitable technique.
Under some simple additional conditions on the potential $U$ (see below), 
there exist 
{\it stationary orbits} which are
finite energy solutions of  the form
\be \la {sol0}
 \psi (x, t) = \psi_0 (x) e ^ { i \omega_0 t},
 \ee
 with $ \om_0>0 $. The amplitude $ \psi_0 (x) $ satisfies the corresponding stationary equation
 \be\la{cse}
 -\om_0 \psi_0 (x) = - \psi_0 '' (x) -F (\psi_0 (x)), \qquad x \in \R,
 \ee
which implies the ``conservation law''
 \be \la {solC}
 \fr {| \psi_0 '(x) | ^ 2} 2 + U_e (\psi_0 (x)) = E,
 \ee
where the ``effective potential''
$U_e (\psi) = U (\psi) + \om_0 \fr {| \psi | ^ 2}2 \sim \om_0 \fr {| \psi | ^ 2}2$ as  $\psi\to 0$
by (\ref{F10}).
 For the existence of finite energy solution (\ref{sol0}),  the graph of the effective potential $ U_e (\psi) $ 
should be similar to Fig. \ref {Ue}. 
The finite energy solution
 is defined by (\ref {solC})  with the constant $ E = U_e (0) $
 since for other $E$ the solutions to
 (\ref{solC})
 do not converge to zero as $|x|\to\infty$. 
 This equation with  $ E = U_e (0) $ implies that
 \be \la {solC2}
 \fr {| \psi_0 '(x) | ^ 2} 2 = U_e (0) -U_e (\psi_0 (x))
\sim \fr{\om_0}  2\psi_0^2 (x).
 \ee
 Hence, for finite energy solutions
 \be \la {rU}
 \psi_0 (x) \sim e ^ {- \sqrt{\om_0} | x |},\qquad |x|\to\infty.
 \ee

\begin {figure} [htbp]
\begin {center}
\includegraphics [width = 0.8 \columnwidth] {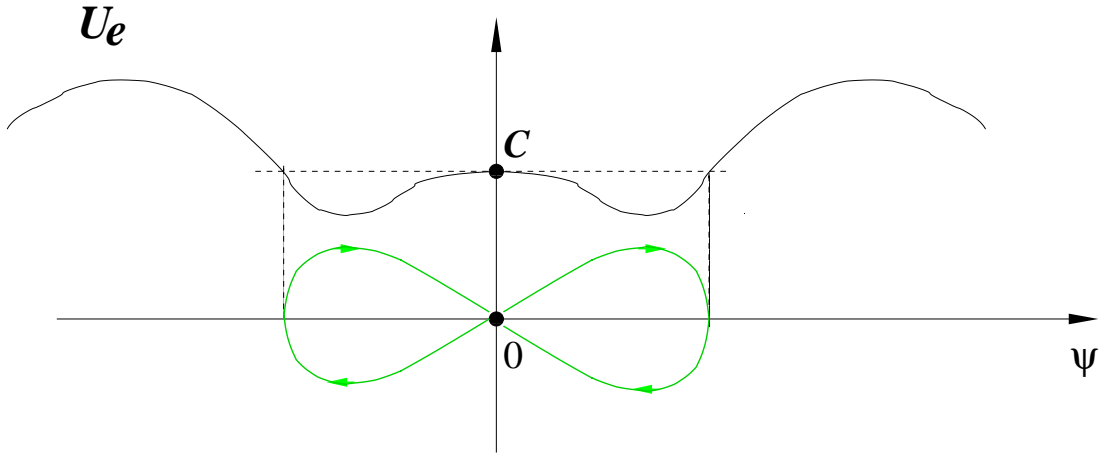}
\caption {Reduced potential and soliton.}
\label {Ue}
\end {center}
\end {figure}
It is easy to verify that  the  following functions are also solutions  ({\it moving solitons})
\be \la {solv}
 \psi _ {\om, v, a, \theta} (x, t) = \psi_ \om (x-vt-a) e ^ { i (\omega t + kx + \theta)}, \qquad
 \om = \om_0 - v ^ 2/4, \qquad k = v / 2.
 \ee
The set of all such solitons with parameters $ \om, v, a, \theta $ forms a 4-dimensional smooth submanifold $ \cS $ in the
Hilbert phase space $ \cX: = L ^ 2 (\R) $. Moving solitons (\ref {solv}) are obtained from 
standing  (\ref {sol0}) by the Galilean transformation
\be \la {Gavt}
G (a, v, \theta): \psi (x, t) \mapsto \vp (x, t) =\psi (x-vt-a, t) e ^ { i (-\fr {v ^ 2} 4t +\fr v2 x + \theta)}.
\ee
It is easy to verify that the Schr\"odinger equation (\ref {BPS}) is invariant with respect to this group of transformations.
\smallskip

Linearization of the Schr\"odinger equation (\ref {BPS}) at the stationary orbit
 (\ref {sol0}) is obtained by substitution
$ \psi (x, t) = (\psi_0 (x) + \chi (x)) e ^ {- i \om_0t} $ and retaining terms of the first order in $ \chi $. 
This linearised equation contains $ \chi $ and $ \ov \chi $, and hence,
it
 is not linear over the field of complex numbers. 
This follows from the fact that the nonlinearity of $ F (\psi) $ is not complex-analytic due to the $ U (1) $-invariance 
(\ref {C3}).  Complexification of this linearized equation reads
\be \la {cLS}
 \dot \Psi (x, t) = C_0 \Psi (x, t),\qquad C_0=-jH_0,
\ee
where $ j $ is a real $ 2 \times 2 $ matrix, representing the multiplier $ i $, $ \Psi (x, t) \in \Co ^ 2 $,
and  $ H_0 = - {d ^ 2} / {dx ^ 2} +\om_0 + V (x) $, where $ V (x) $ is a real matrix potential, 
which decreases exponentially as $ | x | \to \infty $ due to (\ref {rU}). 
Note that the operator $C_0=C_{\om_0, 0,0,0} $ corresponds to the linearization   at the soliton
(\ref{solv}) with parameters $\om=\om_0$, and $a=v=\theta=0$.
Similar operators $C_ {\om, a, v, \theta} $,  corresponding to linearization at solitons (\ref{solv}) with various parameters $ \om, a, v, \theta $, 
are also connected by the linear Galilean transformation (\ref {Gavt}). 
Therefore, their spectral properties completely coincide. In particular, their continuous spectrum coincides with $(-i\infty,-i \om_0]\cup [i\om_0, i\infty) $.

Main results of \cite {BP1993,BP1995,BS2003} are asymptotics of type (\ref {soli})
for solutions with initial data close to the solitary manifold $ \cS $:
\begin {equation} \label {sdw}
\psi (x, t) = \psi_ \pm (x-v_ \pm t) e ^ {- i (\omega_ \pm t + k_ \pm x)} + W (t) \Phi_ \pm + r_ \pm (x, t),\qquad \pm t>0,
\end {equation}
where $ W (t) $ is the dynamical group of the free Schr\"odinger equation, $ \Phi _ {\pm} $ are some scattering states of finite energy, and
$ r _ {\pm} $ are  remainder  terms which decay to zero in a global norm:
\begin {equation} \label {glob2u}
\Vert r _ {\pm} (\cdot, t) \Vert_ {L ^ 2 (\mathbb R)} \to 0, \qquad t \to \pm \infty.
\end {equation}
These asymptotics were obtained under following assumptions on the spectrum of the generator $B_0$:
\smallskip

U1. The discrete spectrum of the operator $ C_0 $ consists of exactly three eigenvalues $ 0 $ and $ \pm i\lam  $, and
\be \la {fot}
 \lam< \om_0<2 \lam.
\ee
This condition means that the discrete mode can interact with 
the continuous spectrum already  in the first order of perturbation theory.

U2. The edge points $\pm i\om_0$ of the continuous spectrum  are neither  eigenvalues, nor resonances of $ C_0 $.

U3. Furthermore, it is assumed the condition \cite [(1.0.12)] {BS2003}, which means a strong coupling of discrete and continuous spectral components, 
providing energy radiation, similarly to the Wiener condition  (\ref{W1}).
The condition \cite [(1.0.12)] {BS2003} ensures that the interaction of  discrete component with  continuous spectrum does not vanish  in the first order of perturbation theory.
This condition is a nonlinear version of the Fermi Golden Rule \cite {RS4}, which was introduced by Sigal 
in the context of nonlinear PDEs \cite {Sig1993}.
\smallskip

Examples of potentials satisfying all these conditions are constructed in \ci {KKK2013}.  
\smallskip

In 2001, Cuccagna extended results of  \cite {BP1993,BP1995,BS2003} to nD translation-invariant Schr\"odinger equations 
in the dimensions $ n \ge 2 $, \cite {C2001}.
\medskip \\
{\bf Method of symplectic projection in the Hilbert  phase space.}
Novel approach \cite {BP1993,BP1995,BS2003} relies on {\it symplectic projection} 
 of solutions onto  the solitary manifold. This means that 
$$
Z: = \psi-S \quad \mbox {symplectic-orthogonal to the tangent space} \quad \cT: = T_ {S} \cS
$$
for the projection $ S: = P \psi $. This projection is correctly defined in a small neighborhood of $ \cS $ because $ \cS $ is a {\it symplectic manifold}, 
i.e. the corresponding symplectic form is non-degenerate on the tangent spaces $T_ {S}\cS$.
In particular, 
the  approach \cite {BP1993,BP1995,BS2003} 
does not require a smallness of 
 initial data.

Thus a solution $ \psi(t)$ for each $ t> 0 $ decomposes as  $\psi(t)= S(t)+ Z(t)$, 
where $ S (t): = P \psi (t) $, and the dynamics is linearized on the soliton $ S (t)$.  
Similarly, for each $t\in\R$
the total 
Hilbert phase space $ \cX: = L ^ 2 (\R) $ is splitted as $ \cX = \cT (t) \oplus \cZ (t) $, where
$ {\mathcal Z} (t) $ is {\bf symplectic-orthogonal} complement to the tangent space $ \cT (t): = T_ {S (t)} \cS $.
The corresponding equation for the {\it transversal component} $ Z (t) $ reads
$$
\dot Z (t) = A (t) Z (t) + N (t),
$$
where $ A (t) Z (t) $ is the linear part, and $ N (t) = \cO (\Vert Z (t) \Vert ^ 2) $ is the corresponding nonlinear part.

The main difficulties in studying this equation are as follows i) it is {\it non-autonomous},
and ii) the generators $ A (t) $ {\it  are not self-adjoint} (see Appendix in \cite {KKsp2014}).
It is important that $ A (t) $ are {\it Hamiltonian operators}, for which the existence of spectral  decomposition 
is provided by the Krein-Langer theory of $ J $ -selfadjoint operators \ci {KL1963, L1981}.
In \ci{KKsp2014, KKsp2015} we have developed a special version of this theory  providing
the corresponding eigenfunction expansion which is necessary for the justification 
 of the  approach \cite {BP1993,BP1995,BS2003}.
The main steps of this strategy are as follows.
\smallskip \\
$\bullet$ {\bf Modulation equations.}
The parameters of the soliton $ S (t) $ satisfy  {\bf modulation equations}: 
for example, for the  speed  $v (t)$ we have
$$ 
\dot v (t) = M (\psi (t)), 
$$ 
where $ M (\psi) = \cO (\Vert Z \Vert ^ 2) $ for small norms $ \Vert Z \Vert $. 
This means that the parameters change ``superslow'' near the soliton manifold, like adiabatic invariants.
\smallskip \\
$\bullet$ {\bf Tangent and transversal components.}
The {\it transversal component} $ Z (t) $ in the splitting $ \psi (t) = S (t) + Z (t) $ belongs to the 
{\it transversal subspace} $ \cZ (t) $.
The tangent space $ \cT (t) $ is the root space of the generator $ A (t) $ 
and corresponds to the “unstable” spectral point $ \lambda = 0 $. The key observation is that

i) the transversal subspace $ {\mathcal Z} (t) $ is {\bf invariant} with respect to the generator $ A (t) $, since
the subspace $ \cT (t) $ is invariant, and $ A (t) $  is the  Hamiltonian operator;

ii) moreover, the transversal subspace
$ {\mathcal Z} (t) $
does not contain ``unstable'' tangent vectors.
\smallskip \\
$\bullet$ {\bf Continuous and discrete components.}
The transversal component allows further splitting $ Z (t) = z (t) + f (t) $, where $ z (t) $ and $ f (t) $ 
belong, respectively, to discrete and continuous spectral subspaces $ \cZ_d (t) $ and $ \cZ_c (t) $ of $A(t)$
in the space $ \cZ (t) = \cZ_d (t) + \cZ_c (t) $.
 \smallskip \\
$\bullet$ {\bf  Poincare normal forms and  Fermi Golden Rule.}
The component $z (t) $ satisfies a nonlinear equation, which is reduced to Poincare normal form 
up to higher order terms \cite [Equations (4.3.20)]{BS2003}.
The normal form allowed  to obtain some ``conditional decay'' for $z(t) $ using the Fermi Golden Rule
\cite [(1.0.12)] {BS2003}.
For the relativistic-invariant Ginzburg-Landau equation, a similar 
reduction done in \cite [Equations (5.18)]{KopK2011b}. 
\smallskip \\
$\bullet$ {\bf  Method of majorants.}
A skillful combination of the conditional decay for $ z (t) $ with the superslow evolution of the soliton parameters
allows us to prove the decay for $ f (t) $ and $ z (t) $ by the method of  majorants.
Finally, this decay implies the asymptotics (\ref {sdw})--(\ref {glob2u}).

\subsection {Generalizations and Applications} \la {s8}
{\bf $N$-soliton solutions.} The methods and results of \ci{BS2003} were developed in \cite {MMT2002,MSig1999,MW1996,PW1994,P2004,RSS2003,RSS2005} 
for $ N $-soliton solutions for translation-invariant nonlinear Schr\"odinger equations.
\smallskip\\
{\bf  Multiphoton  radiation.}
In  \cite {CM2008} Cuccagna and Mizumachi  extended methods  and results  of \ci{BS2003}  to the case 
when the inequality (\ref {fot}) is changed to
$$
N  \lam< \om_0< (N+1) \lam,
$$
with some natural $N>1$,  and the corresponding analogue of condition U3 holds.
It means, that the interac\-ti\-on of discrete modes with a continuous spectrum occurs only in the $ N $-th order of 
perturbation theory. The decay rate of the remainder term (\ref {glob2u}) worsens
 with growing  $ N $.
\smallskip\\
{\bf Linear equations coupled to nonlinear oscillators and particles.}
The methods and results of \ci{BS2003} were extended
i) in \cite {BKKS2008, KKopSt2011} to the Schr\"odinger equation coupled to a nonlinear $ U (1) $-invariant oscillator, 
ii) in \cite {IKS2011, IKV2011}  to systems  \eqref {wq3} and \eqref {ML} with zero external fields, 
and iii) in \cite {IKV2006, KKop2006, KKopS2011}  to  similar translation-invariant systems
of the Klein--Gordon, Schr\"odinger and Dirac equations coupled to a particle.
The survey of these results  can be found in \cite {Im2013}.

For example, article  \cite {IKV2011} concerns solutions to the system \eqref {wq3} with initial data 
close to a soliton manifold (\ref {solit}) in weighted norm
$$
\Vert \psi \Vert_ \sigma ^ 2 = \int \langle x \rangle ^ {2 \sigma} | \psi (x) | ^ 2dx
$$
with sufficiently large $ \si> 0 $. Namely, the initial state is close to soliton (\ref {solit}) with some parameters $ v_0, a_0 $:
$$
\begin {gathered}
\Vert \nabla \psi (x, 0) - \nabla \psi_ {v_0} (x-a_0) \Vert_ \sigma
+
\Vert \psi (x, 0) - \psi_ {v_0} (x-a_0) \Vert_ \sigma
+ \Vert \pi (x, 0) - \pi_ {v_0} (x-a_0) \Vert_ \sigma
\\
+
| q (0) -a_0 | +
| \dot q (0) -v_0 | \le \varepsilon,
\end {gathered}
$$
where $ \sigma> 5 $, and $ \varepsilon> 0 $ is sufficiently small.
Moreover, the Wiener condition \eqref {W1} is assumed  for $ k \ne 0 $.
Additionally, let 
$$
\partial ^ \alpha \hat \rho (0) = 0, \quad | \alpha | \le 5,
$$
that is equivalent to equalities
$$
\int x ^ \alpha \rho (x) \, dx = 0, \quad | \alpha | \le 5.
$$
Under these conditions, the main results of \cite {IKV2011} are the asymptotics
$$
\ddot q(t)\to 0,\quad
\dot q (t) \to v_ \pm, \quad q (t) \sim v_ \pm t + a_ \pm, \qquad t \to \pm \infty
$$
(cf. (\ref {dq}) and (\ref {tsol})) and the 
attraction to solitons (\ref {ssol}),
where the  remainder now decays in {\it global weighted norms} in the comoving frame 
(cf. (\ref {ssolh})):
$$
\Vert\nabla r_\pm (q (t) + x, t)\Vert _{-\sigma} + \Vert r_\pm (q (t) + x, t)\Vert _ {-\sigma} + \Vert s_\pm (q (t) + x, t) \Vert _ {-\sigma} \to 0, \qquad t \to \pm \infty.
$$
{\bf Relativistic equations.}
In \cite {K2002, BC2012, Kumn2013, KopK2011a, KopK2011b} methods and results \cite {BS2003} were extended for the first time 
to {\it relativistic-invariant} nonlinear equations.  Namely, in \cite {K2002} and \cite {Kumn2013, KopK2011a,KopK2011b} asymptotics of the type (\ref {sdw}) 
were obtained for 1D relativistic-invariant nonlinear wave equations (\ref {NWEn}) with potentials of the Ginzburg--Landau type,
and in \cite {BC2012} for relativistic-invariant nonlinear Dirac equations.
In  \cite {KKK2013} we have constructed examples of potentials providing all spectral properties of the linearised dynamics
imposed in \cite {Kumn2013, KopK2011a,KopK2011b}.

In \cite {KKsp2014,KKsp2015} we have justified the eigenfunction expansions  for nonselfadjoint Hamiltonian operators 
which were used in \cite {Kumn2013,KopK2011a,KopK2011b}. For the justification we have developed a
special version of the Krein--Langer  theory of $ J $-selfadjoint operators \ci{KL1963, L1981}.
\medskip \\
{\bf Vavilov-Cherenkov radiation.}
The article \cite {FG2014} concerns a system of
type \eqref {wq3} with the Schr\"odinger equation instead of the wave equation  (system (1.9)--(1.10) in \cite {FG2014}). 
This system  is considered as a model of the Cherenkov  radiation. 
The main result of \cite {FG2014} is long-time convergence to a soliton with the sonic speed for initial solitons with a supersonic speed 
in the case of a weak interaction (“Bogolubov limit”) and small initial field.
Asymptotic stability of solitons for  similar system was established in \cite {KKop2006}.

\subsection {Further generalizations}
The results on asymptotic stability of solitons were developed in different directions.
\smallskip \\
{\bf Systems with several bound states.}
Papers \cite {BC2011,C2011,T2003,T2002-1,TY2002} concern asymptotic stability of stationary orbits (\ref{storb2}) 
for the nonlinear Schr\"odinger, Klein--Gordon  and wave equations
 in the
case of several simple eigenvalues of the linearization.
The typical assumptions are as follows:
\smallskip

i) the endpoint of  continuous spectrum is neither an eigenvalue nor a resonance for linearized equation;

ii) the eigenvalues of the linearised equation satisfy several non-resonance conditions;

iii) a new version of the Fermi Golden Rule.
\smallskip

One typical difficulty is  possible long stay of solutions near metastable
tori which correspond to approximate resonances.
Great efforts are being made to show that the role of metastable tori decreases as
$ t ^ {- 1/2} $ as $ t \to \infty $.
The typical result is the long-time asymptotics “ground state + dispersion wave”
in the norm $ H ^ 1 (\R ^ 3) $ for solutions close to the ground state.
\smallskip \\
{\bf General theory of relativity.}
The article \cite {HM2004} concerns so-called ``kink instability'' of  self-similar and spherically symmetric
solutions of the equations of the general theory of relativity with a scalar field, as well as with a ``hard fluid'' as sources. 
The authors constructed examples of self-similar solutions that are unstable to the kink perturbations.

The article \cite {DR2011} examines linear stability of slowly rotating Kerr solutions for the Einstein equations in vacuum.
In \cite {T2013} a pointwise damping of solutions to the wave equation is investigated for the case 
of stationary asymptotically flat space-time in the three-dimensional case.

In \cite {AB2015} the Maxwell equations are considered outside  slowly rotating Kerr black hole.
The main results are: i) ~ boundedness of a positive definite energy on each hypersurface $ t = \const $ 
and ii) convergence of each solution to a stationary Coulomb field.

In \cite {DSS2012} the pointwise decay was proved for linear waves against the Schwarzschild black hole.
\smallskip \\
{\bf Method of concentration compactness.}
In \cite {KM2006} the concentration compactness method was used for the first time to prove global well-posedness, 
scattering and blow-up of solutions to critical focusing nonlinear Schr\"odinger equation
$$
i \dot \psi (x, t) = - \Delta \psi (x, t) - | \psi (x, t) | ^ {\frac4 {n-2}} \psi (x, t), \qquad x \in \R ^ n
$$
in the radial case. Later on, these methods were extended in \cite {KM2012, DKM2016, KM2008, KNS2015}
to general non-radial solutions and to nonlinear wave equations of the type
$$
\ddot \psi (x, t) = \Delta \psi (x, t) + | \psi (x, t) | ^ {\frac4 {n-2}} \psi (x, t), \qquad x \in \R ^ n.
$$
One of the main results is splitting of the set of initial states, close to the critical energy level,
into three subsets with certain long-term asymptotics:
either a blow-up in a finite time, or an asymptotically free wave,
or the sum of the ground state and an asymptotically free wave.
All three alternatives are possible; all nine combinations with $ t \to \pm \infty $ are also possible. 
Lectures \cite {NS2011} give excellent introduction to this area.
The articles \cite {DKM2014, KM2011} concern super-critical nonlinear wave equations.

Recently, these methods and results were extended to critical wave mappings  \cite {KLLS2015,KNS2015,KS2012}. 
The ``decay onto solitons'' is proved: every $1$-equivariant finite-energy wave mapping
of exterior of a ball with Dirichlet boundary conditions into  three-dimensional sphere
exists globally in time and dissipates into a single stationary solution of its own topological class.
\smallskip \\
{\bf Weak convergence to equilibrium distributions in nonlinear Hamilton systems.}
The papers \ci{K2001}--\ci{KS2006} concern the weak convergence 
to an equilibrium distribution in the
Liouville, Vlasov and Schr\"odinger  equations. In \ci{KS2006} the authors introduced the 
quantum Poincar\'e model.

\subsection {Linear dispersion}
The key role in all results on long-time asymptotic  for nonlinear Hamiltonian PDEs is played by
dispersion decay of solutions of the corresponding linearized equations.
A huge number of publications 
concern this decay, so we choose only most important or recent.
\medskip \\
{\bf Dispersion decay in weighted Sobolev norms.} 
Dispersion decay for wave equations was first proved in  linear scattering theory \cite {LMP1963}.

A powerful systematic  approach to dispersion decay 
for the Schr\"odinger equation with potential  was proposed by 
Agmon, Jensen and Kato \cite {Agmon, JK}. This theory was extended by many authors
to  wave,  Klein--Gordon and Dirac equations and to the corresponding discrete equations, see
\cite {
BG2012,BG2014,A2015,AFVV2010,EGG2014,GG2015,GG2017,EKMT,EKT}
and 
\cite{JSS1991,KopK2010,KopK2010-1,KopK2012,KM2019,KopK2013,
KKm2015,
KKK2006,KKV2008,Kumn2010,K2010,K2018d,KT2016}
   and references therein. 
\medskip \\
{\bf $ L ^ 1 - L ^ \infty $ decay} 
\begin {equation} \label {l1i}
\Vert P_c \psi (t) \Vert_ {L ^ \infty (\R ^ n)} \le Ct ^ {- n / 2} \Vert \psi (0) \Vert_ {L ^ 1 (\R ^ n )}, \qquad t> 0
\end {equation}
for solutions of  linear Schr\"odinger equation
\begin {equation} \label {LSE}
i \dot \psi (x, t) = H \psi (x, t): = (- \Delta + V(x)) \psi (x, t), \qquad x \in \R ^ n
\end {equation}
with $ n \ge 3 $
was proved for the first time by Journet, Soffer and Sogge \cite {JSS1991}
 provided that $ \lambda = 0 $ is neither an eigenvalue nor resonance for $ H $.
The potential $V (x) $ is sufficiently smooth and rapidly decays as $ | x | \to \infty $.
Here $ P_c $ is an orthogonal projection onto continuous  spectral space of the operator $ H $.
This result was  generalised later by many authors, see below.
\smallskip

In \cite {RS2004} a decay of type  (\ref {l1i})  and Strichartz estimates were established
for 3D Schr\"odinger equations  (\ref {LSE}) with ``rough'' and time-dependent potentials $ V = V (x, t) $ (in
stationary case $ V (x) $ belongs to both the Rollnik class and the Kato class).
Similar estimates were received in \cite {BG2012} for 3D Schr\"odinger and  wave equations with (stationary) Kato class potentials.
\smallskip

In \cite {EGG2014} the 4D Schr\"odinger equations (\ref {LSE}) are considered
for the case when there is a resonance or an eigenvalue at  zero energy.
In particular, in the case of  an eigenvalue at  zero energy,  there is a time-dependent  operator $ F_t $ of rank $ 1 $,
such that $ \Vert F_t \Vert_ {L ^ 1 \to L ^ \infty} \le 1 / \log \5 t $ for $ t> 2 $, and
$$
\Vert e ^ {itH} P_c-F_t \Vert_ {L ^ 1 \to L ^ \infty} \le Ct ^ {- 1}, \qquad t> 2.
$$
Similar dispersion estimates were proved also for solutions to 4D wave equation with a potential.
\smallskip

In \cite {GG2015,GG2017} the Schr\"odinger equation  (\ref {LSE}) is considered in $ \R ^ n $ with  $ n \ge 5 $
when there is an eigenvalue at the zero point of the spectrum. It is shown, in particular, that
there is a time-dependent rank one operator $ F_t $  such that
$ \Vert F_t \Vert_ {L ^ 1 \to L ^ \infty} \le C | t | ^ {2-n / 2} $ for $ | t |> 1 $, and
$$
\Vert e ^ {itH} P_c-F_t \Vert_ {L ^ 1 \to L ^ \infty} \le C | t | ^ {1-n / 2}, \qquad | t |> 1.
$$
With a stronger decay of the potential, the evolution admits an operator-valued expansion
$$
e ^ {itH} P_c (H) = | t | ^ {2-n / 2} A _ {- 2} + | t | ^ {1-n / 2} A _ {- 1} + | t | ^ {- n / 2} A_0,
$$
where $ A _ {- 2} $ and $ A _ {- 1} $ are  finite rank operators $ L ^ 1 (\R ^ n) \to L ^ \infty (\R ^ n) $, while $ A_0 $
maps weighted  $ L ^ 1 $ spaces to  weighted  $ L ^ \infty $ spaces . Main members $ A _ {- 2} $ and $ A _ {- 1} $
equal to zero under certain conditions of the orthogonality of the potential $ V $ to eigenfunction with zero energy.
Under the same orthogonality conditions, the remainder term $ | t | ^ {- n / 2} A_0 $ also maps $ L^1 (\R ^ n) $ to $ L ^ \infty (\R ^ n) $, 
and therefore, the group $ e ^ {itH} P_c (H) $ has the same dispersion decay as free evolution, despite its eigenvalue at zero.
\medskip \\
{\bf $ L ^ p - L ^ q $ decay}  was first established in \cite {MWS1980} for solutions
of the free Klein--Gordon equation $ \ddot \psi = \Delta \psi- \psi $ with initial state $ \psi (0) = 0 $:
\begin {equation} \label {LpLqKG}
\Vert \psi (t) \Vert_ {L ^ q} \le Ct ^ {-d} \Vert \dot \psi (0) \Vert_ {L ^ p}, \qquad t> 1,
\end {equation}
where $ 1 \le p \le 2 $, $ 1 / p + 1 / q = 1 $, and $ d \ge 0 $ is a piecewise-linear function of $ (1 / p, 1 / q) $.
The proofs use the Riesz interpolation theorem.
\smallskip

In \cite {BS1993}, the estimates (\ref {LpLqKG}) were extended to solutions of perturbed Klein--Gordon equation
$$
\ddot \psi = \Delta \psi- \psi + V (x) \psi
$$
with $ \psi (0) = 0 $. The authors show that (\ref {LpLqKG}) holds for $ 0 \le 1 / p-1/2 \le 1 / (n + 1) $. 
The smallest value of $ p $ and the fastest decay rate $ d $ occurs when $ 1 / p = 1/2 + 1 / (n + 1) $, $ d = (n-1) / (n + 1) $. 
The result is proved under the assumption that the potential $ V $ is  smooth and small in a suitable sense. 
For example, the result true when $ | V (x) | \le c (1+ | x | ^ 2) ^ {- \sigma} $, where $ c>0 $ is sufficiently small.
Here  $ \sigma> 2 $ for $ n = 3 $,  $ \sigma> n / 2 $ for  odd $ n\ge 5 $, and $ \sigma> (2n ^ 2 + 3n + 3) / 4 (n + 1) $ for even $ n \ge 4 $. 
The results also apply to the case when $ \psi (0) \ne 0 $.
\smallskip

The seminal article \cite {JSS1991} concerns $ L ^p-L ^q $  decay of solutions to the Schr\"odinger equation (\ref {LSE}).
It is assumed that $ (1+ | x | ^ 2) ^ \alpha V (x) $ is a multiplier in the Sobolev spaces $ H ^ \eta $ for some $ \eta> 0 $ and $ \alpha> n + 4 $,
and the Fourier transform of $ V $ belongs to $ L ^ 1 (\R^n)$. 
Under this conditions,  the main result of \cite {JSS1991} is the following theorem: if $\lambda = 0$ is neither an eigenvalue nor a resonance for $ H$, 
then
\begin {equation} \label {LpLq}
\Vert P_c \psi (t) \Vert_ {L ^ q} \le Ct ^ {- n (1 / p-1/2)} \Vert \psi (0) \Vert_ {L ^ p}, \qquad t> 1,
\end {equation}
where $ 1 \le p \le 2 $ and $ 1 / p + 1 / q = 1 $.
Proofs are based on $ L ^ 1-L ^ \infty $  decay (\ref {l1i}) and the Riesz interpolation theorem.
\smallskip

In \cite {Y2005} estimates (\ref {LpLq}) were proved for  all $ 1 \le p \le 2 $ under suitable conditions on decay of  $ V (x) $
if $ \lam=0 $ is  neither an eigenvalue nor a resonance for $ H $, and for all $ 3/2 <p \le 2 $ otherwise.
\medskip \\
{\bf The Strichartz estimates} were extended i) in \cite {AFVV2010} to the Schr\"odinger magnetic equations in $ R ^ n $ with $ n \ge 3 $,
ii) in \cite {A2015} -  
to wave equations with a magnetic potential in $ R ^ n $ for $ n \ge 3 $, and
iii) in \cite {BG2014} - 
to wave equation in $ \R ^ 3 $
with potentials of the  Kato class.

\setcounter {equation} {0}
\section {Numerical Simulation of Soliton Asymptotics} \la {s9}
Here we describe the results of joint work with Arkady Vinnichenko (1945-2009) on numerical simulation of 
i) global attraction to solitons \eqref {att} and \eqref {attN},
and ii) adiabatic effective dynamics of solitons (\ref {effd2}) for relativistic-invariant one-dimensional
nonlinear wave equations.
In 
 \cite {KMV2004} can be found an additional information.

\subsection {Kinks of relativistic-invariant Ginzburg--Landau equations}
First, let us describe  numerical simulations of 
 solutions to relativistic-invariant 1D nonlinear wave equations with polynomial nonlinearity
\begin {equation} \label {GL}
\ddot \psi (x, t) = \psi '' (x, t) + F (\psi (x, t)), \qquad \mbox {where} \quad F (\psi): = - \psi ^ 3 + \psi.
\end {equation}
Since $F (\psi) = 0$ for $\psi = 0, \pm1$, there are three  equilibrium state : $S (x)\equiv  0, + 1, -1$.
This equation  formally is equivalent to a Hamiltonian system (\ref{hasys}) with the Hamiltonian
\begin {equation} \label {hamGL}
\cH (\psi, \pi) = \int [\frac12 | \pi (x) | ^ 2 + \frac12 | \psi '(x) | ^ 2 + U (\psi (x))] \, dx,
\end {equation}
where the potential $ U (\psi) = \frac {\psi ^ 4} {4} - \frac {\psi ^ 2} {2} +\frac 14$.
This Hamiltonian is finite  for functions $ (\psi, \pi)$ from the space  $\cE_c $, defined in (\ref {hamdal})-- (\ref {lim}) with $C_\pm=\pm 1$, 
for which the convergence
$$
\psi (x) \to \pm 1, \qquad | x | \to \pm \infty
$$
is sufficiently fast.

The corresponding  potential $ U (\psi) = \frac {\psi ^ 4} {4} - \frac {\psi ^ 2} {2} +\frac 14$ has minima at $ \psi = \pm 1 $ 
and a maximum at $ \psi = 0 $.
Respectively,
two finite energy solutions  $\psi={\pm 1}$ are stable, and the solution $\psi=0$ with infinite energy  is unstable.
Such potentials with two wells are called  potentials of Ginzburg-Landau type.

Besides the constant stationary solutions $ S (x) \equiv 0, + 1, -1 $, there is also a non-constant one $ S (x) = \tanh \5 {x} / {\sqrt2} $, which is called  ``kink''.
Its shifts and reflections $ \pm S (\pm x-a) $ are also stationary solutions as well as their Lorentz transforms
$$ 
\pm S (\gamma (\pm x-a-vt)),\qquad  \gamma = 1 / {\sqrt {1-v ^ 2}}, \quad | v | <1. 
$$
These are uniformly moving ``travelling waves'' (i.e. solitons). The kink is strongly compressed when the velocity $ v $ is close to $ \pm 1 $. This compession is known as the ``Lorentz contraction''.
\medskip\\
{\bf Numerical Simulation.}
Our numerical experiments show a decay  of finite energy solutions to a finite set
of kinks and dispersion waves
outside  the kinks, that corresponds to the asymptotics of \eqref {attN}.
One of the experiments is shown on Fig. \ref {fig-4}: a finite energy solution to the  equation (\ref {GL}) decays to three kinks.
Here the vertical line is the time axis, and the horizontal line is the   space axis.
The spatial scale redoubles at $ t = $ 20 and $ t = $ 60.

Red color corresponds to $ \psi> 1+ \varepsilon $ values, blue color to $ \psi <-1 - \varepsilon $ values,
and the yellow one,  to intermediate values $ -1 + \varepsilon <\psi <1 - \varepsilon $, where $ \ve> 0 $ is sufficiently small.
Thus, the yellow stripes represent  the kinks, while the blue and red zones outside the yellow stripes are filled with dispersion waves.

For $ t = 0 $, the solution begins with a rather chaotic behavior, when there are no visible kinks.
After 20 seconds, three separate kinks appear, which subsequently move almost uniformly.

\begin{figure}[htbp]
\begin{center}
\includegraphics[width=1.00\columnwidth]{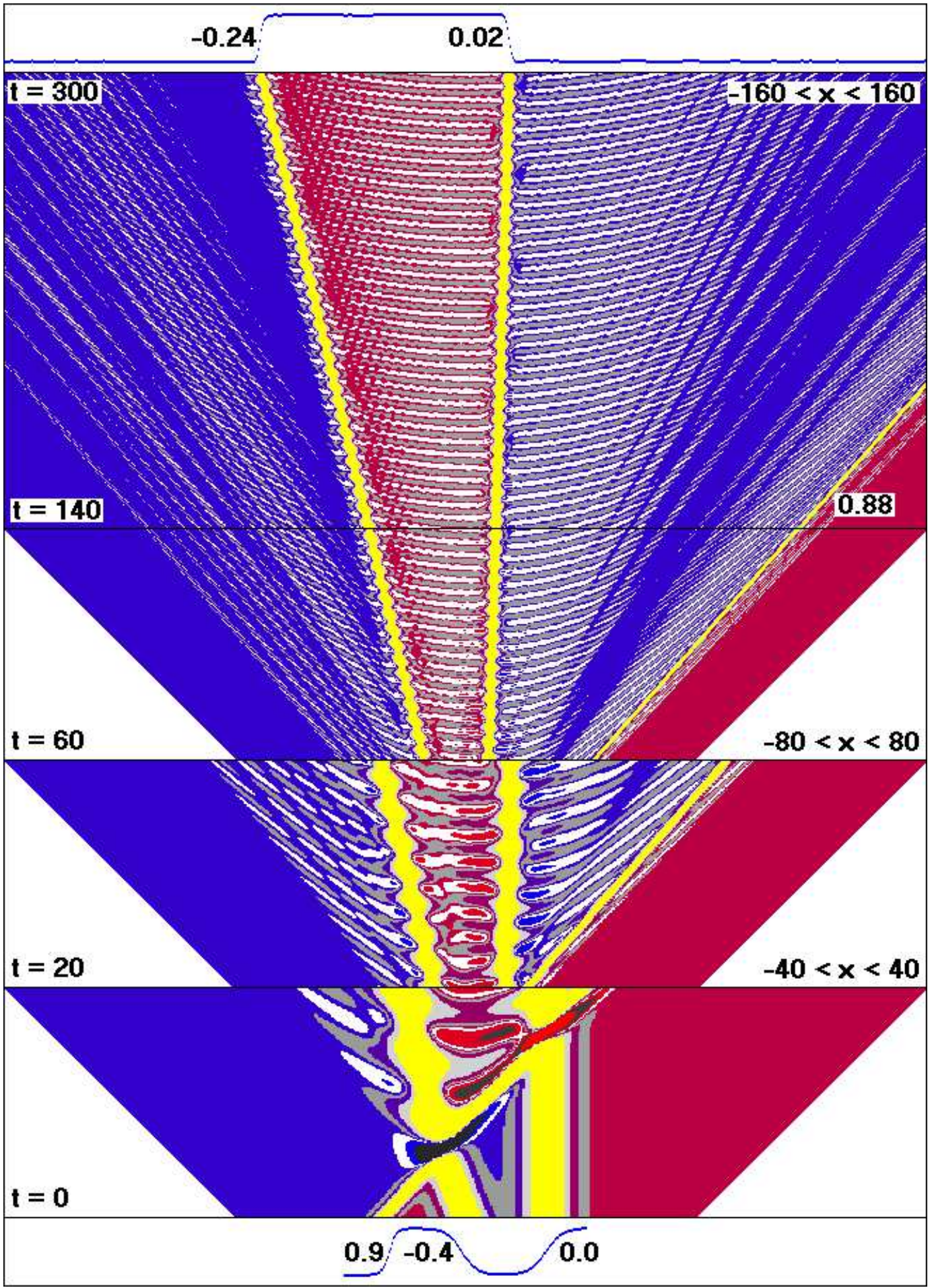}
\caption{Decay to three kinks}
\label{fig-4}
\end{center}
\end{figure}
\noindent {\bf The Lorentz contraction.}
The left kink moves to the left at a low speed $ v_1 \approx $ 0.24, the central kink is almost standing, 
because its velocity  $ v_2 \approx 0.02 $ is very small, and the right kink moves very fast with the speed $ v_3 \approx $ 0.88.
The Lorentz spatial contraction $ \sqrt {1-v_k ^ 2} $ is clearly visible in this picture:
the central kink is wide, the left is a bit narrower, and the right one is quite narrow.
\smallskip \\
{\bf The Einstein time delay.}
Also, the Einstein time delay is also very pronounced. 
Namely, all three kinks oscillate due to  the presence of  nonzero eigenvalue in the linearized equation at the kink. Indeed,
substituting $ \psi (x, t) = S (x) + \varepsilon \varphi (x, t) $ in (\ref {GL}), we get in the first order approximation the linearized equation
\be\la{lineq}
\ddot \varphi (x, t) = \varphi '' (x, t) -2 \varphi (x, t) -V (x) \varphi (x, t),
\ee
 where the potential $V (x) = 3S ^ 2 (x) -3 = - \frac {3} {\cosh ^ 2 {x} / {\sqrt2}}$ decays exponentially for large $ | x | $.
It is a great joy that for this potential the spectrum of the corresponding
\textit {Schr\"odinger operator} $ H: = - \frac {d ^ 2} {dx ^ 2} + 2 + V (x) $ is well known  \cite {Lamb80}.
Namely, the operator $ H $ is non-negative, and its continuous spectrum coincides with $ [2, \infty) $.
It turns out that $ H $ still has a two-point discrete spectrum: the points $ \lambda = 0 $ and $ \lambda = \frac32 $.
Exactly this nonzero eigenvalue  is responsible  for the pulsations that we observe for the central slow kink, with  the frequency
$ \omega_2 \approx \sqrt {\frac32} $ and period $ T_2 \approx 2 \pi / \sqrt {\frac32} \approx 5 $. 
On the other hand, for  fast kinks, the ripples are much slower, i.e., the corresponding period is longer. 
This  time delay agrees numerically with the Lorentz formulas, that confirms the relevance of 
these results
of numerical simulation. 
\smallskip \\
{\bf Dispersion waves.}
An analysis of dispersion waves provides additional confirmation. Namely, the space outside the kinks in Fig. \ref {fig-4} is 
filled with dispersion waves, whose values are very close to $ \pm 1 $, with an accuracy $ 0.01 $.
These waves satisfy with high accuracy,  the linear Klein--Gordon equation, which is obtained by linearization
of the Ginzburg--Landau equation (\ref {GL}) at the stationary solutions $ \psi_\pm \equiv \pm 1 $:
$$
\ddot \varphi (x, t) = \varphi '' (x, t) +2 \varphi (x, t).
$$
The corresponding dispersion relation $ \omega ^ 2 = k ^ 2 + 2 $ determines the group velocities of high-frequency wave packets:
\begin {equation} \label {ev}
\omega '(k)= \frac k {\sqrt {k ^ 2 +2}} = \pm \frac {\sqrt {\omega ^ 2 -2}} \omega.
\end {equation}
These wave packets are clearly visible in Fig. \ref {fig-4} as straight lines, whose propagation speeds converge to $ \pm 1 $.
This convergence is explained by the high-frequency limit  $ \omega '(k)\to \pm1 $  as $ \omega \to \pm \infty $.
For example, for dispersion waves emitted by central kink, the frequencies $ \omega = \pm n \omega_2 \to \pm \infty $
are generated by the polynomial nonlinearity in (\ref {GL}) in accordance with Fig. \ref {fig-2}.

\begin {remark} \label {rrm}
{\rm
These observations of dispersion waves agree with the radiation mechanism  from Remark \ref {phys}.}
\end {remark}
The nonlinearity in (\ref {GL}) is chosen exactly because  of well-known spectrum of the linearized equation (\ref{lineq}). 
In numerical experiments \cite {KMV2004}  also more general nonlinearities of the Ginzburg - Landau type were considered . 
The results were qualitatively the same: for “any” initial data, the solution decays for large times to a sum of kinks and dispersion waves.
Numerically, this is clearly visible, but rigorous justification remains an open problem.

\subsection{Numerical observation of soliton asymptotics}
Besides the kinks the numerical experiments \cite{KMV2004} also resulted in the soliton-type asymptotics (\ref{attN}) 
and adiabatic effective dynamics of type (\ref{effd2}) for complex solutions to the 1D relativistically-invariant nonlinear wave equations
(\ref{NWEn}). Polynomial potentials of the form
\begin{equation}\label{pop2}
  U(\psi)=a|\psi|^{2m}-b|\psi|^{2n},
\end{equation}
were considered with
 $a,b>0$ and $m>n=2,3,\dotsc$. Respectively,
\begin{equation}\label{pop2f}
   F(\psi)=2am|\psi|^{2m-2}\psi-2bn|\psi|^{2n-2}\psi.
\end{equation}
The parameters $a,b,m,n$ were taken as follows:
$$
\begin{array}{rrrlr}
N~~~~~&a~~~~~&~~~~m&~~~~~~b&~~~~n\\
1~~~~~&1~~~~&~~~~3&~~~~~~0.61&~~~~2\\
2~~~~~&10~~~~&~~~~4&~~~~~~2.1&~~~~2\\
3~~~~~&10~~~~&~~~~6&~~~~~~8.75&~~~~5
\end{array}
$$
 Various ``smooth''  initial functions $ \psi (x, 0), \dot\psi (x, 0) $ with  supports on the interval $[-20,20]$ were considered.
The second order finite-difference scheme with $ \Delta x, \Delta t \sim 0.01, 0.001$ was employed. 
In all cases the asymptotics of type (\ref{attN}) were obsereved with the numbers of solitons $0,1,3,5$ for $t> 100$.

\subsection{Adiabatic effective dynamics of relativistic solitons}
\begin{figure}[htbp]
\begin{center}
\includegraphics[width=0.9\columnwidth]{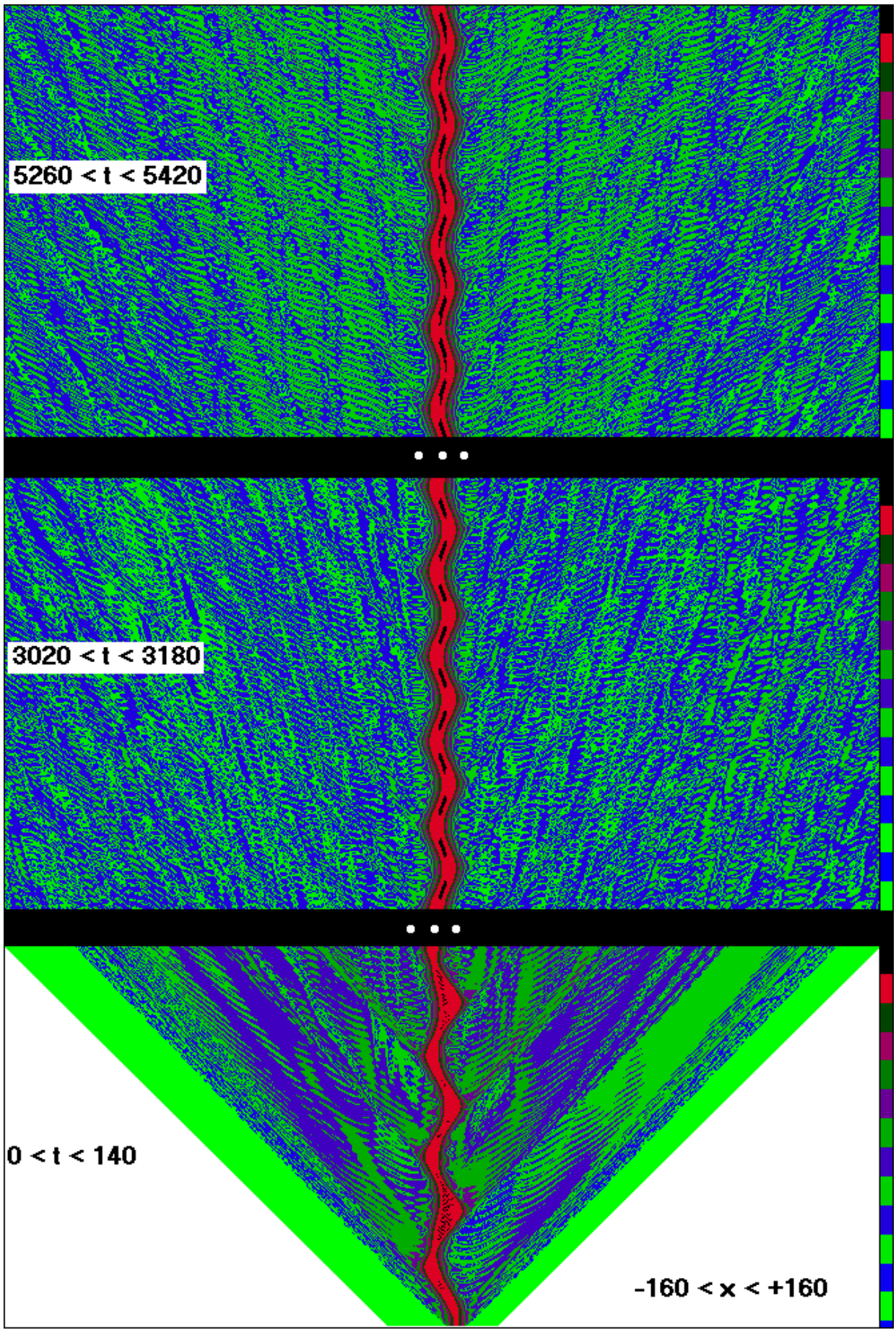}
\caption{Adiabatic effective dynamics of relativistic solitons}
\label{fig-7}
\end{center}
\end{figure}

In the numerical experiments \cite{KMV2004} was also observed the adiabatic effective dynamics of type (\ref{effd2}) for soliton-like solutions
of type (\ref{asol})
to the 1D equations (\ref{NWEn}) with a~slowly varying external potential (\ref{asolV}):
\begin{equation}\label{EQp}
  \ddot \psi(x,t)=\psi''(x,t)-\psi(x,t)+F(\psi(x,t))-V(x)\psi(x,t), \qquad x\in\mathbb R.
\end{equation}
This equation formally is equivalent to the Hamilton system (\ref{hasys}) with the Hamilton functional
\begin{equation}\label{HAMp}
  \cH_V(\psi,\pi)=\int [\frac12 |\pi(x)|^2+ \frac12 |\psi^\prime (x)|^2+U(\psi(x))+\frac12 V(x) |\psi(x)|^2 ]\,dx.
\end{equation}
The soliton-like solutions are of the form (cf. (\ref{asol}))
\begin{equation}\label{swei}
  \psi(x,t)\approx e^{i\Theta(t)}\phi_{\omega(t)}(\gamma_{v(t)}(x-q(t)) ).
\end{equation}
Numerical experiments \ci{KMV2004} qualitatively confirm the adiabatic effective Hamilton dynamics
for the parameters $\Theta,\omega,q$, and $v$, but its rigorous justification is still not established. Figure~\ref{fig-7} represents solutions to equation (\ref{EQp}) with the potential (\ref{pop2}), where $a=10$, $m=6$ and $b=8.75$, $n = 5$.
The potential $V(x)=-0.2\cos (0.31 x)$ and the  initial conditions read
\begin{equation}\label{inco}
   \psi(x,0)= \phi_{\omega_0}(\gamma_{v_0}(x-q_0)), \qquad \dot \psi(x,0)=0,
\end{equation}
where $v_0=0$, $\omega_0=0.6$ and $q_0=5.0$.
Note, that the initial state does not belong to  solitary manifold. An effective width (half-amplitude) of the solitons is in the range $[4.4, 5.6 ]$.
It is quite small when compared with the spatial period of the potential $2\pi/0.31 \sim 20$.
The results of  numerical simulations are shown on Figure~\ref{fig-7}:
\medskip\\
$\bullet$ Blue and green colors represent a dispersion wave with values $|\psi(x,t)| <0.01$, while  red color represents a soliton with values $|\psi(x,t)|\in [0.4, 0.8]$.
\smallskip\\
$\bullet$ The soliton trajectory (``red snake'') corresponds to oscillations of a classical particle in the potential $\!V(x)\!$.
\smallskip\\
$\bullet$ For $0\!<t<\!140$, the solution is rather distant from the solitary manifold, and the radiation is intense.
\smallskip\\
$\bullet$ For $3020\!<\!t\!< \!3180 $, the solution approaches the solitary manifold, and the radiation
weakens. The os\-cillation amplitude of the soliton is almost unchanged for a~long time, confirming  Hamilton type of effective dynamics.
\smallskip\\
$\bullet$ However, for $5260\! <\!t\!<\!5420$, the amplitude of the soliton oscillation is halved.
This suggests that at a~large time scale the deviation from the Hamilton effective dynamics becomes essential.
Consequently, the effective dynamics gives a~good approximation only on an adiabatic time scale of type $t\sim \varepsilon^{-1}$.
\smallskip\\
$\bullet$ The deviation from the Hamilton effective dynamics is due to radiation, which plays the role of dissipation.
\smallskip\\
$\bullet$
The radiation is realized as  dispersion waves which bring the energy to the infinity.
The dispersion waves combine into uniformly moving wave packets with discrete set of group velocities, as in Fig.~\ref{fig-4}.
The magnitude of the solution is of order $\sim 1$ on the trajectory of the soliton, while the values of the dispersion waves
is less than $0.01 $ for $t>200$, so that their energy density does not exceed $0.0001$.
The amplitude of the dispersion waves decays at large times.
\smallskip\\
$\bullet$ In the limit $t\to\pm\infty$, the soliton should converge to a static position corresponding to a local minimum
of the potential  $V(x)$. However, the numerical observation of this ``ultimate stage'' is hopeless since the rate of the convergence
decays with the decay of the radiation.

\appendix

\setcounter{section}{0}
\setcounter{equation}{0}
\protect\renewcommand{\thesection}{\Alph{section}}
\protect\renewcommand{\theequation}{\thesection.\arabic{equation}}
\protect\renewcommand{\thesubsection}{\thesection.\arabic{subsection}}
\protect\renewcommand{\thetheorem}{\Alph{section}.\arabic{theorem}}

\section{Attractors and Quantum Mechanics}

The foregoing results on attractors of  nonlinear Hamilton equations were suggested by fundamental postulates of quantum theory,
primarily Bohr's postulate on transitions between quantum stationary orbits. 
Namely, in 1913 Bohr suggested the following two postulates which gives the ``Columbus''  solution of the problem of stability and radiation of atoms and molecules \cite{Bohr1913}:
$$
\left|
\ba{c}
\mbox{\it {\bf B1.} Atoms and molecules are permanently on some stationary orbits $|E_n \rangle$ with}
\smallskip\\
\mbox{\it energies $E_n$, and sometimes make transitions between the orbits,}
\medskip\\
|E_n\rangle\mapsto |E_{n'}\rangle.
\ea
\right|
$$

$$
\left|
\ba{c}
\mbox{\it {\bf B2.} Such transition is followed by  radiation of an electromagnetic wave of frequency}
\smallskip\\
\mbox{\it $\om_{nn'}=\om_{n'}-\om_n$,}
\qquad
\mbox{\it where \quad $\om_k=E_k/\hbar$}
\ea
\right|
$$
Both these postulates should become theorems in  discovered later
quantum theory of Schr\"odinger and Heisenberg. However, this did not happen, and both postulates are still actively used in quantum theory. 
This lack of theoretical clarity hinders the progress in the theory (e.g., in superconductivity and  in nuclear reactions),
and in
numerical simulation of many engineering processes (e.g., of laser radiation and quantum amplifiers) since a computer can solve dynamical equations but cannot  take  postulates into account.
\smallskip

\subsection{On dynamical interpretation of quantum jumps}
The simplest dynamic interpretation of the postulate {\bf B1} is the attraction to stationary orbits (\ref{atU}) for any finite energy quantum trajectory $\psi(t)$.
This means that  stationary orbits form a global attractor of the corresponding quantum dynamics.
However, this attraction contradicts the Schr\"odinger linear equation due to the superposition principle.
Thus, Bohr's transitions {\bf B1} in the linear theory do not exist.
\medskip

It is natural to suggest that the attraction (\ref{atU}) holds for a~nonlinear modification of the linear Schr\"odinger theory. On the other hand,
it turns out that even the original Schr\"odinger theory is nonlinear, because it involves interaction with the Maxwell field.
The corresponding nonlinear Maxwell--Schr\"odinger system is contained essentially in the first Schr\"odinger's articles \ci{Schr1926}:
\begin{equation} \label{SM}
\!\!\left\{\!\!
\begin{array}{rcl}
i\dot\psi(x,t)\!\!&\!\!=\!\!&\!\!\displaystyle\frac12[-i\nabla +\bA(x,t)+\bA^{\rm ext}(x,t)]^2\psi+[A_0(x,t)+A_0^{\rm ext}(x)]\psi
		\\\\
 \Box A_{\alpha}(x,t)\!\!&\!\!=\!\!&\!\!4\pi J_{\alpha}(x,t),\qquad \alpha=0,1,2,3
\end{array}\!\!\right|, \qquad x\in\mathbb R^3,
\end{equation}
where  the units are chosen so that $\hbar=e=m=c=1$.
Maxwell's equations are written here in the 4-dimensional form, where $A=(A_0,\bA)=(A_0,A_1, A_2,$ $A_3)$
denotes  4-dimensional potential of the Maxwell field with the Lorentz gauge 
$\dot A_0+\nabla\cdot\bA=0$.  Further, $A^{\rm ext} = (A_0^{\rm ext}, \bA^{\rm ext})$ is an external 4-potential, 
and $J=(\rho,j_1, j_2 , j_3)$ is the 4-dimensional current.
To make these equations a~closed system, we must also express the density of charges and currents via the wave function:
\begin{equation} \label{roj}
J_0(x,t)=|\psi(x,t)|^2; J_k(x,t)=[(-i\nabla_k+A_k(x,t)+A_k^{\rm ext}(x,t))\psi(x,t)]\cdot\psi(x,t),
\end{equation}
where $k=1,2,3$, and ``$\cdot$'' denotes the scalar product of two-dimensional real vectors corresponding to complex numbers.
In particular, these expressions satisfy the continuity equation $\dot\rho+\dv j=0$ for any
solution of the Schr\"odinger equation with arbitrary potentials \cite[Section 3.4]{K2013}.
\medskip

System (\ref{SM}) is nonlinear in $(\psi, A)$ although the Schr\"odinger equation is formally linear in $\psi$. Now the question arises:
what should be ``stationary orbits'' for the nonlinear hyperbolic system (\ref{SM})?
It is natural to suggest that these are  solutions of type
\begin{equation}\label{ss}
(\psi(x)e^{-i\omega t},~A(x))
\end{equation}
in the case of  static external potentials $A^{\rm ext}(x,t)=A^{\rm ext}(x)$.

Indeed, in this case  functions (\ref{ss}) give stationary distributions of charges and currents (\ref{roj}).
Moreover, these functions are the  trajectories of one-parameter subgroups of the
symmetry group $U(1)$ of the system (\ref{SM}). Namely, for any solution
$(\psi(x,t), A(x,t))$ and $\theta\in\R$ the functions
\begin{equation}\label{Ut}
U_\theta(\psi(x,t),~A(x,t)):=(\psi(x,t)e^{i\theta},~A(x,t))
\end{equation}
are also solutions.
The same remarks apply to the Maxwell---Dirac system introduced by Dirac in 1927:
\begin{equation}\label{DM}
\left\{
\begin{aligned}
& \sum_{\alpha=0}^3 \gamma^\alpha[i\displaystyle \nabla_\alpha -A_\alpha(x,t)-A_\alpha^{\rm ext}(x,t)]\psi(x,t) =m \psi(x,t)\\
& \Box\, A_\alpha(x,t) =J_\alpha(x,t) := \overline{\psi(x,t)}\gamma^0\gamma_\alpha\psi(x,t), \quad \alpha=0,\dotsc,3\\
\end{aligned}\right| \quad x\in\mathbb R^3,
\end{equation}
where $\nabla_0:=\partial_t$.
Thus, Bohr's transitions {\bf B1} for the systems (\ref{SM}) and (\ref{DM}) with a static external potential $A^{\rm ext}(x,t)=A^{\rm ext}(x)$
can be interpreted as the long-time asymptotics
\begin{equation}\label{as}
(\psi(x,t),~A(x,t))\sim (\psi_\pm(x)e^{-i\omega_\pm t},~A_\pm(x,t)), \qquad t\to\pm\infty
\end{equation}
for every finite energy solution, where the asymptotics hold in  local energy norms.
The maps $U_\theta$ form a group isomorphic to $U(1)$, and the functions (\ref{ss}) are the trajectories of its one-parametric subgroups.
Hence, the asymptotics (\ref{as}) correspond to our general conjecture (\ref{at10}) with the symmetry group $G=U(1)$.
\smallskip

Furthermore, in the case of zero external potentials these systems are translation-invariant.
Respectively, for their solutions one should expect the soliton asymptotics of type \eqref{attN} 
 in global energy norms
as $t\to\pm\infty$:
\begin{eqnarray}\label{SA}
\psi(x,t)&\sim&\displaystyle\sum\limits_{k} \psi_\pm^k(x-v^k_\pm t) e^{i\Phi_\pm^k(x,t)}+\varphi_\pm(x,t),
\\
 A(x,t)&\sim&\displaystyle\sum\limits_{k} A_\pm^k(x-v^k_\pm t)+A_\pm(x,t).
\end{eqnarray}
Here $\Phi_\pm^k(x,t) $ are suitable phase functions, and each soliton $(\psi_\pm^k(x-v^k_\pm t) e^{i\Phi_\pm^k(x,t)}, \,A_\pm^k(x-v^k_\pm t))$
is a solution to the corresponding nonlinear system, 
while $\varphi_\pm(x,t)$ and $A_\pm(x,t)$ represent some dispersion waves which are solutions to the free Schr\"odinger 
and Maxwell equations respectively. The existence of the solitons to the 
Maxwell--Schr\"odinger and 
Maxwell--Dirac systems was established in \cite{CG2004} and \ci{EGS} respectively.
\medskip

The asymptotics (\ref{as}) and (\ref{SA}) are not proved yet for the Maxwell--Schr\"odinger and Maxwell--Dirac equations
(\ref{SM}) and (\ref{DM}). One could expect that these asymptotics should follow by suitable modification of the arguments from Section \ref{s5}.
Namely, let the time spectrum of an omega-limit trajectory
$\psi(x,t)$ contain at least two different frequencies $\omega_1\ne\omega_2$:
for example, $\psi(x,t)=\psi_1(x)e^{-i\omega_1 t}+\psi_2(x)e^{-i\omega_2 t}$.
Then the currents $J_\alpha(x,t)$ in the systems (\ref{SM}) and (\ref{DM}) contains the terms with the harmonics 
$e^{i n\Delta t}$ with $n\in\Z$,  where $\Delta:=\omega_1-\omega_2\ne 0$.
Thus the nonlinearity inflates the spectrum as in $U(1)$-invariant equations, considered in Section \ref{s5}.

In it own turn, these  harmonics $e^{i n\Delta t}$ with $n\ne 0$ on the right hand side of the Maxwell equations induce the radiation of electromagnetic waves
with the frequencies $n \Delta$ according to the limiting amplitude principle (\ref{lap}) since the continuous spectrum 
of the Maxwell generator is  $\R\setminus 0$.
Finally, this radiation brings the energy to infinity which is impossible for omega-limit trajectories.
This contradiction suggests the validity of the one-frequency asymptotics (\ref{as}).

Methods of Section \ref{s5} give 
a rigorous justification of 
similar arguments for $U(1)$-invariant equations (\ref{KG1}) and (\ref{KGN})--(\ref{Dn}).
However,  a rigorous justification for the systems (\ref{SM}) and (\ref{DM}) is still an open problem.

\subsection{Bohr's postulates  by perturbation theory}
The remarkable success of the Schr\"odinger theory was the explanation
of the Bohr' postulates
in the case of {\it static external potentials}
by {\it perturbation theory} applied to the 
{\it coupled Maxwell--Scr\"odinger equations} (\ref {SM}).
Namely, as a first approximation, the time-dependent 
fields $ \bA (x, t) $ and $ A^0 (x, t) $ in the Schr\"odinger equation of the system
 (\ref {SM})  can be neglected:
 \be \la {Sc0}
 i \hbar \dot \psi (x, t) =H\psi (x, t):=
 \fr 1 {2m} [- i \hbar \na- \ds \frac ec \bA_\ext (x)] ^ 2 \psi (x, t) + eA_\ext^0 (x) \psi (x, t),
\ee
For 
``sufficiently good''
external potentials and initial conditions
any finite energy solution
can be expanded in  eigenfunctions
 \be \la {Sexp}
 \psi (x, t) = \sum_n C_n \psi_n (x) e ^ {- i \om_n t} + \psi_c (x, t), \qquad
 \psi_c (x, t) =
 \int C (\om) e ^ {- i \om t} d \om,
\ee
where integration is performed over the continuous spectrum of the Schr\"odinger operator $ H $, and the integral decays
as $ t \to \infty $
in each bounded domain $ | x | \le R $, see, for example, \ci [Theorem 21.1]{KopK2012}.
The
substitution of this expansion into the expression for currents (\ref {roj}) gives the series
 \be \la {jexp}
J (x, t) = \sum_ {nn '} J_ {nn'} (x) e ^ {- i \om_ {nn '} t} + c.c. + J_c (x, t),
\ee
where $ J_c (x, t) $ has a continuous frequency spectrum. 
Therefore, the currents on the right hand side of 
the
Maxwell equation from (\ref {SM}) contains,
besides the continuous spectrum, only
discrete frequencies $ \om_ {nn '} $.
Hence, the discrete spectrum of the corresponding 
Maxwell field also contains only these frequencies $ \om_ {nn '} $.
This proves the Bohr rule {\bf B2} 
{\it in the first order of perturbation theory}, since this calculation ignores the inverse effect of radiation onto the atom.

Moreover, these arguments also clarify  the asymptotics (\ref {as}). Namely, 
the currents (\ref{jexp}) on the right hand of 
the Maxwell equation from (\ref {SM}) produce the radiation when 
nonzero frequencies  $ \om_ {nn '} $ are present. However, this radiation
cannot last forever since the total energy is finite. Hence, in the long-time 
limit should remain only $ \om_ {nn '}=0 $ which means exactly one-frequency asymptotics (\ref{as}) and the limiting  stationary Maxwell field.

\subsection{Conclusion}

The  discussion above suggests that Bohr's postulates
cannot be explained by linear Schr\"odinger equation alone but 
admit a hypothetical explanation in the framework
of the coupled Maxwell--Schr\"odinger equation.

This fact was the cause
 of heated discussions by Einstein
with Bohr and other physicists \ci {B1949}.
In \ci{Heis1961,Heis1966}, Heisenberg began developing 
a nonlinear theory of elementary particles.


\bigskip

\noindent {\it  
Institute of Information Transmission Problems RAS, Moscow
 127994,  Russia.}\\
\noindent {\it e-mail}: alexander.komech@univie.ac.at,\qquad elena.kopylova@univie.ac.at
\vskip 0.5cm

\end{document}